\newcommand\SANSCOMMENTAIRE[1]{}
\newcommand\shorter[2][]{}
\newcommand\shortermcu[1]{}
\newcommand\manondufutur[2][]{\SANSCOMMENTAIRE{\todo[inline,color=pink!80!white,caption={2do},#1]{\begin{minipage}{\textwidth-4pt}
Pour la Manon du Futur:			#2\end{minipage}}}}
\newcommand{\olivier}[2][]{\SANSCOMMENTAIRE{\todo[inline,color=blue!40,caption={2do}, #1]{\begin{minipage}{\textwidth-4pt}
Olivier:			#2\end{minipage}}}}
\newcommand{\manon}[2][]{\SANSCOMMENTAIRE{\todo[inline,color=violet!20!white,caption={2do},#1]{\begin{minipage}{\textwidth-4pt}
Manon:			#2\end{minipage}}}}
\newcommand{\olivierpourmanon}[2][]{\SANSCOMMENTAIRE{\todo[inline,color=blue!40,caption={2do}, #1]{\begin{minipage}{\textwidth-4pt}
Olivier $\to$ Manon:			#2\end{minipage}}}}
\newcommand{\olivierpourlecteur}[2][]{\SANSCOMMENTAIRE{\todo[inline,color=blue!40,caption={2do}, #1]{\begin{minipage}{\textwidth-4pt}
Olivier $\to$ Arnaud:			#2\end{minipage}}}}
\newcommand\Lesfonctionsquoi{\mathcal{F}}
\newcommand\manonclass{\linearderivlength^\bullet}
\newcommand\manonclasslightidealsigmoid{\linearderivlength^\oslash}
\newcommand\MANONlim{ELim}
\newcommand\MANONlimd{E_{2}Lim}
\newcommand\manonclasslightidealsigmoidlim{\bar{\manonclasslightidealsigmoid}}
\newcommand\polynomialb{ $\bar{\fonction{cond}}$-polynomial}
\newcommand{\signb}[1]{\bar{\fonction{cond}}(#1)}
\newcommand{\signbb}[1]{\bar{\fonction{cond}}\left(#1\right)}
\newcommand\unaire[1]{with respect to the value of $#1$}
\newcommand\Encode{\textit{Encode}}
\newcommand\EncodeMul{\textit{EncodeMul}}
\newcommand\Decode{\textit{Decode}}
\newcommand\DP{\operatorname{DP}}
\newcommand\If{\operatorname{If}}
\newcommand\send{\operatorname{send}}
\newcommand\sendsymbol{\mapsto}
\newcommand\motnouv[1]{\emph{#1}}
\newcommand\MYVEC[1]{\vec{#1}}
\newcommand\THEPAPIERS{\cite{MFCS2019,MFCSJournalProject}}
\newcommand\THEPAPIERSPLUS{\cite{MFCS2019,MFCSJournalProject,BlancBournezMCU22vantardise}}
\newcommand\dyadic{\mathbb{D}}
\newcommand\vectorl[1]{{\mathbf#1}}
\newcommand\vn{\vectorl{n}}
 \newcommand\R{\mathbb{R}}
\newcommand\Q{\mathbb{Q}}
\newcommand\N{\mathbb{N}}
 \newcommand\Z{\mathbb{Z}}
 \newcommand\RR{\R}
\newcommand\Greg{Grzegorczyk}
\newcommand\BRN{{\rm BRN}}
\newcommand\ODE{{\rm ODE}}
\newcommand\LI{{\rm LI}}
\newcommand\Elem{\mathcal{E}}
\newcommand\Gregn{\mathcal{E}_n}
\newcommand\PR{\mathcal{P}\mathcal{R}}
\newcommand{\fonction}[1]{\textrm{#1}}
\newcommand\projection[2]{\mathbf{\pi}_{#1}^{#2}}
\newcommand{\sucs}{\mathbf{s}}
\newcommand\plus{\mathbf{+}}
\newcommand\minus{\mathbf{-}}
\newcommand\gE{\mathbf{E}}
\newcommand\dint[4]{\int_{#1}^{#2}{#3}{\delta #4}}
\newcommand\fallingexp[1]{\overline{2}^{#1}}
\newcommand{\succun}[1]{\mathbf{1}({#1})}
\newcommand{\succzero}[1]{\mathbf{0}({#1})}
\newcommand{\zero}{\mathbf{0}}
\newcommand{\sign}[1]{\fonction{sg}(#1)}
\newcommand{\cond}[3]{\fonction{if}(#1,#2,#3)}
\newcommand{\tu}[1]{\mathbf{#1}}
\newcommand{\cp}[1]{\mathbf{#1}}
\newcommand{\Ptime}{\cp{PTIME}} 
\newcommand{\classP}{\cp{P}}      
\newcommand{\NP}{\cp{NP}}
\newcommand{\FPtime}{\cp{FPTIME}}
\newcommand{\FP}{\cp{FP}}
\newcommand{\Pspace}{\cp{PSPACE}}
\newcommand{\FPspaceN}{\cp{FPSPACE}}
\newcommand{\suffix}{\textsf{suffix}}
\newcommand{\approximate}[2][none]{%
 \tikz[baseline=(a.base)]\node[draw,rounded corners, inner sep=2pt, outer sep=0pt,fill=#1](a){\ensuremath #2\strut};
 }
\newcommand\danielsigmaold[1]{\sigma(#1)}
\newcommand\danielsigmanameold{\sigma}
\newcommand\danielsigma[1]{\mbox{\approximate{$#1$}}}
\newcommand\danielsigmaeold[2]{\sigma^{#1}{(#2)} }
\newcommand\danielsigmae[2]{\danielsigma{#2}_{#1}}
\newcommand\ltroisname{\approximate{\in\{0,1,2\} }}
\newcommand\ltrois[2]{\ltroisname_{#2}\left(#1\right)}
\newcommand\ldeuxname{\approximate{\in\{0,1\} }}
\newcommand\ldeux[2]{\ldeuxname_{#2}\left(#1\right)}
\renewcommand\mod{\operatorname{mod}}
\newcommand\sig{\operatorname{sig}}
\renewcommand\bar[1]{\overline{#1}}
\newcommand{\dderiv}[2]{\frac{\partial #1}{\partial #2}}
\newcommand{\dderivL}[1]{\frac{\partial #1}{\partial \lengt}}
\newcommand{\dderivl}[1]{\frac{\partial #1}{\partial \ell}}
\newcommand{\lengt}{\mathcal{L}}
 \newcommand\lengthnotation{\ell}
\newcommand{\length}[1]{\mathrm{\lengthnotation}(#1)}
\newcommand{\degre}[1]{\mathrm{deg}(#1)}
\newcommand{\derivlength}{\mathbb{DL}}
\newcommand{\linearderivlength}{\mathbb{LDL}}
\newcommand\base{4}
\newcommand\symboleun{1}
\newcommand\symboledeux{3}
\newcommand\encodageconfiguration{\gamma_{config}}
\newcommand\encodagemot{\gamma_{word}}
\newcommand\Image{\mathcal{I}}
\newtheorem{theorem}{Theorem}
\newtheorem{proposition}{Proposition}
\newtheorem{lemma}{Lemma}
\newtheorem{corollary}{Corollary}
\newtheorem{definition}{Definition}
\newtheorem{remark}{NB}
\title{Polynomial time computable functions over the
	reals characterized using discrete ordinary
	differential equations}
\author{Manon Blanc \footnote{This research was [partially] supported by Labex DigiCosme (project ANR11LABEX0045DIGICOSME) operated by ANR as part of the program « Investissement d'Avenir» Idex ParisSaclay (ANR11IDEX000302).} \\
	LIX, France\\
	manon.blanc@lix.polytechnique.fr\\	
	\and Olivier Bournez \\
	LIX, France \\
	olivier.bournez@lix.polytechnique.fr}
\date{ }
\begin{document}

\maketitle

\begin{abstract}
	The class of functions from the integers to the integers computable in polynomial time has been characterized recently using discrete ordinary differential equations (ODE), also known as finite differences. In the framework of 
	ordinary differential equations, it is very natural to try to extend the approach to classes of functions over the reals, and not only over the integers. Recently, an extension of a previous  characterization was obtained for functions from the integers to the reals, but the method used in the proof,  based on the existence of a continuous function from the integers to a suitable discrete set of reals, cannot  extend to functions from the reals to the reals, as such a function cannot exist for clear topological reasons.
	
	In this article, we prove that it is indeed possible to provide an elegant and simple algebraic characterization of functions from the reals to the reals: we provide a characterization of such functions as the smallest class of functions that contains some basic functions, and that is closed by composition, linear length ODEs, and a natural effective limit schema.
	This is obtained using an alternative proof technique based on the construction of specific suitable functions defined recursively, and  a barycentric method. 
	
	Furthermore, we also extend previous characterizations in several directions: 
	First, we prove that there is no need of multiplication. We prove a normal form theorem, with a nice side effect related to formal neural networks. Indeed, given some fixed error and some polynomial time $t(n)$, our settings produce effectively some neural network that computes the function over its domain with the given precision, for any $t(n)$-polynomial time computable  function $f$.
	
	As far as we know, this is in particular the first time that polynomial time computable functions over the reals are characterized in an algebraic way, in a so simple discrete manner. Furthermore, we believe that no relation between polynomial time computable functions over the reals (in the sense of computable analysis) and formal neural networks has been obtained before. 
	We believe that these characterizations share the above mentioned benefits of these approaches. In particular, compared to already existing characterizations of polynomial time (over the integers or reals), in the so-called field of implicit complexity, this is obtained without any explicit bound on the growth of function, and this is based on a very natural consideration in the framework of differential equations, namely assuming \emph{linear} ordinary differential equations. 
	
	This also points out the possibility of using tools such as changes of variables to capture computability and complexity measures, or as a tool for programming, even over a discrete space. 
	We also believe that these results shed some lights on recent characterizations of complexity classes based on (classical continuous) ordinary differential equations, providing possibly really simpler proof techniques and explanations. 
\end{abstract}



\newpage

\section{Introduction}

Ordinary differential equations can be considered as a universal language for modeling various phenomena in experimental sciences. They have been studied intensively in the last centuries, and their mathematical theory is well-understood: see e.g.  \cite{Arn78,BR89,CL55}. A series of recent articles, initially motivated by understanding analog models of computations has established some characterizations of classical discrete complexity classes from computability theory using ordinary differential equations. In particular, it has been proved that the length of trajectories provides a robust notion of time complexity that corresponds to classical time complexity for models such as Turing machines \cite{ICALP2016ventardise,JournalACM2017}: See \cite{DBLP:journals/corr/BournezGP16} for most recent survey.

Unfortunately, while the above mentioned results are easy to state, their proofs are rather highly technical and mixing considerations about approximations, control of errors, and various constructions to emulate in a continuous fashion some discrete processes. There have been some recent attempts to go to simpler constructions in order to simplify their programming \cite{CIE22}, since these constructions have recently led to a solution of various open problems, with very visible awarded outcomes:  this includes the proof of the existence of a universal ordinary differential equation \cite{ICALP2017}, the proof of the Turing completeness of chemical reactions \cite{CMSB17vantardise}, or hardness of problems related to dynamical systems \cite{GracaZhongHandbook}. 

Initially motivated by trying to go to simpler proofs, several authors have considered their discrete counterparts, that are called discrete ODEs, also known as difference equations \THEPAPIERS. The basic principle is, for a function $\tu f(x)$, to consider its discrete derivative defined as $\Delta \tu f(x)= \tu f(x+1)-\tu
f(x)$. As in these articles, we  intentionally  also write $\tu f^\prime(x)$ for
$\Delta \tu f(x)$ to help to understand
statements with respect to their classical continuous counterparts.  It turns out that this provided some algebraic characterizations of complexity classes, but that a key difference between the two frameworks is that there is no simple expression for the derivative of the composition of functions in the discrete settings, and hence that actually both approaches have some common aspects, but are at the end not yet directly connected. 

Notice that the theory of discrete ordinary differential equations is widely used in some contexts such as function approximation \cite{gelfand1963calcul} and  in \emph{discrete calculus} 
\cite{graham1989concrete,gleich2005finite,izadi2009discrete,urldiscretecalculuslau} for  combinatorial analysis, but is rather unknown. Actually, the  similarities between discrete and continuous statements have  been historically observed, under the terminology of  \emph{umbral} or \emph{symbolic calculus} as early as in the $19$th century, even if not yet fully understood, and often rediscovered in many fields, with various names.


%

Following \cite{MFCS2019}, while the underlying computational content of finite differences theory is clear and has been pointed out many times, no fundamental connections with algorithms and complexity had been formally established before  \THEPAPIERS, where it was proved that many complexity and computability classes  can  be characterized algebraically using discrete ODEs. 



In the context of algebraic classes of functions, a classical notation is the following: Call \emph{operation}  an operator that takes finitely many functions, and returns some new function defined from them. Then $[f_{1}, f_{2}, \dots, f_{k}; op_{1}, op_{2},\dots,op_{\ell}]$ denotes the smallest set of functions containing functions $f_{1}, f_{2}, \dots, f_{k}$ that is closed under operations $op_{1}$, $op_{2}$, \dots $op_{\ell}$. 
Call \emph{discrete function} a function of type $ f: S_{1} \times \dots \times S_{d} \to S'_{1} \times \dots S'_{d'}$, where each $S_{i},S'_{i}$ is either $\N$ or $\Z$.
Write $\FPtime$ for the class of functions computable in polynomial time. 
\olivier{Besoin d'autres classes de complexité?}
A  main result of \THEPAPIERS{} is the following ($\linearderivlength$ stands for linear derivation on length):

\begin{theorem}[\cite{MFCS2019}] \label{th:ptime characterization 2}
	For discrete functions, we have 
	$\linearderivlength= \FPtime$
	where $\linearderivlength =$ $ [\mathbf{0},\mathbf{1},\projection{i}{k}, \length{x}, \plus, \minus, \times, \sign{x} \ ; composition, linear~length~ODE].$
\end{theorem}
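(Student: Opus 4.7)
The plan is to prove the two inclusions separately. For $\linearderivlength \subseteq \FPtime$, I would argue by structural induction on the construction of a function in the class. Each primitive is polynomial-time computable: $\length{x}$ just returns the bit-length, $\sign{x}$ is immediate, $\plus,\minus,\times$ are standard arithmetic, and the projections and constants are trivial. Composition preserves polynomial time once one controls the bit-length of intermediate outputs. The central case is closure under linear length ODE: a function defined by $\tu f(\ell+1,\vec y) = A(\ell,\vec y)\cdot \tu f(\ell,\vec y) + B(\ell,\vec y)$, iterated up to $\ell = \length{z}$ for some input $z$, is polynomial-time because (i) the number of iterations $\length{z}$ is logarithmic in $z$ and hence polynomial in the input size, and (ii) linearity of the scheme plus inductive polynomial bounds on the bit-lengths of $A$ and $B$ prevent exponential blow-up of $\tu f(\ell,\vec y)$---which would occur for a general polynomial ODE and is precisely the reason for the linearity restriction.

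For the reverse inclusion $\FPtime \subseteq \linearderivlength$, I would simulate a polynomial-time Turing machine $M$ of time bound $p(n)$ by iterating its transition function inside a linear length ODE. The first task is to populate $\linearderivlength$ with a library of auxiliary functions: the base-$\base$ power $\fonction{pow}(\base,n)$ via a direct linear length recurrence; padding $\fonction{pad}$; bit manipulation functions such as right-bit $\fonction{rb}$ and left-bit $\fonction{lb}$, together with the conditional $\If$ derived from $\sign{x}$ and arithmetic; and encoding and decoding routines $\Encode, \Decode$ that convert between configurations of $M$ and integers. Using these, the single-step transition $\nextI$ of $M$ is expressed as a sum over the finitely many $(q,\sigma)$ pairs, each term being a definable $\{0,1\}$-valued indicator of ``$M$ in state $q$ reading $\sigma$'' multiplied by the corresponding local update of the encoded configuration. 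Finally, iterating $\nextI$ exactly $p(|\vec x|)$ times is achieved via the linear length ODE schema applied to a driver value of length $p(|\vec x|)$ produced by padding, after which $\Decode$ extracts the output.

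The principal obstacle is arranging the iteration of $\nextI$ to fit the linear shape required by the ODE schema, since a Turing machine transition is intrinsically case-based and does not look affine in the encoded configuration at first glance. The key idea is to design the encoding so that all state-and-symbol case splits become multiplications by $\{0,1\}$-valued indicators, and the overall step $\nextI(c) - c$ becomes a sum of terms whose polynomial dependence on $c$ can be absorbed into the definable coefficients $A$ and $B$ of the ODE template. A secondary difficulty lies in the bootstrap discipline: the auxiliary functions themselves must be obtained by \emph{linear} length ODEs, so one must design the construction in a careful layered order, beginning with the simplest recurrences (sums, powers, iterated choice) and building up to the more expressive operators needed to implement $\nextI$.
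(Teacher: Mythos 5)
Your outline is essentially the proof strategy of the original result: note, though, that this paper does not reprove the theorem but imports it from \cite{MFCS2019}, only recalling the supporting machinery (the explicit solution of linear discrete ODEs and the ``fundamental observation'' that essentially constant coefficients keep the bit-length of the solution polynomially bounded over the $\ell(x)$ change of variable) and reusing the analogous Turing-machine simulation over $\Image$ in Section \ref{fptimedansmanonclass}. Both your directions match that route — structural induction plus the polynomial bit-length bound for the inclusion $\linearderivlength\subseteq\FPtime$ (with the minor caveat that $\tu A,\tu B$ may also depend on $\tu f$, but only essentially constantly, through $\sign{}$), and simulation of a $p(n)$-time machine by a $\sign{}$-based, essentially linear step function iterated along the length of a driver of the form $2^{p(\ell(x))}$ for the converse — so the proposal is sound and in line with the cited proof.
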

That is to say,  $\linearderivlength$ (and hence $\FPtime$ for  functions over the integers) is the smallest class of  functions
that contains   the constant functions $\mathbf{0}$ and $\mathbf{1}$, the projections
$\projection{i}{k}$,  the length function  $\length{x}$ (that maps an integer to the length of its binary representation), 
the addition function $x \plus y$, the subtraction function $x \minus y$, the multiplication function $x\times y$ (that we will also often denote $x\cdot y$), the sign function $\sign{x}$
and that is closed under composition (when defined)  and linear length-ODE
scheme: The linear length-ODE
scheme
basically (a formal definition is provided in  Definition \ref{def:linear lengt ODE})   corresponds to defining functions from linear ODEs with respect to derivation along the length of the argument,
that is to say of the form $\dderivl{\tu f(x,\tu y)} = 	\tu A [\tu f(x,\tu y), 
x,\tu y]  \cdot \tu f(x,\tu y) 
+ \tu B [\tu f(x,\tu y), 
x,\tu y ].
$
Here, in the above description, we use the notation $\dderivl{\tu f(x,\tu y)}$, which corresponds to the derivation of $\tu f$ along the length function:  Given some function $\lengt:\N^{p+1} \rightarrow \Z$, and in particular for the case where $\lengt(x,\tu y)=\ell(x)$, 
\begin{equation}\label{lode}
\dderivL{\tu f(x,\tu y)}= \dderiv{\tu f(x,\tu y)}{\lengt(x,\tu
	y)} = \tu h(\tu f(x,\tu y),x,\tu y)
\end{equation}
is a formal synonym for
$ \tu f(x+1,\tu y)= \tu f(x,\tu y) + (\lengt(x+1,\tu y)-\lengt(x,\tu y)) \cdot
\tu h(\tu f(x,\tu y),x,\tu y).$

\begin{remark}
	This concept, introduced in  \THEPAPIERS, is motivated by the fact that the latter expression is similar to
	classical formula for classical continuous ODEs:
	$$\frac{\delta f(x,\tu y )}{\delta x} = \frac{\delta
		\lengt (x,\tu y) }{\delta x} \cdot \frac{\delta f(x,\tu
		y)}{\delta \lengt(x, \tu y)},$$
	and hence this is similar to a change of variable. Consequently, a linear length-ODE is basically a linear ODE over a variable $t$, once the change of variable $t=\ell(x)$ is done. 
\end{remark}

%
%
%

Call \emph{continuous function} a function of type $ f: S_{1} \times \dots \times S_{d} \to S'_{1} \times \dots S'_{d'}$ continuous in the sense of computable analysis, where each $S_{i},S'_{i}$ is either $\R$, $\N$ or $\Z$.
Considering that $\N \subset \R$, most of the basic functions and operations in this characterization (for example, $+$, $-$, \dots) have a clear meaning over the reals, i.e. are continuous functions.  As  ordinary differential equations are naturally living over the reals, it is rather natural to understand if one may go to computation theory for functions over the reals. We consider here computability and complexity over the reals in the most classical sense, that is to say, computable analysis (see e.g. \cite{Wei00}). So, a very natural question is to understand whether we can characterize $\FPtime$ for continuous functions, and not only discrete functions. This is one of the problem we solve in this article. 

\begin{remark} As in \cite{BlancBournezMCU22vantardise}, 
	clearly, we can consider $\N \subset \Z  \subset \R$, but as functions may have different types of outputs, composition is an issue. We simply admit that
	composition may not be defined in some cases. In other words, we consider that composition is a partial operator: for example, given $f: \N \to \R$ and $g: \R \to \R$, the composition of $g$ and $f$ is defined as expected, but $f$ cannot be composed with a function such as $h: \N \to \N$.
\end{remark}

Actually, there has been a first attempt in \cite{BlancBournezMCU22vantardise}, but the authors succeeded there only to characterize in an algebraic manner functions from the integers to the reals, while it would be more natural to talk about (or at least cover)  functions from the reals to the reals ($\|.\|$ stands for the sup-norm): Namely,  \cite{BlancBournezMCU22vantardise} considers  $\manonclass = [\mathbf{0},\mathbf{1},\projection{i}{k},   \length{x}, \plus, \minus, \times,\signb{x},\frac{x}{2};$ $composition$, $linear~length~ODE],$ where
\shortermcu{\begin{itemize}
		\item} $\ell: \N \to \N$ is the length function, mapping some integer to the length of its binary representation,  $\frac{x}{2}: \R \to \R$ is the function that divides by $2$, and all other basic functions are defined exactly as for $\linearderivlength$, but considered here as functions from the reals to reals. 
	%
	%
	%
	\shortermcu{
		\item}  $\signb{x}: \R \to \R$  is some piecewise affine function that 
	takes value $1$ for $x>\frac34$ and $0$ for $x<\frac14$, and is continuous piecewise affine: in particular, its restriction to the integers is the function $\sign{x}$ considered in $\linearderivlength$. 
	%
	%
	%
	\shortermcu{
	\end{itemize}
}
%
%
%

\begin{theorem}[{\cite{BlancBournezMCU22vantardise}}] \label{th:main:one}
	A function  $\tu f: \N^{d} \to \R^{d'}$ is computable in polynomial time if and only if there exists $\tilde{\tu f}:\N^{d+1} \to \R^{d'} \in \manonclass$ such that
	for all $\tu m \in \N^{d}$, $n \in \N$,
	$\|\tilde{\tu f}(\tu m,2^{n}) - \tu f(\tu m) \| \le 2^{-n}.$
\end{theorem}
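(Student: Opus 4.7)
Plan: for the forward implication, I would start from the standard computable-analysis definition. Polynomial-time computability of $\tu f: \N^d \to \R^{d'}$ yields a polynomial-time discrete function $T: \N^{d+1} \to \Z^{d'}$ with $\|T(\tu m, n)/2^n - \tu f(\tu m)\| \le 2^{-n}$. By Theorem~\ref{th:ptime characterization 2}, $T \in \linearderivlength$, and since every basic function and scheme of $\linearderivlength$ is also available inside $\manonclass$, we obtain $T \in \manonclass$. I would then build a ``divide by $2^{\ell(N)-1}$'' gadget $G(N, y)$ via the linear length ODE $\dderivL{G(N,y)} = -G(N,y)/2$ with $G(0, y) = y$: each unit increment of $\ell$ halves $G$, so that $G(2^n, y) = y \cdot 2^{-n}$ since $\ell(2^n) = n+1$. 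Setting $\tilde{\tu f}(\tu m, N) = G(N, T(\tu m, \ell(N) - 1))$ yields $\tilde{\tu f}(\tu m, 2^n) = T(\tu m, n)/2^n$, exactly the required approximation.

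For the backward implication, I would prove by structural induction on $\tilde{\tu g} \in \manonclass$ the following preservation property: there exist polynomials $p, q$ and a Turing machine $M$ that, on input $(\tu z, k)$ with $\tu z \in \N^p$, outputs in time $q(|\tu z|, k)$ a dyadic rational at distance at most $2^{-k}$ from $\tilde{\tu g}(\tu z)$, and moreover $\|\tilde{\tu g}(\tu z)\| \le 2^{p(|\tu z|)}$. The base cases are immediate; composition reduces to asking inner approximations at a polynomially larger precision (using the growth bound and a standard Lipschitz estimate for the piecewise-affine $\signb{\cdot}$ and for $x/2$). The decisive case is linear length ODE $\dderivL{\tu h(x,\tu y)} = \tu A[\tu h, x, \tu y] \cdot \tu h + \tu B[\tu h, x, \tu y]$: the length $\ell(x)$ takes only $O(\log x)$ distinct values on $[0, x]$, so the recurrence $\tu h \leftarrow (I + \tu A)\,\tu h + \tu B$ effectively runs for $O(\log x)$ steps with polynomially bounded entries, giving both the growth bound on $\tu h$ and a polynomial-time iteration of a finite-precision approximation. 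Applying the preservation property to $\tilde{\tu f}$ with precision $k = n+1$ at input $(\tu m, 2^{n+1})$, and combining with the hypothesized $\|\tilde{\tu f}(\tu m, 2^{n+1}) - \tu f(\tu m)\| \le 2^{-(n+1)}$, yields a polynomial-time $2^{-n}$-approximation of $\tu f(\tu m)$.

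The main obstacle is controlling simultaneously the magnitude and the approximation error across the $O(\log x)$ effective steps of the linear length ODE. Compounding the multiplicative factors of the updates $\tu h \leftarrow (I + \tu A)\tu h + \tu B$ yields a potentially quasi-polynomial bound $\prod_j (1 + \|\tu A_j\|) \le 2^{O(\log^2 x)}$ on $\|\tu h\|$; this is still polynomially many bits and can be absorbed by running the simulation at a sufficiently large polynomial working precision, since only $O(\log x)$ steps are performed. Reconciling this with the composition case---in particular making sure that a composed subroutine receives inputs at a precision large enough to control the Lipschitz factor of the outer function, most delicately for $\signb{\cdot}$ near its breakpoints---is the genuinely technical piece of the argument, and is exactly where the linearity hypothesis (rather than an arbitrary length ODE) is indispensable, since a non-linear schema would allow a double-exponential blow-up outside of $\FPtime$.
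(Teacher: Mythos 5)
Your backward implication is, in substance, the paper's own argument: an induction over $\manonclass$ showing every function of the class is polynomial-time computable (this is Proposition \ref{prop:mcu:un}, whose delicate case is exactly the one you identify, handled via the explicit solution formula for linear ODEs, the resulting polynomial bound on the lengths of the values, and an affine modulus-of-continuity argument, $\signb{\cdot}$ having linear modulus), followed by the same triangle-inequality step at precision $2^{-(n+1)}$. Your forward implication, however, takes a genuinely different route. The paper (following \cite{BlancBournezMCU22vantardise}) simulates the machine producing dyadic approximations of $\tu f$ directly inside the class: integer inputs are mapped into the Cantor-like set $\Image$ (Lemma \ref{lem:manquant}), the run is a linear length ODE iterating $\bar{Next}$ (Proposition \ref{prop:deux}), and the output word is mapped back to a real by the Encode construction. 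You instead invoke the discrete characterization $\linearderivlength=\FPtime$ (Theorem \ref{th:ptime characterization 2}) for the integer numerator of the approximation and then rescale by $2^{-n}$ with a halving length-ODE. This is shorter and legitimate for $\manonclass$ precisely because $\manonclass$ keeps multiplication, so $\linearderivlength$ definitions lift; the price is that it is tied to that discrete theorem and does not extend to the multiplication-free class $\manonclasslightidealsigmoid$ nor to real arguments, which is where the paper's heavier machinery is needed.

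Two points need repair, neither fatal. First, as written you apply Theorem \ref{th:ptime characterization 2} to $T(\tu m,n)$; but $T$ is polynomial time only with respect to the \emph{value} of $n$ (its output has about $n$ bits), hence in general not in $\FPtime$, whose complexity is measured in $\ell(n)$. You must first pass to $T'(\tu m,N)=T(\tu m,\ell(N)-1)$, which is polynomial time in the lengths of its arguments, apply the theorem to $T'$, and only then compose with your gadget $G$; your final formula already has this shape, so the fix is a reordering, not a new idea. Second, $\sign{x}$ is not a basic function of $\manonclass$, only $\signb{x}$ is, so the claim that $\linearderivlength$ is ``available inside'' $\manonclass$ really means: replace $\sign{}$ by $\signb{}$ throughout the $\linearderivlength$ definition of $T'$ and check by a short induction that the resulting function agrees with $T'$ on integer arguments (all intermediate values stay integral), i.e.\ you produce an extension whose discrete part is $T'$, in the spirit of Theorem \ref{trucchoseth}. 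Finally, mind the bookkeeping in $G$: with $G(0,y)=y$ you get $G(N,y)=y\,2^{-(\ell(N)-\ell(0))}$, so the exponent must be aligned with the precision index fed to $T'$; and the dyadic coefficient $\tfrac12$ in the right-hand side is acceptable exactly in the sense in which the paper itself uses dyadic coefficients in its ODE right-hand sides (e.g.\ in $\bar{Next}$).
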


A point is that their proof method uses some functions mapping in a continuous manner the integers into some suitable subsets of reals (namely the Cantor-like set $\Image$ corresponding to the reals whose radix 4 expansion is made of only $1$ and $3$). They observe this cannot be expanded to functions over the reals, as such a continuous mapping $\R \to \Image$ cannot exist, and leave open the question whether a similar result can be established for functions from the reals to the reals. 

In the current article, we solve the issue, by using an alternative proof method (but using some of the constructions from \cite{BlancBournezMCU22vantardise} that we extend in several directions):
%
First we observe that multiplication is not needed: 
Consider $$\manonclasslightidealsigmoid = [\mathbf{0},\mathbf{1},\projection{i}{k},   \length{x}, \plus, \minus,\signb{x},\frac{x}{2}, \frac{x}{3};{composition, linear~length~ODE}],$$ that is $\manonclass$ but without multiplication 
(when $I$ is some interval, we write $\tu x \in I$ when this holds componentwise). 

\begin{theorem}[Main theorem $1$, functions over the reals] \label{th:main:one:ex}
	A continuous function  $\tu f: \R^{d} \to \R^{d'}$ is computable in polynomial time if and only if there exists $\tilde{\tu f}:\R^{d} \times \N^{2}  \to \R^{d'} \in \manonclasslightidealsigmoid$ such that
	for all $\tu x \in \R^{d}$, $X \in \N$, $ \tu x \in\left[-2^{X}, 2^{X}\right]$, $n \in \N$,
	$\|\tilde{\tu f}(\tu x,2^{X}, 2^{n}) - \tu f(\tu x) \| \le 2^{-n}.$
\end{theorem}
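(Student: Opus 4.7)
The plan is to prove the two directions separately. For soundness (the ``if'' direction), I would argue by structural induction over the construction of $\tilde{\tu f} \in \manonclasslightidealsigmoid$ that any such $\tilde{\tu f}$ satisfying the stated approximation bound yields a polynomial-time computable $\tu f$ in the sense of computable analysis. The base functions are all computable in polynomial time on dyadic approximations; composition composes polynomially; and the key case is linear length ODE: because the derivation is with respect to $\ell$, the number of update steps along the inputs $2^X$ and $2^n$ is $O(X+n)$, and since each step is a linear update, the sizes of the iterates and the associated error terms grow only polynomially. Evaluating $\tilde{\tu f}(\tu x, 2^X, 2^n)$ on a sufficiently accurate dyadic approximation of $\tu x$ therefore runs in time polynomial in $X+n+\ell(\tu x)$, matching the usual definition of polynomial-time computability over the reals.

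For completeness (the ``only if'' direction), the strategy is to combine three ingredients: Theorem~\ref{th:main:one}, an elimination of multiplication, and a \emph{barycentric interpolation} that continuously lifts an integer-input approximation to a real-input one. Given a polynomial-time Turing machine computing $\tu f$ with modulus $p(n)$, Theorem~\ref{th:main:one} (adapted to remove multiplication, as explained below) yields a function $\tu F \in \manonclasslightidealsigmoid$ such that $\tu F(\tu m, 2^X, 2^N)$ approximates $\tu f(\tu m/2^N)$ to within $2^{-n-1}$, for $N=p(n+X)$ and all integer $\tu m$ with $\|\tu m/2^N\|_\infty \le 2^X$. The barycentric step then glues these discrete approximations into a continuous function of the real input: construct a piecewise-affine partition of unity, with hat weights $w_{\tu m}$ centered on the scaled dyadic grid $2^{-N}\Z^d$ and supported in cubes of side $2\cdot 2^{-N}$, using $\signb{\cdot}$, $+$, $-$, and halving. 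Then set
\[
\tilde{\tu f}(\tu x, 2^X, 2^n) \;=\; \sum_{\tu m} w_{\tu m}(2^N \tu x) \cdot \tu F(\tu m, 2^X, 2^N),
\]
where only the $2^d$ corners of the cell containing $\tu x$ contribute a nonzero term. Continuity of the $w_{\tu m}$ together with uniform continuity of $\tu f$ on $[-2^X,2^X]^d$ (with polynomial modulus) yields the required $2^{-n}$-approximation. This bypasses the obstruction in \cite{BlancBournezMCU22vantardise}, since no continuous embedding $\R \to \Image$ is used; only continuous weights built from $\signb{\cdot}$ are.

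The main obstacle is the absence of multiplication from $\manonclasslightidealsigmoid$, which appears in three places: (a) internally in the Turing-machine simulation inherited from Theorem~\ref{th:main:one}; (b) in the products $w_{\tu m}(2^N \tu x) \cdot \tu F(\tu m, \cdot)$ defining the barycentric sum; and (c) in realizing the formal sum over corners $\tu m$ as a linear length ODE scanning cell-index coordinates one at a time. All three are handled by the same trick: to compute $w \cdot v$ for $w \in [0,1]$ and a polynomially-bounded dyadic $v$, read the binary digits of $v$ inside a linear length ODE using $\signb{\cdot}$ and halving, and accumulate $\sum_i b_i \, w/2^i$ via a linear length update, which is itself linear in the running accumulator. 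Checking that all intermediate quantities stay inside $\manonclasslightidealsigmoid$ and that the error propagates correctly is the technical heart of the proof; the barycentric construction is chosen precisely so that the $\signb{\cdot}$-based piecewise-affine weights make $\tilde{\tu f}$ continuous in $\tu x$ despite the discrete indexing by $\tu m$, with the nominal discontinuities of the floor-like cell selectors canceling pairwise inside the sum.
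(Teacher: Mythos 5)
Your soundness direction is fine and matches the paper (it is Proposition~\ref{prop:mcu:un} plus the standard triangle-inequality argument), but the completeness direction has a genuine gap at its heart. The barycentric sum $\sum_{\tu m} w_{\tu m}(2^{N}\tu x)\cdot \tu F(\tu m,2^{X},2^{N})$ cannot be realized inside $\manonclasslightidealsigmoid$ as you describe. The formal sum has exponentially many terms (of order $2^{N+X}$ per coordinate), and a linear length ODE only performs polynomially many steps, so you cannot ``scan cell-index coordinates''; you must produce the $2^{d}$ relevant corner indices directly from $\tu x$. But every function of the class is computable, hence continuous, while the corner selector (a floor at scale $2^{-N}$) is discontinuous — this is precisely the topological obstruction the theorem is about. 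Any continuous surrogate selector must have transition zones lying strictly inside some cells; on those zones it takes non-integer values, the Turing-machine simulation $\tu F(\sigma(2^{N}\tu x),\cdot)$ is then evaluated outside its guaranteed domain and returns unspecified values, and with standard position-in-cell hat weights both corner weights are bounded away from zero exactly there, so the garbage is not annihilated. Your final sentence, that the ``nominal discontinuities of the floor-like cell selectors cancel pairwise inside the sum,'' is true of the ideal mathematical interpolant but not of any expression you can actually write in the class; making it true is where the entire difficulty sits.

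The paper's proof supplies precisely the missing mechanism, and it is not grid interpolation: it uses one rounded evaluation (via the modulus of continuity, following Ko), implemented twice with two approximate floors $\sigma_{1},\sigma_{2}$ (Lemma~\ref{lem:i}) that are exact on two overlapping families of intervals covering $\R$, together with the weight $\lambda$ of Lemma~\ref{lem:lambda}, which is synchronized with them so that in $\lambda\,Formula_{1}+(1-\lambda)\,Formula_{2}$ any possibly-invalid evaluation receives coefficient exactly $0$, and whenever $\lambda\in(0,1)$ both formulas are valid $2^{-n}$-approximations. The multiplication by the weight is then folded into $\EncodeMul$ (Lemma~\ref{lem:codage:manon}), which multiplies by $\lambda$ while decoding the machine's Cantor-set output, so no multiplication operator is needed. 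Your digit-reading trick for $w\cdot v$ is a reasonable alternative for a single scalar product, but it does not cover the products of several non-dyadic factors in $[0,1]$ that your hat weights require as soon as $d\ge 2$. To repair your plan you would have to replace the position-in-cell hat weights by weights adapted to the validity regions of the approximate selectors — which is exactly what $\lambda$ is — at which point you have reconstructed the paper's adaptive barycenter rather than a genuinely different proof.
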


This actually works for general continuous functions (Theorem \ref{th:main:one} is a special case):

\begin{theorem}[Main theorem $1$, general continuous functions] \label{th:main:one:ex:p}
	A function  $\tu f: \R^{d}  \times \N^{d''} \to \R^{d'}$ is computable in polynomial time iff there exists $\tilde{\tu f}:\R^{d} \times \N^{d''+2} \to \R^{d'} \in \manonclasslightidealsigmoid$ such that
	for all $\tu x \in \R^{d}$, $X \in \N$, $ \tu x \in\left[-2^{X}, 2^{X}\right]$, $\tu m  \in \N^{d''}$, $n \in \N$,
	$\|\tilde{\tu f}(\tu x, \tu m,2^{X},2^{n}) - \tu f(\tu x, \tu m) \| \le 2^{-n}.$
\end{theorem}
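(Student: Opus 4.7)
The plan is to prove both directions, taking Theorem \ref{th:main:one:ex} as the main engine and reducing the presence of the extra integer parameters $\tu m$ to it.

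For the soundness direction (existence of $\tilde{\tu f}$ implies polynomial-time computability of $\tu f$), I would proceed by structural induction on the class $\manonclasslightidealsigmoid$ to establish that every function in it is polynomial-time computable on any box $[-2^{X}, 2^{X}]$ with bit-cost polynomial in $X$ and the requested output precision. The base functions and composition are routine; the crucial case is the linear length ODE scheme, where the iteration only runs for $\length{x}=O(\log x)$ steps and each step is an affine update whose coefficient matrix $\tu A$ and inhomogeneous term $\tu B$ are already in the class (hence polynomial-time by induction), with linearity keeping norms under control. Once this is in place, the approximation inequality $\|\tilde{\tu f}(\tu x, \tu m, 2^{X}, 2^{n}) - \tu f(\tu x, \tu m)\| \le 2^{-n}$ directly yields a polynomial-time algorithm for $\tu f$ in the sense of computable analysis: given oracle access to $\tu x$, an integer bound $X$ on $\|\tu x\|$, inputs $\tu m$, and target precision $n$, one evaluates $\tilde{\tu f}$ on a sufficiently precise dyadic approximation of $\tu x$.

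For the difficult direction, I would reduce to Theorem \ref{th:main:one:ex} by absorbing the integer arguments into the real ones. Given $\tu f: \R^{d} \times \N^{d''} \to \R^{d'}$ polynomial-time computable, define an auxiliary continuous function $\hat{\tu f}: \R^{d+d''} \to \R^{d'}$ that agrees with $\tu f$ on integer values of the last $d''$ coordinates and is, say, multilinearly interpolated between neighbouring integer lattice points in those coordinates. This $\hat{\tu f}$ inherits polynomial-time computability on every box, so Theorem \ref{th:main:one:ex} produces some $\tilde{\hat{\tu f}} \in \manonclasslightidealsigmoid$ approximating it to precision $2^{-n}$ on $[-2^{X'}, 2^{X'}]^{d+d''}$. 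Setting $\tilde{\tu f}(\tu x, \tu m, 2^{X}, 2^{n}) = \tilde{\hat{\tu f}}(\tu x, \tu m, 2^{X+\length{\tu m}}, 2^{n})$ yields the desired approximant: the required length bound in the last $d''$ coordinates is extracted from $\tu m$ itself via the class's basic operations, and composition is available in $\manonclasslightidealsigmoid$.

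The main obstacle I anticipate is controlling the complexity of $\hat{\tu f}$ and handling the box bound $X$ when the integer arguments are large: multilinear interpolation is local and cheap pointwise, but its evaluation needs access to the nearest integer lattice neighbours, and this rounding must itself be realisable inside $\manonclasslightidealsigmoid$ (this is where the functions $\signb{\cdot}$ and $\frac{x}{2}$ become essential to simulate floor-like operations via a smooth surrogate). A cleaner variant, mirroring the construction behind Theorem \ref{th:main:one:ex}, is to avoid $\hat{\tu f}$ altogether and run the barycentric construction of \cite{BlancBournezMCU22vantardise} directly, treating $\tu m$ as extra discrete parameters carried along in the simulation of the polynomial-time Turing machine for $\tu f$: since the original class $\linearderivlength$ already handles integer arguments natively and the barycentric encoding only concerns the real part, the two components decouple and the approximation bound survives unchanged.
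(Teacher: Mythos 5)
Your reverse direction is exactly the paper's: Proposition~\ref{prop:mcu:un} (proved by induction, with the linear length ODE scheme handled by Lemma~\ref{lem:un}) followed by the triangle-inequality argument, so nothing to add there. For the direct direction, the ``cleaner variant'' you sketch at the end is in fact the paper's proof: the integer arguments $\tu m$ (the $u$ of the paper's exposition) are simply carried along as extra arguments of the polynomial-time machine simulated via Proposition~\ref{prop:deux}, the real argument is discretized with the two partial floor functions $\sigma_1,\sigma_2$ of Lemma~\ref{lem:i}, and the two resulting formulas are blended with the adaptive barycenter $\lambda$ of Lemma~\ref{lem:lambda}, the multiplication by $\lambda$ being absorbed into the last argument of $\EncodeMul$ (Lemma~\ref{lem:codage:manon}) so as to remain multiplication-free. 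One attribution point: this barycentric device is a contribution of the present paper (inspired by \cite{BCGH07}); \cite{BlancBournezMCU22vantardise} could not treat real arguments at all, which is precisely the obstacle being removed here.

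Your primary route --- multilinear interpolation in the integer coordinates followed by an appeal to Theorem~\ref{th:main:one:ex} --- is a genuinely different decomposition, and its bookkeeping does go through: $\hat{\tu f}$ is continuous, its lattice values are computable in time polynomial in the box exponent because an integer in $[-2^{Y},2^{Y}]$ has length roughly $Y$, its modulus of continuity is of the form $n+\mathrm{poly}(X,Y)$ since the slopes between lattice points are at most $2^{\mathrm{poly}(X,Y)}$, and the box bound $2^{X+\length{\tu m}}$ can be produced inside $\manonclasslightidealsigmoid$ by repeated doubling along a linear length ODE, with no multiplication. What it buys is a clean modular statement that integer arguments come ``for free'' once the purely real case is known. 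What it cannot do, within the logic of the paper, is serve as the proof of Theorem~\ref{th:main:one:ex:p}: the paper obtains Theorems~\ref{th:main:one} and~\ref{th:main:one:ex} only as special cases of Theorem~\ref{th:main:one:ex:p}, so Theorem~\ref{th:main:one:ex} has no independent proof to lean on, and your reduction is circular unless you first prove the $d''=0$ case by the barycentric construction anyway --- at which point that construction already accommodates $\tu m$ directly and the interpolation detour becomes redundant. Since your final paragraph supplies exactly that direct construction, the proposal as a whole is sound, but the interpolation step should be presented as a remark rather than as the proof.
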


\manondufutur{
	Actually, we can even go further, and avoid the non-natural $\signb{x}$ function, and replace it by some smooth function such as hyperbolic tangent.
	
	$$\manonclasslighttanh = [\mathbf{0},\mathbf{1},\projection{i}{k},   \length{x}, \plus, \minus,\tanh,\frac{x}{2};\frac{x}{3};{composition, linear~length~ODE}],$$
	
	\begin{theorem}[Main theorem $2$, general continuous functions] \label{th:main:one:ex:pp}
		The statements of Theorem \ref{th:main:one:ex} and \ref{th:main:one:ex} remain true when replacing $\manonclasslightidealsigmoid$ by $\manonclasslighttanh$.
	\end{theorem}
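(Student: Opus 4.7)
The approach I would take is to reduce to Theorem \ref{th:main:one:ex:p} by showing that $\signb{x}$ and $\tanh$ can effectively simulate each other inside the algebraic framework, with the resulting errors staying controllable through the linear length ODE scheme.

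For the ``if'' direction (a $\manonclasslighttanh$-approximation yields a polynomial-time computable function), the plan is to observe that $\tanh$ is a real-analytic, polynomial-time computable function in the sense of computable analysis, and that all the other primitives and operations of $\manonclasslighttanh$ preserve polynomial-time computability, exactly as in the corresponding direction of Theorem \ref{th:main:one:ex:p}. Hence any $\tilde{\tu f} \in \manonclasslighttanh$ that approximates $\tu f$ to precision $2^{-n}$ on $[-2^X, 2^X]^d$ yields a polynomial-time algorithm for $\tu f$ in the classical sense of computable analysis.

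For the ``only if'' direction, I would start from an approximating function $\tilde{\tu f} \in \manonclasslightidealsigmoid$ produced by Theorem \ref{th:main:one:ex:p} and systematically replace every occurrence of $\signb{x}$ by a $\tanh$-based surrogate $\sigma_M(x) = (1 + \tanh(M \cdot (x - 1/2)))/2$, where the scaling parameter $M$ is itself generated inside the class (using the length primitive together with iterated linear length ODEs) as a quantity polynomial in $2^n$ and $2^X$. Since $\signb{x}$ takes the exact value $0$ (resp.\ $1$) for $x < 1/4$ (resp.\ $x > 3/4$), and $\tanh(y)$ converges exponentially fast to $\pm 1$ away from the origin, one obtains $|\sigma_M(x) - \signb{x}| \leq e^{-\Omega(M)}$ outside the transition band $[1/4, 3/4]$. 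The constructions used in the proof of Theorem \ref{th:main:one:ex:p} are already designed so that $\signb{x}$ is queried on arguments bounded away from the transition band whenever a crisp $0$ or $1$ value is needed to drive further computation, so the substitution introduces only an exponentially small perturbation at each call site.

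The main obstacle will be controlling the propagation of this per-call perturbation through the linear length ODE schema. A linear length ODE along $\ell(x)$ can, a priori, amplify a per-step error by a factor polynomial in $x$, so $M$ must be chosen large enough that $e^{-\Omega(M)}$ times this blow-up still beats the target precision $2^{-n}$; taking $M$ polynomial in $n$ and $X$ will suffice, and such an $M$ is representable inside $\manonclasslighttanh$. Once this quantitative error analysis is carried out, and once the barycentric constructions used in the proof of Theorem \ref{th:main:one:ex:p} are shown to be stable under smooth perturbations of $\signb{x}$, the rest of the argument transfers verbatim, yielding the desired $\tanh$-based characterization.
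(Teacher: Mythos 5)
You should first know that the paper does not actually prove this statement: it appears only in a draft block, and the published text says explicitly that replacing $\signb{}$ by $\tanh$ is believed possible ``at the price of a discussion of the involved errors'' and is left as future work. So there is no paper proof to compare against; your plan coincides with the route the authors themselves envisage (substitute a rescaled $\tanh$ for $\signb{}$ and control the errors), but as written it has concrete gaps, and they are exactly the points the authors defer. The central one is your claim that the constructions only query $\signb{}$ on arguments bounded away from the transition band when crisp $0/1$ values are needed. That is false for the ingredients the direct implication rests on: the affine ramp of $\signb{}$ is used \emph{exactly}, not just as a switch. In Lemma \ref{lem:xi} and Lemma \ref{lem:lambda}, the functions $\xi_{1},\xi_{2},\lambda$ are built from sigmoids $\sig(a,b,x)$ whose linear middle segment produces the fractional part and the barycentric weight exactly (and the adaptive barycenter needs $\lambda\in(0,1)$ in the overlap region, i.e.\ values strictly inside the band); and Lemma \ref{tricksigmoid}, $T(d,l)=\signb{d-3/4+l/2}$, uses the ramp to output the value $l$ itself --- this is precisely the device by which multiplication is eliminated from the simulation of the machine and from $\EncodeMul$. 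Replacing $\signb{x}$ by $(1+\tanh(M(x-1/2)))/2$ is exponentially accurate only off the band; on the band you need estimates of a different nature (bounds of the type $|K\tanh(x/K)-x|\le |x|^{3}/(3K^{2})$, which is what the authors' unused appendix lemmas are for), the resulting error is only polynomially small in the scale, and it must be shown to stay within the fixed $1/4$-margins of $\sigma_{1},\sigma_{2},\lambda$ after being propagated through $2^{T(\ell(\omega))}$ iterations of the one-step function. None of this transfers ``verbatim''.

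A second, more structural gap: the surrogate $\tanh(M\cdot(x-1/2))$ presupposes multiplying the argument by a large parameter $M$, but the $\tanh$-based class has no multiplication. You would have to generate $M$ (say as a power of two, by iterated doubling inside a linear length ODE), thread it as an extra argument through every construction, and handle the fact that under composition the scale needed by an inner occurrence depends on the modulus of continuity of the outer one --- so this is not a pointwise ``substitution'' into an existing $\manonclasslightidealsigmoid$-term but a re-derivation of all the lemmas with an explicit precision parameter. Auxiliary constants that naturally arise in $\tanh$-based reconstructions (the authors' own draft formula for the analogue of Lemma \ref{tricksigmoid} involves $1/\tanh(1)$) are also not obviously generated from $0,1,+,-,\tanh,x/2,x/3$. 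Your ``if'' direction is unproblematic ($\tanh$ is polynomial-time computable, so the induction behind Proposition \ref{prop:mcu:un} goes through), but for the ``only if'' direction the proposal stands on an incorrect assumption about where $\signb{}$ is evaluated and leaves the quantitative error analysis --- which is the actual content of the theorem --- undone.
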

}

This has also some strong links with formal neural networks:

\begin{theorem}[Main theorem $3$, Formal neural networks] \label{th:nn}
	Given some function $f: \R^{d} \to \R^{d'}$, for some given error and some polynomial time $t(n)$, we can efficiently produce some formal neural network that computes the function over its domain with the given precision, for any polynomial time computable $t(n)$  function $f$. 
\end{theorem}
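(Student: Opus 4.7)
The plan is to derive Theorem \ref{th:nn} as a direct consequence of Theorem \ref{th:main:one:ex:p}, by observing that the class $\manonclasslightidealsigmoid$ is essentially a syntactic description of feedforward neural networks with $\signb$ activation functions: one only needs to show that every function in $\manonclasslightidealsigmoid$ reachable by a polynomial-size derivation can be effectively laid out as a neural network of polynomial size, and that this layout can be computed uniformly in the target precision $2^{-n}$ and time bound $t(n)$.

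First, given $f$ polynomial-time computable with bound $t(n)$ and desired precision $\epsilon = 2^{-n}$, I would invoke Theorem \ref{th:main:one:ex:p} to obtain an explicit $\tilde{\tu f} \in \manonclasslightidealsigmoid$ such that $\|\tilde{\tu f}(\tu x, 2^X, 2^n) - \tu f(\tu x)\| \le 2^{-n}$ on the domain. Then I would prove by induction on the construction of $\tilde{\tu f}$ that it admits a realization as a formal neural network of size polynomial in $n$, $X$ and $t(n)$. The base cases are straightforward: constants, projections, $\plus$, $\minus$, $\frac{x}{2}$, $\frac{x}{3}$ are affine maps that fit exactly in a weight matrix with identity (or $\signb$-linear) output; the function $\signb$ is itself a continuous piecewise affine sigmoid and is realized by one or two gates with $\signb$ (or ReLU) activation; and $\length{x}$ restricted to relevant integer arguments is a small $\signb$-circuit detecting the leading bit. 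Composition is handled by stacking the corresponding networks, which adds sizes.

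The main obstacle is the linear length ODE scheme, since naively iterating $\tu f(x+1,\tu y) = \tu f(x,\tu y) + (\ell(x{+}1,\tu y) - \ell(x,\tu y)) \cdot \tu h(\tu f(x,\tu y),x,\tu y)$ up to $x = 2^X$ would produce an exponential-size network. The key observation, which is precisely what makes $\linearderivlength$ correspond to polynomial time, is that $\ell(x{+}1) - \ell(x)$ vanishes except on $O(X)$ values of $x$, so the state $\tu f(x,\tu y)$ truly changes only at the powers of $2$. I would unroll the ODE along those active steps only, obtaining $O(X)$ update layers, each layer being an affine map in $\tu f$ (by the linearity hypothesis on $\tu h$) with coefficients obtained from the inductively constructed sub-network for $\tu h$. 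A $\signb$-based selector sub-network, reading the bits of $X$, then picks the correct intermediate state $\tu f(2^k,\tu y)$ corresponding to the actual iteration argument. Induction on the nesting depth of linear length ODEs shows that the overall size and depth remain polynomial.

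Finally, to obtain the \emph{uniform} and \emph{efficient} production claimed in the statement, I would check that each inductive step is effective: from a syntactic description of $\tilde{\tu f}$ produced by Theorem \ref{th:main:one:ex:p}, the weights, biases and wiring of every gate of the realized network are computed in polynomial time in $n$ and $t(n)$. This yields a polynomial-time procedure that, given $f$, $n$ and $t(n)$, outputs a formal neural network computing $f$ to within $2^{-n}$ on its domain, establishing the theorem.
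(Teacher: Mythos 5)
Your overall plan (take the explicit $\tilde{\tu f}$ coming from Theorem \ref{th:main:one:ex:p}, fix $n$, the domain bound and the time $t(n)$, and unroll it into a fixed feedforward $\signb{}$-network) is in the same spirit as the paper, which simply observes that the normal-form expression \eqref{letrucalafin} becomes a neural function once $n$, $M$ and the $T_i$'s (i.e.\ $t(n)$) are fixed. But your key inductive claim --- that \emph{every} function of $\manonclasslightidealsigmoid$ with a polynomial-size derivation can be laid out as a $\signb{}$-network --- is too strong, and the step where it fails is exactly the linear length ODE case. You write that each unrolled update is ``an affine map in $\tu f$ (by the linearity hypothesis on $\tu h$)'', but essential linearity only says the right-hand side has the form $\tu A[\tu f,x,\tu y]\cdot\tu f+\tu B[\tu f,x,\tu y]$ with $\tu A,\tu B$ essentially constant in $\tu f$; the coefficients may still depend on the real inputs (through $\signbb{\cdot}$ terms and, in the schema as defined via $\bar{\fonction{cond}}$-polynomial expressions, even through genuine products). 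An update layer is then a product of two input-dependent quantities, which a neural function cannot realize: neural functions as defined in the paper are piecewise affine. Concretely, the one-line scheme $f(0,y)=1$, $\dderivl{f(x,y)}=(y-1)\cdot f(x,y)$ stays inside the schema and produces essentially $y^{\ell(x)}$, which for fixed $x$ is a polynomial in $y$ of degree $\ell(x)$, hence not piecewise affine and not representable by any $\signb{}$-network. So the induction over the whole class cannot go through as stated.

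The paper sidesteps this by proving the statement only for the specific function built in Sections \ref{sec:functions}--\ref{sec:computablereal}: there, every product that occurs is deliberately of a ``gated'' form (a $\{0,1\}$-valued selector times a quantity in $[0,1]$) and is eliminated via Lemma \ref{tricksigmoid} and the $\send$ gadgets of Lemmas \ref{lem:switch} and \ref{lem:switch:pairs}, while the remaining multiplications (by $\lambda(2^n,x)$ and by $2^{-n}$) are absorbed into the last argument of $\EncodeMul$ in formula \eqref{letrucalafin}. To repair your argument you would need to restrict the induction to this particular construction (the $\bar{Next}$ simulation, $\xi_i,\sigma_i,\lambda$, $\Decode$, $\EncodeMul$, and the fixed-length unrolling of their defining ODEs) and verify at each occurrence of a product that it is of the gated form handled by Lemma \ref{tricksigmoid} --- which is precisely what the paper's normal-form proof does; alternatively you would have to allow multiplication units in your networks, which leaves the class of formal neural networks in the sense of the theorem. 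Your remark that only $O(X)$ steps are ``active'' also mislocates the unrolling length: in the normal form the number of layers is governed by the time bounds $T_1,T_2,T_3$ (i.e.\ $t(n)$ and $m(n,M)$), not by the number of jumps of $\ell$ on $[0,2^X]$.
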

%

Furthermore, this improves the characterization of \cite{MFCS2019}. Indeed, given a function $\tu f: \R^{d} \to \R^{d'}$, preserving the integers, we denote $\DP(f)$ for its discrete part: this is the function from $\N^{d} \to \N^{d'}$ whose value in $\vn \in \N^{d}$ is $\tu f(\tu n)$. Given a class $\mathcal{C}$ of such functions, we write $\DP(\mathcal{C})$ for the class of the discrete parts of the functions of $\mathcal{C}$. 
\begin{theorem} \label{trucchoseth}
	$\DP(\manonclasslightidealsigmoid)= \FPtime.$
\end{theorem}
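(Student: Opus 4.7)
The inclusion $\DP(\manonclasslightidealsigmoid) \subseteq \FPtime$ rests on a soundness lemma: every $f \in \manonclasslightidealsigmoid$ is polynomial-time computable over the reals in the sense of computable analysis. The basic functions $\mathbf{0}, \mathbf{1}, \projection{i}{k}, \length{x}, \plus, \minus, \signb{x}, \tfrac{x}{2}, \tfrac{x}{3}$ are all polynomial-time computable, composition preserves the property, and linear length ODE does too: the recurrence $\tu f(x+1,\tu y) = \tu f(x,\tu y) + (\ell(x+1) - \ell(x))(\tu A\, \tu f + \tu B)$ unfolds in at most $x$ updates, with error growth controlled by the linearity. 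This is essentially the forward direction of Theorem \ref{th:main:one:ex:p}, which I would invoke directly. Given an integer-preserving $f \in \manonclasslightidealsigmoid$ and an integer input $\vn$, feeding $\vn$ exactly and requesting a $\tfrac{1}{4}$-approximation of $f(\vn)$ allows recovery of the integer $\DP(f)(\vn)$ by rounding, in time polynomial in $\ell(\vn)$.

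For the reverse inclusion $\FPtime \subseteq \DP(\manonclasslightidealsigmoid)$, I would use $\FPtime = \linearderivlength$ (Theorem \ref{th:ptime characterization 2}) and establish $\linearderivlength \subseteq \DP(\manonclasslightidealsigmoid)$ by induction on the construction of functions in $\linearderivlength$. All basic functions of $\linearderivlength$ other than $\times$ belong to $\manonclasslightidealsigmoid$ and preserve integers; in particular $\sign{x}$ coincides on $\N$ with $\signb{x}$ by the very definition of $\signb$. Composition and linear length ODE are the same scheme in both classes, and the integer-preserving fragment of $\manonclasslightidealsigmoid$ is closed under both: for the ODE step, when $\tu A$ and $\tu B$ take integer values at integer inputs, the update remains integer-valued since $\ell(x+1) - \ell(x) \in \{0, 1\}$ for $x \in \N$.

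The main obstacle is therefore exhibiting integer multiplication $M \in \manonclasslightidealsigmoid$ with $M(m,n) = mn$ on $\N^2$, as $\times$ is not primitive in this class. The natural route is to reuse the construction developed for the proof of Theorem \ref{th:main:one:ex:p}: since $\manonclasslightidealsigmoid$ approximates multiplication on the reals (a polynomial-time computable continuous function) to arbitrary precision, fixing the precision parameter to $\tfrac{1}{4}$ and composing with a rounding function (itself built from $\signb$ and a short linear length ODE) yields an integer-preserving $M$ with $M(m,n) = mn$ on $\N^2$. Alternatively, one can directly implement a shift-and-add algorithm as a linear length ODE driven by $\ell(n)$: extract the bits of $n$ via iterated applications of $\tfrac{x}{2}$ and $\signb$ and accumulate shifted copies of $m$, verifying step by step that each intermediate quantity is an integer at integer inputs. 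Either route closes the induction and yields $\linearderivlength \subseteq \DP(\manonclasslightidealsigmoid)$, completing the proof.
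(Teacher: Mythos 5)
Your forward inclusion is exactly the paper's argument: by Proposition \ref{prop:mcu:un} every function of $\manonclasslightidealsigmoid$ is polynomial-time computable, so one approximates the value at an integer point to precision $\frac14$ and rounds to the nearest integer. For the reverse inclusion, however, you take a genuinely different route. The paper does not pass through $\linearderivlength$ at all: given a polynomial-time $\tu f:\N^{d}\to\N^{d'}$, it simply composes $\Decode$ (Lemma \ref{lem:manquant}), the Turing-machine simulation over the Cantor-like set $\Image$ (Proposition \ref{prop:deux}), and $\EncodeMul$ (Lemma \ref{lem:codage:manon}, with $\lambda=1$), obtaining directly a class function whose restriction to the integers is $\tu f$; this sidesteps the multiplication question entirely. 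Your route instead invokes $\FPtime=\linearderivlength$ (Theorem \ref{th:ptime characterization 2}) and transfers the $\linearderivlength$-derivation by induction, which is workable (on integer data $\sign{}$ and $\signb{}$ agree, the updates stay integer-valued, and products of the form ``essentially constant times state'' are exactly what the essentially-linear scheme permits), but it concentrates all the difficulty in producing integer multiplication inside $\manonclasslightidealsigmoid$. Your first fix — apply the approximation theorem (Theorem \ref{th:main:one:ex}) to real multiplication with precision parameter fixed, domain parameter $2^{\ell(m)+\ell(k)}$ generated inside the class, and round with the $\sigma_1$-type function of Lemma \ref{lem:i} — does work, but note that it is not lighter than the paper's proof: that theorem's direct implication is itself proved by the very same Turing-machine simulation, so you are using the same machinery with an extra detour through Cobham-style algebra; what your organization buys is a modular statement ($\linearderivlength\subseteq\DP(\manonclasslightidealsigmoid)$) that isolates multiplication as the only non-trivial basic function.

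Your second, ``shift-and-add'' fallback is the weak point and should not be relied on as sketched. The accumulation step needs $acc' = acc + bit\cdot shift$ where $shift$ is an unbounded state component: without $\times$ as a basic operation, the paper's selection trick (Lemma \ref{tricksigmoid}, $\signb{d-\frac34+\frac{l}{2}}$) only multiplies a bit by a quantity lying in $[0,1]$, which is precisely why the paper keeps all simulated data inside $[0,1]$ via the radix-$4$ encoding; it does not apply to an unbounded accumulator. Moreover, extracting the current bit from the evolving remainder would require applying $\mod_2$/$\div_2$-type functions (which are themselves defined by their own length-ODEs with a scale parameter) to a state variable, and the linear length ODE scheme only allows previously defined functions to be applied to $(x,\tu y)$, not to $\tu f(x,\tu y)$; inside the right-hand side the state may only be touched by an essentially linear $\bar{\fonction{cond}}$-polynomial. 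So either spell out how the bit is obtained as a function of the step index and the parameter $n$ alone and how the selection avoids a product with an unbounded value, or drop this alternative and keep only the first route.
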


%

\olivier{Parler de l'espace $PSPACE$?}

We can also extend the statements from \cite{BlancBournezMCU22vantardise}: 
%

\begin{definition}[Operation $\MANONlim$] Given $\tilde{\tu f}:\R^{d} \times \N^{d''} \times \N \to \R^{d'} \in \manonclasslightidealsigmoid$ such that
	for all $\tu x \in \R^{d}$,  $X \in \N$, $ \tu x \in\left[-2^{X}, 2^{X}\right]$, $\tu m \in \N^{d''}$, $n \in \N$,
	$\|\tilde{\tu f}(\tu x, \tu m,2^X,2^{n}) - \tu f(\tu x, \tu m) \| \le 2^{-n},$ then 
	$\MANONlim(\tilde{\tu f})$ is the (clearly uniquely defined) corresponding function  $\tu f: \R^{d} \to \R^{d'}$.
\end{definition}

%


\begin{theorem}  \label{th:main:two} 
	A continuous function $\tu f$ 
	is  computable in polynomial time if and only if all its components belong to $\manonclasslightidealsigmoidlim$, where
	$\manonclasslightidealsigmoidlim= [\mathbf{0},\mathbf{1},\projection{i}{k},   \length{x}, \plus, \minus,  \signb{x}, \frac{x}{2}, \frac{x}{3};$
	$composition$, $ linear~length~ODE;\MANONlim].$
\end{theorem}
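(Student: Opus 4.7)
The plan is to reduce this theorem to Theorem~\ref{th:main:one:ex:p}, by identifying the $\MANONlim$ operation as precisely the operator that turns a polynomial-time approximator into its limit function. The $(\Rightarrow)$ direction is then almost immediate: if $\tu f$ is polynomial-time computable, Theorem~\ref{th:main:one:ex:p} (or Theorem~\ref{th:main:one:ex}) provides an approximator $\tilde{\tu f} \in \manonclasslightidealsigmoid \subseteq \manonclasslightidealsigmoidlim$ satisfying $\|\tilde{\tu f}(\tu x, 2^X, 2^n) - \tu f(\tu x)\| \le 2^{-n}$ on $[-2^X, 2^X]^d$, so that $\tu f = \MANONlim(\tilde{\tu f}) \in \manonclasslightidealsigmoidlim$.

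For the $(\Leftarrow)$ direction, I would proceed by structural induction on the derivation of $\tu f$ in $\manonclasslightidealsigmoidlim$. The induction invariant is that each constructed function is polynomial-time computable in the sense of computable analysis on bounded boxes. The basic functions all have this property. Closure under composition follows from the standard modulus-of-continuity argument together with the polynomial-time composition lemma on bounded domains. Closure under linear length ODE reuses the very same idea that underlies $\manonclasslightidealsigmoid$: the ODE is defined by iteration over $\length{x}$ steps, each step being a polynomial-time update driven by $(A,B)$; a polynomial growth bound on the intermediate values---inherited from Theorem~\ref{th:main:one:ex:p} when the coefficients already live in $\manonclasslightidealsigmoid$, and propagated by the induction hypothesis otherwise---guarantees a polynomial total cost. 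Closure under $\MANONlim$ is the most conceptually straightforward step: to evaluate $\MANONlim(\tilde{\tu f})(\tu x, \tu m)$ with precision $2^{-n}$ on $[-2^X, 2^X]$, it suffices to evaluate $\tilde{\tu f}(\tu x, \tu m, 2^X, 2^{n+1})$ with precision $2^{-(n+1)}$ and apply the triangle inequality, and the cost stays polynomial in $n$, $X$ and the sizes of the remaining inputs.

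The main obstacle, as I see it, is organising the induction so that $\MANONlim$ can be \emph{iterated}: the operation is only meaningful when its argument already satisfies the approximator property, and one must show this property is preserved whenever composition or linear length ODE are applied above a previous use of $\MANONlim$. The cleanest way I would handle this is to strengthen the induction invariant: together with polynomial-time computability, require that every function built in $\manonclasslightidealsigmoidlim$ admits a \emph{canonical} polynomial-time approximator inside $\manonclasslightidealsigmoid$. Stability of this property under composition and linear length ODE is proved by direct construction, essentially replaying the constructions behind Theorem~\ref{th:main:one:ex:p}; stability under $\MANONlim$ is immediate because the inductive hypothesis already supplies such an approximator. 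This collapses any tower of $\MANONlim$'s back to a single outer application, which is exactly what makes the scheme equivalent in strength to Theorem~\ref{th:main:one:ex:p} rather than escaping the polynomial-time class.
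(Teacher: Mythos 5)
Your proposal is correct and follows essentially the same route as the paper: the forward direction is obtained by applying $\MANONlim$ to the approximator supplied by Theorem~\ref{th:main:one:ex:p}, and the reverse direction is the induction showing every function of the class is polynomial-time computable (Proposition~\ref{prop:mcu:un} for the base functions, composition and linear length ODEs, plus the triangle-inequality argument for $\MANONlim$). Your strengthened invariant for collapsing nested uses of $\MANONlim$ is exactly the content of the paper's normal-form observation, so nothing essentially new or problematic is introduced.
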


{In particular:}

\begin{theorem}[Main theorem $4$]\label{th:main:twop} 
	$\manonclasslightidealsigmoidlim \cap \R^{\R} = \FPtime \cap \R^{\R} $
\end{theorem}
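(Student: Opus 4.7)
The plan is to derive Theorem \ref{th:main:twop} as an immediate specialization of the more general Theorem \ref{th:main:two}, which already characterizes polynomial-time computable continuous functions in terms of $\manonclasslightidealsigmoidlim$. The restriction to $\R^{\R}$ simply fixes the signature to unary real functions $f: \R \to \R$, so the whole content of the corollary reduces to verifying that, on both sides of the equality, the signature $\R \to \R$ is paired with continuity in the same way.

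For the inclusion $\FPtime \cap \R^{\R} \subseteq \manonclasslightidealsigmoidlim \cap \R^{\R}$, I would start from an arbitrary $f: \R \to \R$ computable in polynomial time in the sense of computable analysis. A standard fact from computable analysis is that any polynomial-time (indeed, any computable) function from $\R$ to $\R$ is automatically continuous, so $f$ fits the hypothesis of Theorem \ref{th:main:two}. Applying that theorem yields $f \in \manonclasslightidealsigmoidlim$, and the signature is preserved, giving $f \in \manonclasslightidealsigmoidlim \cap \R^{\R}$.

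For the reverse inclusion $\manonclasslightidealsigmoidlim \cap \R^{\R} \subseteq \FPtime \cap \R^{\R}$, I would first check the (intuitively obvious) lemma that every function in $\manonclasslightidealsigmoid$ is continuous in each of its real arguments. This goes by a straightforward induction on the construction: the basic functions $\mathbf{0},\mathbf{1},\projection{i}{k},\plus,\minus,\signb{x},\frac{x}{2},\frac{x}{3}$ are continuous on $\R$ (note $\signb{\cdot}$ is piecewise affine hence continuous, while $\length{x}$ only takes integer inputs), composition of continuous functions is continuous, and the linear length ODE scheme amounts to a discrete recursion on an integer variable whose right-hand side depends continuously on the real arguments, so the resulting $\tilde f$ is continuous in $\tu x$ for each fixed choice of integer parameters. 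Feeding this into the definition of $\MANONlim$, the limit $f$ is a uniform limit of continuous functions on each compact slab $[-2^X,2^X]$, and a uniform limit of continuous functions is continuous, hence $f$ is continuous on all of $\R$. Theorem \ref{th:main:two} then applies in the converse direction and gives $f \in \FPtime$.

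The only real thing to check is thus the continuity lemma for $\manonclasslightidealsigmoid$ just described, and no step in it is delicate: each constructor either preserves continuity in the real variables or operates only on integer variables. Once that routine verification is in place, Theorem \ref{th:main:twop} is essentially a restatement of Theorem \ref{th:main:two} obtained by matching the signature $\R \to \R$ on both sides, and the proof is complete.
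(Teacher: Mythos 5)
Your proposal is correct and matches the paper's treatment: Theorem \ref{th:main:twop} is stated there as an immediate specialization (``in particular'') of Theorem \ref{th:main:two} to the signature $\R\to\R$, exactly as you do. The only point you make explicit that the paper leaves implicit is the continuity of functions in $\manonclasslightidealsigmoidlim$ needed to invoke the reverse direction; your syntactic induction (extended through nested applications of $\MANONlim$) works, and alternatively continuity follows at once from Proposition \ref{prop:mcu:un} together with the preservation of polynomial-time computability under the limit schema, since computable functions are continuous.
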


\shortermcu{
	The rest of the paper is organized as follows:}
\olivier{Vérifier découpage en sections dans ce texte}  In Section \ref{sec:discreteode}, we recall \shorter{some basic statements from }the theory of discrete ODEs. In Section \ref{sec:defanalysecalculable}, we recall  required concepts from computable analysis. In Section \ref{sec:functions}, we establish some properties about particular functions, required for our proofs. 
In
Section \ref{fptimedansmanonclass}, we prove that functions from $\manonclass$ are polynomial time computable, and then we 
prove a kind of reverse implication for functions over words. 
Section \ref{sec:computablereal} then proves Theorems \ref{th:main:two} and \ref{th:main:twop}. Section \ref{sec:generalizations} is a discussion about some of the consequences of our proofs. 
The appendix contains some complements, and missing proofs, and some complements on state of the art. Notice that some of the proofs in the main part of the documents are also repeated with more details in appendix. 

\subsection{Related work.}  \label{stateoftheart}

Recursion schemes constitute a classical major approach of classical computability theory and, to some extent, of complexity theory. The foundational characterization of $\FPtime$ due to Cobham \cite{cob65}, and then others based on safe recursion \cite{bs:impl} or
ramification (\cite{LM93,Lei94}), or for other classes \cite{lm:pspace}, gave birth to the very vivid field of \textit{implicit complexity} at the
interplay of logic and theory of programming: See 
\cite{Clo95,clote2013boolean} for monographs.

When considering continuous functions, various computability and complexity classes have been recently characterized using classical continuous ODEs: 
See survey  \cite{DBLP:journals/corr/BournezGP16}.

Here we are considering an approach in-between, where we characterize complexity classes using discrete ODEs. This approach is born from the  attempt of \THEPAPIERS{} to explain some of the constructions for continuous ODEs in an easier way. At the end, both models turn out to be rather different. Indeed, a key aspect of the proofs over the continuum is based on some closure properties, based on the formula for the derivative of a composition, while there is no such equivalent for discrete ODEs. However, an unexpected side effect of the approach was the characterizations obtained in \THEPAPIERS. 
They provided a  characterization of $\FPtime$ for discrete functions that does not make it necessary to specify an
explicit bound in the recursion, in contrast to Cobham's work \cite{cob65}, nor to assign
a specific role or type to variables, in contrast to safe recursion or ramification \cite{bs:impl,Lei-LCC94}. The characterization, like ours,  happens to be very simple 
using only natural notions from the world of ODE. In particular, considering \emph{linear} ordinary differential equations is something very natural in this context. 

Our proof is based on some constructions of the very recent \cite{BlancBournezMCU22vantardise}. However, we improve several of the statements and the constructions as we avoid multiplications. This latter paper was not able to characterize functions from the reals to the reals, but only sequences of reals.   Furthermore, our proof method, based on some adaptive barycenter is different, as their proof method  cannot extend to the reals for topological reasons. Our barycentric method is inspired from some constructions of \cite{BCGH07}, but once again the context of continuous ODEs and discrete ODEs is very different from the absence of a derivative formula for composition. This requires to construct explicitly some very particular functions as we do in Section \ref{sec:functions}. 

Our ways of simulating Turing machines have some reminiscence of similar constructions used in other contexts such as Neural Networks \cite{SS95,LivreSiegelmann}. But with respect to all previous contexts, as far as we know, only a few papers have been devoted to characterize complexity, and even computability, classes in the sense of computable analysis. There have been some attempts to 
the so-called $\R$-recursive functions \cite{DBLP:journals/corr/BournezGP16}. For discrete schemata, we only know \cite{brattka1996recursive}  and  \cite{ng2021recursion}, focusing on computability and not complexity.

\section{Some concepts from the theory of discrete ODEs}
\label{sec:discreteode}

In this section, we  recall some concepts and definitions from discrete ODEs, either well-known or established in \THEPAPIERSPLUS. Appendix \ref{sec:dode} presents the theory with many more details.
%
%
\newcommand\polynomial{ \fonction{sg}-polynomial}
%
%
%
%
%

\begin{definition}[{\cite{BlancBournezMCU22vantardise}}]
	A \polynomialb{}  expression $P(x_1,...,x_h)$ is an expression built-on
	$+,-,\times$ (often denoted $\cdot$) and $\signb{}$ functions over a set of variables $V=\{x_1,...,x_h\}$ and integer constants.
	The degree $\deg(x,P)$ of a term $x\in V$ in $P$ is defined inductively as follows:
	\shortermcu{
		\begin{itemize}
			\item} $\deg(x,x)=1$ and for  $x'\in V\cup \Z$ such that $x'\neq x$, $\deg(x,x')=0$;
		\shortermcu{\item}  $\deg(x,P+Q)=\max \{\deg(x,P),\deg(x,Q)\}$;
		\shortermcu{\item}   $\deg(x,P\times Q)=\deg(x,P)+\deg(x,Q)$;
		\shortermcu{\item}   $\deg(x,\signb{P})=0$.
		\shortermcu{
	\end{itemize}}
	A \polynomialb{}  expression $P$  is \textit{essentially constant} in
	$x$ if $\degre{x,P}=0$. 
\end{definition}

Compared to the classical notion of degree in polynomial expression,
all subterms that are within the scope of a sign (that is to say $\signb{}$)  function contributes
$0$ to the degree. A vectorial function (resp. a matrix or a vector) is said to be a \polynomialb{} expression if all
its coordinates (resp. coefficients) are. 
It is said to be
\textit{essentially constant} if all its coefficients are.

\begin{definition}[\THEPAPIERSPLUS] \label{def:essentiallylinear}
	A 
	\polynomialb{} expression $\tu g(\tu f(x, \tu y), x,
	\tu y)$ is \textit{essentially linear} in $\tu f(x, \tu y)$ if
	it is of the form $
	\tu g(\tu f(x, \tu y), x,
	\tu y) =
	\tu A [\tu f(x,\tu y), 
	x,\tu y]  \cdot \tu f(x,\tu y) 
	+ \tu B [\tu f(x,\tu y), 
	x,\tu y ] $
	where $\tu A$ and $\tu B$ are\polynomialb{} expressions essentially
	constant in $\tu f(x, \tu y)$.
\end{definition}

For example, 
the expression $P(x,y,z)=x\cdot \signb{(x^2-z)\cdot y} + y^3$
is essentially linear in $x$, essentially constant in $z$ and not linear in
$y$. 
\shortermcu{
	\item 
	The expression
	$P(x,2^{\length{y}},z)=\signb{x^2 - z}\cdot z^2 + 2^{\length{y}}$
	is essentially constant in $x$, essentially linear in
	$2^{\length{y}}$ (but not essentially constant) and not
	essentially linear in $z$. 
	\item }
The expression:
%
$  z +
(1-\signb{x})\cdot (1-\signb{-x})\cdot (y-z) $
is essentially constant in $x$ and linear in $y$ and $z$.
%

\begin{definition}[Linear length ODE \THEPAPIERS]\label{def:linear lengt ODE}
	A function $\tu f$ is linear $\lengt$-ODE definable (from $\tu u$, 
	$\tu g$ and $\tu h$) if it corresponds to the
	solution of 
	%
	\begin{equation} \label{SPLode}
	f(0,\tu y) 
	= \tu g(\tu y)   \quad  and \quad
	\dderivl{\tu f(x,\tu y)} 
	=   \tu u(\tu f(x,\tu y), \tu h(x,\tu y),
	x,\tu y) 
	\end{equation}
	\noindent where $\tu u$ is \textit{essentially linear} in $\tu f(x, \tu y)$. 
\end{definition}

A fundamental fact is that the derivation with respects to length provides a way to do a kind of change of variables: consequently, we will often define some functions by defining their value in 
$2^{0}$, and then $2^{n+1}$ from their value in $2^{n}$ as its corresponds to some discrete ODE after this change of variable. We will also implicitly use that some basic functions such as $n \mapsto 2^{n}$ can easily be defined, and that we can produce $2^{T(\ell(\omega))}$ for any polynomial $T$: see \THEPAPIERS.

\begin{lemma}[{\THEPAPIERS}]
	Let 
	$f: \N^{p+1}\rightarrow \Z^d$,
	$\lengt:\N^{p+1}\rightarrow \Z$  be some functions and assume that \eqref{lode} holds considering  $\lengt(x,\tu y)=\length{x}$.
	Then $\tu f(x,\tu y)$ is given by 
	$\tu f(x,\tu y)= \tu F(\length{x},\tu y)$
	where $\tu F$ is the solution of the initial value problem $F(1,\tu y)= \tu f(0,\tu y)$, and $\dderiv{\tu F(t,\tu y)}{t} =\tu h(\tu F(t, \tu y),2^{t}-1,\tu y).$
\end{lemma}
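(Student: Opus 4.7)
The plan is to exploit that, when $\lengt = \length{\cdot}$, the increment $\length{x+1} - \length{x}$ is $0$ except at the positions $x = 2^t - 1$, where it equals $1$. First I would unfold \eqref{lode}: by the formal synonym stated just after it, the hypothesis reads $\tu f(x+1,\tu y) = \tu f(x,\tu y) + (\length{x+1} - \length{x}) \cdot \tu h(\tu f(x,\tu y), x, \tu y)$. The observation above then forces $\tu f(x+1,\tu y) = \tu f(x,\tu y)$ whenever $x+1$ is not a power of $2$, so $\tu f(\cdot, \tu y)$ is constant on each maximal plateau $\{x : \length{x} = t\}$. Consequently $\tu f$ factors through the length, and I may define $\tu F(t, \tu y)$ unambiguously as the common value of $\tu f(x, \tu y)$ on the plateau $\length{x} = t$, yielding $\tu f(x, \tu y) = \tu F(\length{x}, \tu y)$.

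Next I would identify the recurrence satisfied by $\tu F$ by examining the unique transition $x = 2^t - 1 \mapsto x+1 = 2^t$ at which the length increases from $t$ to $t+1$. Plugging these values into the unfolded length-ODE gives $\tu f(2^t, \tu y) = \tu f(2^t - 1, \tu y) + \tu h(\tu f(2^t - 1, \tu y), 2^t - 1, \tu y)$, which by the factorization rewrites as $\tu F(t+1, \tu y) = \tu F(t, \tu y) + \tu h(\tu F(t, \tu y), 2^t - 1, \tu y)$, i.e., the announced discrete ODE $\dderiv{\tu F(t,\tu y)}{t} = \tu h(\tu F(t, \tu y), 2^t - 1, \tu y)$. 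The initial value $\tu F(1, \tu y) = \tu f(0, \tu y)$ is simply the instance of the factorization at $x = 0$, using the convention $\length{0} = 1$ implicit in the statement (otherwise the index shift would not match).

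Conceptually there is no real obstacle: the key mental step is the plateau observation that $\tu f$ only changes when the length does, after which the statement is a direct unfolding. The one technical subtlety worth checking is that the direct argument $x$ appearing in $\tu h$ at the transition point $x = 2^t - 1$ must be recoverable from $t$ alone for the factorization to go through on the right-hand side as well; this is immediate since $2^t - 1$ is itself a function of $t$, so passing from $\tu h(\cdot, x, \tu y)$ to $\tu h(\cdot, 2^t - 1, \tu y)$ introduces no ambiguity and the lemma follows.
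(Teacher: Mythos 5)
Your argument is correct, and since the paper itself states this lemma without proof (it is quoted from the cited earlier works), there is no in-paper proof to diverge from: the plateau observation together with the reparametrization $t=\ell(x)$ is exactly the intended change-of-variables argument, and your treatment of the initial condition at $t=1$ is the right reading of the statement. One small precision: under the convention $\ell(0)=1$ that you correctly identify as implicit, the transition from $x=0$ to $x=1$ has $x+1=2^{0}$ a power of two, so constancy of $\tu f$ on the plateau $\{0,1\}$ does not follow from your ``$x+1$ is not a power of $2$'' criterion (nor is the increment equal to $1$ at $x=2^{0}-1$); it follows instead from $\ell(1)-\ell(0)=0$ annihilating the increment, i.e.\ from the invariant you state at the end, that $\tu f$ changes only when $\ell$ does, which is the statement to use uniformly.
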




\section{Concepts from computable analysis}
\label{sec:defanalysecalculable}


When we say that a function $f: S_{1}  \times \dots \times S_{d} \to \R^{d'}$ is (respectively: polynomial-time) computable this will always be in the sense of computable analysis: see e.g. \cite{brattka2008tutorial,Wei00}. We recall here the basic concepts and definitions, mostly following the book \cite{Ko91}, whose subject is complexity theory in computable analysis. This section is basically repeating the formalization proposed in \cite{BlancBournezMCU22vantardise} done to mix complexity issues dealing with integer and real arguments: 
%
%
%
%
%
%
%
%
a dyadic  number $d$ is a rational number with  a finite binary expansion. That is to say $d=m / 2^{n}$ for some integers $m \in \Z$, $n\in \N$, $n \geq 0$. Let $\dyadic$ be the set of all dyadic rational numbers. We denote by $\dyadic_{n}$ the set of all dyadic rationals $d$ with a representation $s$ of precision $\operatorname{prec}(s)=n$; that is, $\dyadic_{n}=\left\{m \cdot 2^{-n} \mid m \in \Z\right\}$.

\begin{definition}[\cite{Ko91}]  \label{def:cinq} For each real number $x$, a function $\phi: \N \rightarrow \dyadic$ is said to binary converge to $x$ if  for all $n \in \N, \operatorname{prec}(\phi(n))=n$ and $|\phi(n)-x| \leq 2^{-n}$. Let $C F_{x}$ (Cauchy function) denotes the set of all functions binary converging to $x$.
\end{definition}

%
%
%
%
%

Intuitively, a Turing machine $M$ computes a real function $f$ the following way: 1. The input $x$ to $f$, represented by some $\phi \in C F_{x}$, is given to $M$ as an oracle; 2. The output precision $2^{-n}$ is given in the form of integer $n$  as the input to $M$; 3. The computation of $M$ usually takes two steps, though sometimes these two steps may be repeated for an indefinite number of times;
4. $M$ computes, from the output precision $2^{-n}$, the required input precision $2^{-m}$; 5. $M$ queries the oracle to get $\phi(m)$, such that $\|\phi(m)-x\| \leq 2^{-m}$, and computes from $\phi(m)$ an output $d \in \dyadic$ with $\|d-f(x)\| \leq$ $2^{-n}$.
%
More formally:

\begin{definition}[\cite{Ko91}] A real function $f: \R \rightarrow \R$ is computable if there is a function-oracle {TM} $M$ such that for each $x \in \R$ and each $\phi \in C F_{x}$, the function $\psi$ computed by $M$ with oracle $\phi$ (i.e., $\left.\psi(n)=M^{\phi}(n)\right)$ is in $C F_{f(x)}$. 
	\shortermcu{We say the function $f$ is computable on interval $[a, b]$ if the above condition holds for all $x \in[a, b]$.}
\end{definition}

\shortermcu{
	\begin{remark}
		Given some $x \in \R$, such  an oracle TM $M$ can determine some integer $X$ such that $x \in [-2^{X},2^{X}]$.
	\end{remark}
}

\olivier{Reste de MCU: Probablement superflus pour l'instant:
	The following concept plays a very important role:

	\begin{definition} \label{def:above}
		Let $f:[a, b] \rightarrow \R$ be a continuous function on $[a, b]$. Then, a function $m: \N \rightarrow \N$ is said to be a modulus function of $f$ on $[a, b]$ if for all $n \in \N$ and all $x, y \in[a, b]$, we have
		$$
		|x-y| \leq 2^{-m(n)} \Rightarrow|f(x)-f(y)| \leq 2^{-n}
		$$
	\end{definition}
	
	The following is well known (see e.g. \cite{Ko91} for a proof):

	\begin{theorem}
		A function $f: \R \rightarrow \R$ is computable iff there exist two recursive functions $m: \N \times \N \rightarrow \N$ and $\psi: \dyadic \times \N \rightarrow \dyadic$ such that
		\begin{enumerate}
			\item for all $k, n \in \N$ and all $x, y \in[-k, k],|x-y| \leq 2^{-m(k, n)}$ implies $|f(x)-f(y)| \leq 2^{-n}$, and
			\item  for all $d \in \dyadic$ and all $n \in \N,|\psi(d, n)-f(d)| \leq 2^{-n}$.
		\end{enumerate}
	\end{theorem}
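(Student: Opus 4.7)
The plan is to prove the two directions separately, using the standard round-trip between the oracle-machine formulation of Definition 4 and the more syntactic data $(m,\psi)$.

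For the direction $(\Rightarrow)$, assume a function-oracle machine $M$ computes $f$. To define $\psi(d,n)$, observe that every dyadic $d\in\dyadic$ admits a trivially computable Cauchy function $\phi_d\in CF_d$ (pad $d$ with trailing zeros so that $\operatorname{prec}(\phi_d(k))=k$ for all $k$). Then set $\psi(d,n)=M^{\phi_d}(n)$: by the defining property of $M$, this lies in $CF_{f(d)}$, giving condition (2). For the modulus $m(k,n)$, use that on each compact interval $[-k,k]$ the computable function $f$ is uniformly continuous. To produce $m(k,n)$ algorithmically, one dovetails the following search: for $j=1,2,\dots$, take the finite dyadic grid $G_j=\dyadic_j\cap[-k,k]$, compute $\psi(d,n+2)$ for each $d\in G_j$, and test whether any two dyadic points within distance $2^{-j}$ have computed approximations within distance $2^{-(n+1)}$. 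Uniform continuity guarantees some $j$ satisfies the test, and when it does we may safely return $m(k,n)=j+1$ (the extra precision absorbs the approximation error of $\psi$, turning the finite check on the grid into a genuine modulus on the full interval via the intermediate dyadic endpoints).

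For the direction $(\Leftarrow)$, given $(m,\psi)$ I build an oracle machine $M$ as follows. On oracle $\phi\in CF_x$ and input $n$, first determine an integer $k$ with $x\in[-k,k]$: query $\phi(0)$ and set $k=\lceil|\phi(0)|\rceil+1$, which is valid since $|x-\phi(0)|\le 1$. Then compute $M_1=m(k,n+2)$, query $d=\phi(M_1)\in\dyadic_{M_1}$, so $|d-x|\le 2^{-M_1}$ and both $d,x\in[-k,k]$; the modulus property gives $|f(d)-f(x)|\le 2^{-(n+2)}$. Next compute $e=\psi(d,n+2)$ with $|e-f(d)|\le 2^{-(n+2)}$, and finally output $M^\phi(n)$ as the dyadic $e'\in\dyadic_n$ obtained by truncating $e$ to precision $n$, which satisfies $|e'-e|\le 2^{-n-1}$ (for this one may want to increase the internal precisions by one more bit, but the point is routine). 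Combining the three error contributions yields $|M^\phi(n)-f(x)|\le 2^{-n}$, so $M^\phi\in CF_{f(x)}$ as required.

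The main obstacle is the first direction's construction of $m$: one must explain why the finite, computable test on a dyadic grid is sufficient to witness a global modulus on $[-k,k]$. The crucial observation is that any two points of $[-k,k]$ at distance $\le 2^{-(j+1)}$ are sandwiched between consecutive grid points of $G_j$, so checking the pairwise condition on $G_j$ (with the extra precision budget $2^{-(n+2)}$ to dominate the $\psi$-errors) upgrades to a uniform bound on the full interval. Termination of the search is then guaranteed by the fact that any computable function on a compact interval is uniformly continuous, a consequence of the very definition of computability via oracle machines, so the existence of the correct $j$ is not in doubt — only its algorithmic discovery is, and that is exactly what the dovetailed search accomplishes.
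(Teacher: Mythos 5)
Your backward direction is essentially sound (up to the harmless detail that $d=\phi(M_1)$ may land just outside $[-k,k]$, fixed by taking $k=\lceil|\phi(0)|\rceil+2$ or by stating the modulus on $[-k-1,k+1]$), and your definition $\psi(d,n)=M^{\phi_d}(n)$ for the forward direction is the standard one. The genuine gap is your construction of the modulus $m(k,n)$. Your grid search only inspects (approximations of) the values of $f$ at the points of $G_j$, and passing that test says nothing about the behaviour of $f$ strictly inside a grid cell: the ``sandwiching'' upgrade you invoke is false, because $f$ is not monotone, so $f$ at an interior point of $[d,d+2^{-j}]$ is in no way controlled by $f(d)$ and $f(d+2^{-j})$. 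Concretely, take $f(x)=\sin(2\pi 2^{N}x)$ with $N$ large: every point of every grid $G_j$ with $j\le N$ is a zero of $f$, so all computed values lie within $2^{-(n+2)}$ of $0$ and your test already passes at $j=1$, making your procedure return $m(k,n)=2$; yet the points $x=2^{-N-2}$ and $y=3\cdot 2^{-N-2}$ satisfy $|x-y|\le 2^{-2}$ while $|f(x)-f(y)|=2$. So your dovetailed search terminates, but what it certifies is not a modulus: soundness, not termination, is the problem, and no amount of extra precision budget on $\psi$ repairs it, since the defect is that $f$ is never sampled between grid points.

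For the record, the paper does not prove this statement itself; it quotes it as well known and defers to Ko's book. The standard argument there for the missing half (computable $f$ has a \emph{recursive} modulus on a compact interval) must use the machine $M$ itself rather than sampled values of $f$: one simulates $M$ on canonical oracles $\phi_d$ for dyadic points $d$ of a fine grid and records the maximal precision to which the oracle is queried (the ``use''); the computation, hence the output, is stable on a neighbourhood of $d$ whose radius is determined by that use, and a compactness/covering argument over $[-k,k]$, carried out effectively by refining the grid until the stability neighbourhoods cover the interval, yields a recursive $m(k,n)$. That use information is exactly what your value-only test lacks, and is what makes the search sound.
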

}

\shortermcu{
	\subsection{On computable analysis: Complexity}
}

%
%
%
Assume that $M$ is an oracle machine which computes $f$ on a do$\operatorname{main} G$. For any oracle $\phi \in C F_{x}$, with $x \in G$, let $T_{M}(\phi, n)$ be the number of steps for $M$ to halt on input $n$ with oracle $\phi$, and $T_{M}^{\prime}(x, n)=\max \left\{T_{M}(\phi, n) \mid \phi \in C F_{x}\right\}$. The time complexity of $f$ is defined as follows:

\begin{definition}[\cite{Ko91}]
	Let $G$ be  bounded closed interval $[a, b]$. Let $f: G \rightarrow \R$ be a computable function. Then, we say that the time complexity of $f$ on $G$ is bounded by a function $t: G \times \N \rightarrow \N$ if there exists an oracle TM $M$ which computes $f$ 
	such that for all $x \in G$ and all $n>0$, $T_{M}^{\prime}(x, n) \leq t(x, n)$.
\end{definition}

In other words, the idea is to measure the time complexity of a real function based on two parameters: input real number $x$ and output precision $2^{-n}$. Sometimes, it  is more convenient to simplify the complexity measure to be based on only one parameter, the output precision. For this purpose, we say the uniform time complexity of $f$ on $G$ is bounded by a function $t^{\prime}: \N \rightarrow \N$ if the time complexity of $f$ on $G$ is bounded by a function $t: G \times \N \rightarrow \N$ with the property that for all $x \in G$, $t(x, n) \leq t^{\prime}(n)$.

{
	However, if we do so, it is important to realize that if we had taken $G=\R$ in the previous definition, for unbounded functions $f$, the uniform time complexity would not have existed, because the number of moves required to write down the integral part of $f(x)$ grows as $x$ approaches $+\infty$ or $-\infty$. Therefore, the approach of \cite{Ko91} is to do as follows (The bounds $-2^{X}$ and $2^{X}$ are somewhat arbitrary, but are  chosen here  because the binary expansion of any $x \in\left(-2^{n}, 2^{n}\right)$ has at most $n$ bits in the integral part).
	%
	%
	\begin{definition}[Adapted from \cite{Ko91}]  For functions $f(x)$ whose domain is $\R$, 
		we say that the (non-uniform) time complexity of $f$ is bounded by a function $t^{\prime}: \N^{2} \rightarrow \N$ if the time complexity of $f$ on $\left[-2^{X}, 2^{X}\right]$ is bounded by a function $t: \N^{2} \rightarrow \N$ such that $t(x, n) \leq t^{\prime}(X, n)$ for all $x \in\left[-2^{X}, 2^{X}\right]$. 
	\end{definition}

	\olivier{RESTE DE MCU. probablement superflus car on parle pas d'espace. Mais prédisement, il FAUT EN PARLER !
		The space complexity of a real function is defined in a similar way. We say the space complexity of $f: G \rightarrow \R$ is bounded by a function $s: G \times \N \rightarrow \N$ if there is an oracle TM $M$ which computes $f$ such that for any input $n$ and any oracle $\phi \in C F_{x}, M^{\phi}(n)$ uses $\leq s(x, n)$ cells, and the uniform space complexity of $f$ is bounded by $s^{\prime}: \N \rightarrow \N$ if for all $x \in G$ and all $\phi \in C F_{x}, M^{\phi}(n)$ uses $\leq s^{\prime}(n)$ cells.
	}
	
	%
	
	As we want to talk about general functions in $\Lesfonctionsquoi$, we extend the approach to more general functions.  
	%
	%
	(for conciseness, when $\tu x=(x_{1},\dots,x_{p})$, $\tu X= (X_{1},\dots, X_{p})$, we write
	$\tu x \in [-2^{\tu X}, 2^{\tu X}]$ as a shortcut for $x_{1} \in\left[-2^{X_{1}}, 2^{X_{1}}\right]$,  \dots, $x_{p} \in\left[-2^{X_{p}}, 2^{X_{p}}\right]$). 
	
	
	\begin{definition}[Complexity for real functions: general case]   \label{def:bonendroit} Consider a function  $f(x_{1},\dots$ $,x_{p}$, $n_{1},\dots,n_{q})$ whose domain is $\R^{p} \times \N^{q}$. 
		We say that the (non-uniform) time complexity of $f$ is bounded by a function $t^{\prime}: \N^{p+q+1} \rightarrow \N$ if the time complexity of $f(\cdot,\dots,\cdot,\ell(n_{1}),\dots,\ell(n_{q}))$ on $\left[-2^{X_{1}}, 2^{X_{1}}\right] \times \dots \left[-2^{X_{p}}, 2^{X_{p}}\right] $  
		is bounded by a function $t(\cdot,\dots,\cdot,\ell(n_{1}),\dots,\ell(n_{q}),\cdot): \N^{p} \times \N \to \N$ such that 
		$$ t(\tu x,\ell(n_{1}),\dots,\ell(n_{q}), n) \leq t^{\prime}(\tu X,\ell(n_{1}), \dots,\ell(n_{q}), n)$$
		whenever  $\tu x \in \left[-2^{\tu X}, 2^{\tu X}\right].$
		We say that $f$ is polynomial time computable if $t^{\prime}$ can be chosen as a polynomial. 
		We say that a vectorial function is polynomial time computable iff all its components are. 
	\end{definition}
	
	\shortermcu{
		\begin{remark}
			There is some important {subtlety}: When considering $f: \N \to \Q$, as $\Q \subset \R$, stating $f$ is computable may mean two things: in the classical sense, given integer $y$,  i.e. one can compute $p_y$ and $q_{y}$ some integers such that $f(y)=p_{y}/q_{y}$, or that it is computable in the sense of computable analysis: given some precision $n$,  given arbitrary $y$, and $n$ we can provide some rational (or even dyadic) $q_{n}$ such that $|q_{n}-f(y)| \leq 2^{-n}$. As we said, we always consider the latter.
		\end{remark}
	}
	
	We do that so this measures of complexity extends the usual complexity measure for functions over the integers, where complexity of integers is measured with respects of their lengths, and over the reals, where complexity is measured with respect to their approximation.
	%
	In particular, in the specific case of a function $f: \N^{d} \to \R^{d'}$, that basically means there is some polynomial $t': \N^{d+1} \to \N$ so that the time complexity of producing some dyadic approximating $f(\tu m)$ at precision $2^{-n}$ is bounded by $t'(\ell(m_{1}),\dots,\ell(m_{d}),n)$. 
	
	
	%
	
	In other words, when considering that a function is polynomial time computable, it is in the length of all its integer arguments, as this is the usual convention. However, we need sometimes to consider also polynomial dependency directly
	in one of some specific integer argument, say $n_{i}$,  and not on its length $\ell(n_{i})$. We say that \emph{the  function is polynomial time computable, \unaire{n_{i}}} when this holds (keeping possible other integer arguments $n_{j}$, $j \neq i$, measured by their length). 
	
	\olivier{Commenté, car pas besoin, je pense (enfin j'espère)
		A well-known observation is the following.
		
		\begin{theorem} Consider $\tu f$ as in Definition \eqref{def:bonendroit} computable in polynomial time. Then $\tu f$ has a polynomial modulus function of continuity, that is to say there is a polynomial function $m_{\tu f}: \N^{p+q+1}\rightarrow \N$ such that for all $\tu x,\tu y$ and all $n>0$, $\|\tu x-\tu y\| \leq 2^{-m_{\tu f}(\tu X,\ell(n_{1}),\dots,\ell(n_{q}), n)}$ implies 
			$\|\tu f(\tu x,n_{1}, \dots,n_{q})-\tu f(\tu y,n_{1}, \dots,n_{q})\| \leq 2^{-n}$,
			whenever $\tu x,\tu y  \in\left[-2^{\tu X}, 2^{\tu X}\right]$.
		\end{theorem}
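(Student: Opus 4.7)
The plan is to exploit the classical observation from the complexity theory of real functions: an oracle Turing machine running in time $t$ can only inspect at most $t$ bits of each oracle reply, so it cannot distinguish two real inputs whose Cauchy representations can be arranged to agree on all queries of precision at most $t$. First I would invoke the hypothesis: there is an oracle machine $M$ computing $\tu f$ and a polynomial $t'$ such that, uniformly over $\tu x \in [-2^{\tu X}, 2^{\tu X}]$ and over oracles $\phi \in CF_{\tu x}$, the machine $M^{\phi}$ on input $n+1$ halts within $t^{\star} := t'(\tu X, \ell(n_{1}), \ldots, \ell(n_{q}), n+1)$ steps and therefore issues at most $t^{\star}$ oracle queries, each of precision bounded by $t^{\star}$.

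I would then define
\[
m_{\tu f}(\tu X, \ell(n_{1}), \ldots, \ell(n_{q}), n) \;:=\; t'(\tu X, \ell(n_{1}), \ldots, \ell(n_{q}), n+1) + 2,
\]
which is polynomial in its arguments. Fix $\tu x, \tu y \in [-2^{\tu X}, 2^{\tu X}]$ with $\|\tu x - \tu y\| \leq 2^{-m_{\tu f}}$. The crucial step is to build oracles $\phi_{x} \in CF_{\tu x}$ and $\phi_{y} \in CF_{\tu y}$ that coincide on every query at precision $k \leq t^{\star}$. For each such $k$ and each coordinate $i$, the set of elements of $\dyadic_{k}$ valid as a Cauchy approximation to $x_{i}$ forms a dyadic interval of length $2 \cdot 2^{-k}$, and likewise for $y_{i}$. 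Since $|x_{i} - y_{i}| \leq 2^{-(t^{\star}+2)} \leq 2^{-(k+2)}$, the two intervals overlap in a segment of length at least $(7/4) \cdot 2^{-k} > 2^{-k}$, so they share a common multiple of $2^{-k}$. Picking such a common value coordinatewise yields a shared response $d_{k}$ to the query at precision $k$; for $k > t^{\star}$ we extend each oracle arbitrarily into a valid Cauchy function for its respective input.

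To conclude, the computations $M^{\phi_{x}}(n+1)$ and $M^{\phi_{y}}(n+1)$ follow identical execution traces, since they query the oracle at the same precisions and receive identical answers. They therefore output the same dyadic $d$, which by correctness of $M$ must satisfy both $\|d - \tu f(\tu x, n_{1}, \ldots, n_{q})\| \leq 2^{-(n+1)}$ and $\|d - \tu f(\tu y, n_{1}, \ldots, n_{q})\| \leq 2^{-(n+1)}$. The triangle inequality then delivers $\|\tu f(\tu x, n_{1}, \ldots, n_{q}) - \tu f(\tu y, n_{1}, \ldots, n_{q})\| \leq 2^{-n}$, as required.

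The only delicate point is the coupling of the two oracles, and it is resolved by the additive $+2$ of slack baked into $m_{\tu f}$: this margin is exactly what guarantees that at every queried precision $k \leq t^{\star}$ the sets of valid approximations to $\tu x$ and $\tu y$ overlap in a window large enough to contain a multiple of $2^{-k}$. Without such slack one could imagine $x_{i}$ and $y_{i}$ straddling the midpoint between two consecutive $k$-precision dyadics and admitting no common choice; with the $+2$ margin, such obstructions cannot arise.
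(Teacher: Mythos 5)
Your argument is correct, and it is the classical oracle-coupling proof: bound the precision of all oracle queries by the running time, couple the two Cauchy representations so that they return identical answers at every queried precision, conclude that the two runs of the machine produce the same dyadic output, and finish with the triangle inequality (run at precision $n+1$ to absorb the factor $2$). Note that the paper does not actually prove this statement: it is presented as a well-known observation from computable analysis, deferring to \cite{Ko91}, so there is no in-paper proof to compare against --- what you wrote is essentially the textbook proof of Ko's modulus-of-continuity theorem, adapted to the mixed real/integer arguments and the $[-2^{\tu X},2^{\tu X}]$ normalization of Definition~\ref{def:bonendroit}, and your overlap computation ($(7/4)\cdot 2^{-k} > 2^{-k}$, hence a common point of $\dyadic_k$) correctly justifies the coupling. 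The one point worth flagging is the step ``halts within $t^{\star}$ steps and therefore issues queries of precision at most $t^{\star}$'': this is a convention of the oracle-machine model (a query of precision $k$ costs at least $k$ steps, e.g.\ because the precision is supplied in unary or the answer with $k$ fractional bits must be written/read), which holds in the framework of \cite{Ko91} that the paper adopts, but it deserves an explicit sentence, since in a model where the requested precision is written in binary the clean ``identical traces'' coupling needs a slightly more careful argument about which answer bits the machine can actually inspect within its time bound.
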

	}

	\section{Some results about various functions}
	\label{sec:functions}

	\olivier{Besoin d'une def. Est-ce clair?}
	Call \motnouv{affine function} a function $f: \R^{n} \to \R$ of the form $f(x_{1},\dots,x_{n})= w_{1} x_{1} + \dots + w_{n} x_{n} +h$, for some real $w_{1},\dots, w_{n}$ or a function $f: \R^{n} \to \R^{m}$ whose $m$ components are of this form. We call \motnouv{neural function} the smallest class of functions that is obtained by considering an affine function, or an affine function where one of several  of its variable has  been replaced by $\signb{g}$ where $g$ is inductively a neural function.
	\begin{remark}
		The idea is to capture the class of functions computed by some (non-recurrent) formal neural network: 
		every neural function can clearly be interpreted as a formal neural network where the activation function is the function $\signb{}$. 
		As an example, $x+ \signb{x+2 y +3 \signb{x}}$ is a neural function, which can be considered as a depth $2$ formal neural network. A function such as $x^{2}+2$ is not a neural function, as it involves a multiplication.
	\end{remark}
	\manondufutur{
		A \motnouv{neural function} is a neural function with the $\tanh$ sigmoid.
	}
	
	\manondufutur{
		Ai pensé à un moment pour éviter division par 3.
		
		Answer:  Right in the current writing, we need division by 3. This comes from the 3/4 sig(1/8,7/8,x) line 746, as pointed out. A solution would be to put Division by 3 among the basis functions in the definition l 140.
		
		With this corrected, our main statements and claim are valid: multiplication can be avoided, and we do have a characterization of polynomial time, without changing the text.
		
		But actually, there is no need to do so. Lemma 22 was stated in order to provide an easy proof of Lemma 23  (with most details in the appendix).  If division by 3 is not authorized as in definition l 140, it can indeed be avoided by using density of the dyadics: if 3/4 sig(1/8,7/8,x) is replaced by beta' sig(1/8,1/8 + 1/beta,x), (with suitable dyadics beta,beta', namely beta'=1, beta=5/4 are fine) then we would not compute {x-1/8} but alpha {x-1/8} for dyadic alpha=beta/beta'. Then, fractional parts should be multipled by this factor alpha in the statement of Lemma 22. Now, replacing σi(2n, x) = x−ξi(2n, x) by σi(2n, x) = alpha x−ξi(2n, x) would still lead to Lemma 23, as verified with Maple (correct statements have also been updated on ArXiv meanwhile for Lemma 22).

		MAIS AURAIT partieentier de \alphax x et pas de x.
		
		Maple montre que ok pour alpha=5/4. Mais pondéfration ensuite doit faire disparaitre le 5.
		
		And then all other statements and proofs would remain valid.
		
	}
	
	
	A key part of our proofs is the construction of very specific functions in $\manonclasslightidealsigmoid$: we write $\{x\}$ for the fractional part of real $x$, i.e. $\{x\}=x-\lfloor x \rfloor$. We provide more details and show some graphical representations of most of them in the appendix, in order in particular to show that these functions are sometimes highly non-trivial.
	\begin{lemma}\label{lem:xi}
		There exists $\xi_1, \xi_2 : \N \times \R \mapsto \R ~  \in \manonclasslightidealsigmoid$  
		such that, for all $n\in \N$ and $x\in [- 2^{n} , 2^{n}] $, 
		\shortermcu{\begin{itemize}
				\item} whenever $ x \in [\lfloor x \rfloor - \frac{1}{2}, \lfloor x \rfloor + \frac{1}{4}] $  , $\xi_1(2^n,x) = \{ x \} $, 
			\shortermcu{\item} and whenever $ x \in [\lfloor x \rfloor, \lfloor x \rfloor + \frac{3}{4}] $  , $\xi_2(2^n,x) = \{ x \} $.
			\shortermcu{\end{itemize}}
	\end{lemma}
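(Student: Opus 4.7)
My plan is to construct $\xi_2$ first by simulating, through a single linear length ODE, the classical bit-by-bit extraction of $\lfloor x\rfloor$ from the most significant bit down to bit $0$; the hypothesis $\{x\}\le 3/4$ is precisely what makes every bit-comparison cleanly decidable by $\signb{}$. Once $\xi_2$ is in hand, I will obtain $\xi_1$ by a translation by $1/2$.

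Concretely, setting $N=2^n$ (producible in $\manonclasslightidealsigmoid$ by the constructions of \THEPAPIERS), I would introduce two coupled state variables $(r,p)\colon\N\times\N\times\R\to\R^2$ via a linear length ODE with initial conditions $r(N,0,x)=x$, $p(N,0,x)=N$ and updates
\[
\dderivL{p(N,k,x)} \;=\; -\tfrac{1}{2}\,p(N,k,x), \qquad
\dderivL{r(N,k,x)} \;=\; -\,p(N,k,x)\cdot\signbb{\,4\bigl(r(N,k,x)-p(N,k,x)\bigr)+\tfrac{7}{8}\,}.
\]
Both right-hand sides are sg-polynomial expressions essentially linear in $(r,p)$ in the sense of Definition~\ref{def:essentiallylinear}: the coefficient of $p$ in the $r$-update is a $\signb{}$ whose argument has $r$ and $p$ only under the sign, hence it is essentially constant in $(r,p)$; and the $p$-update is linear with constant coefficient $-1/2$. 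The constants $\tfrac{7}{8}$, the factor $4$ (viewed as $y\mapsto y+y+y+y$), and the halving are all available from $\{\mathbf{0},\mathbf{1},+,-,\tfrac{x}{2},\tfrac{x}{3}\}$, so this is a valid linear length ODE.

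The heart of the argument is an invariant maintained across the $n+1$ length-jumps (occurring at $k=1,2,4,\dots,2^n$): after the $j$-th jump one has $p=N/2^{j}$ and $r=x-\sum_{i\ge n-j+1}b_i\,2^i$, where $b_i$ is the $i$-th bit of $\lfloor x\rfloor$; moreover $r\in[0,\,N/2^{j}-1/4]$ as long as $x\ge 0$ and $\{x\}\le 3/4$. This invariant places $4(r-p)+\tfrac{7}{8}$ either in $(-\infty,-\tfrac{1}{8}]$ (when the next bit is $0$) or in $[\tfrac{7}{8},+\infty)$ (when it is $1$), so $\signb{}$ returns exactly $0$ or $1$ and the correct bit is peeled off. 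After the last jump at $k=N$ one reaches $r=\{x\}$, so I would set $\xi_2(N,x):=r(N,N,x)$ in the non-negative case; for $x<0$ I reduce to $x\ge 0$ by a preliminary translation and take $\xi_2(N,x):=r(2N,2N,x+N)$, which leaves $\{x\}$ unchanged since $N\in\Z$. Finally, for $\xi_1$ the condition $x\in[\lfloor x\rfloor-\tfrac{1}{2},\lfloor x\rfloor+\tfrac{1}{4}]$ becomes a $\xi_2$-condition under the shift $x\mapsto x+\tfrac{1}{2}$, leading simply to $\xi_1(N,x):=\xi_2(2N,x+\tfrac{1}{2})-\tfrac{1}{2}$.

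The main obstacle I anticipate is exactly this cleanness invariant, which must be verified at each one of the $n+1$ iterations. The slack $\tfrac{1}{4}$ offered by the hypothesis $\{x\}\le 3/4$ is precisely what separates the ``bit$\,=0$'' and ``bit$\,=1$'' configurations, matching the transition width of $\signb{}$; the constants $4$ and $\tfrac{7}{8}$ inside $\signb{}$ are tuned to push both configurations into the flat zones $\{y<\tfrac{1}{4}\}$ and $\{y>\tfrac{3}{4}\}$. Carrying this invariant through each halving of $p$, while simultaneously checking the essential-linearity condition in $(r,p)$ required by the linear length ODE schema, is where the bulk of the verification lies.
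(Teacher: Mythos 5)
Your construction is correct and, at heart, it is the same mechanism as the paper's proof: the paper builds $\xi'$ by composing the maps $F(K,x)=x-K\cdot\signb{\frac14+4(x-K)}$ for $K=2^{n-1},2^{n-2},\dots,2^{0}$, i.e.\ it also peels off powers of two from the most significant one down, with the $\signb{}$ threshold made exact by the slack in the fractional part, and it realizes the iteration as a linear length ODE; the paper then obtains $\xi_1,\xi_2$ by affine shifts of $\xi'$, just as you obtain $\xi_1$ from $\xi_2$. The packaging differs in two respects, both legitimate: you carry the current power $p$ as a second ODE component halved at each length-jump (which avoids the paper's need for the countdown quantity $2^{n-1-t}$ inside the recurrence $H(2^{t+1},2^{n},x)=F(2^{n-1-t},H(2^{t},2^{n},x))$), and you handle negative inputs by translating by $N$ and running with parameter $2N$, where the paper instead symmetrizes via $\xi'(N+1,x)-\xi'(N+1,-x)+\frac34-\frac34\signb{\frac14+4x}$. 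Two small repairs are needed in your write-up. First, the invariant as stated is mis-indexed: after the $j$-th jump the bits of weight $\ge N/2^{j-1}$ have been cleared but the bit of weight $N/2^{j}$ has not yet been processed, so the correct bound is $r\le N/2^{j-1}-\frac14$, i.e.\ $r\le 2p-\frac14$ with $p$ the current power (your bound $r\le N/2^{j}-\frac14$ would contradict your own ``bit $=1$'' branch $r\ge p$); the separation into $(-\infty,-\frac18]$ versus $[\frac78,+\infty)$ that you actually use follows from the corrected bound together with $\{x\}\le\frac34$, so the induction does close. Second, do not define $\xi_2$ by a case split on the sign of $x$ (no exact sign test is available in $\manonclasslightidealsigmoid$); simply take $\xi_2(N,x)=r(2N,2N,x+N)$ uniformly, since for $x\ge 0$ this shifted run computes $\{x+N\}=\{x\}$ just as well.
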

	
	\begin{proof}
		Consider 
		$\xi_{1}(N,x)=\xi(N,x-\frac58) -\frac14$ 
		and $\xi_{2}(N,x)=\xi(N,x+\frac18)$, where $\xi(N,x)$ $=\xi'(N+1,x)-\xi'(N+1,-x)+\frac34-\frac34 \signb{\frac14+4x}$ 
		where $\xi'(2^{0},x) = \frac34 \signb{\frac16+\frac23 x}$, 
		and $\xi'(2^{n+1},x)= \xi'(2^{n},F(2^{n},x))$, with
		$F(K,x)= x- K.\signb{\frac14+4(x-K)}$, 
		considering $F(0,x)=\xi'(2^{0},x)$.
		A proof by induction (more intuitions and details in appendix) shows that it satisfies our claims. It remains to prove that 
		this corresponds to a function in $\manonclasslightidealsigmoid$, but the key is to observe that, from an easy induction,  $\xi'(2^{n},x)= F(2^{0},F(2^{1},F(2^{2}(\dots, F(2^{n-1},x)))))$, and hence
		can be obtained as $H(2^{n-1},2^{n},x)$ with $H$ defined by some linear length ordinary differential equation using the derivative on its first variable expressing the recurrence
		$H(2^{0}, 2^n, x) = F(2^{n-1}, x)$ and $H(2^{t+1}, 2^n, x) = F(2^{n-1-t}, H(2^{t}, 2^n, x))$.
	\end{proof}
	
	Considering $\sigma_i (2^n,x) = x - \xi_i(2^n,x) $, we obtain next lemma. Using  recursive constructions, we can also get (details and graphical representations in appendix).

	\begin{lemma}\label{lem:i}
		There exists $\sigma_1, \sigma_2 : \N \times \R \mapsto \R ~  \in \manonclasslightidealsigmoid$  
		such that, for all $n\in \N$ and $x\in [-2^{n} , 2^{n}]$, 
		\shortermcu{\begin{itemize}
				\item} whenever  $ x \in I_{1}=[\lfloor x \rfloor - \frac{1}{2}, \lfloor x \rfloor + \frac{1}{4}] $  , $\sigma_1(2^n,x) = \lfloor x \rfloor $, 
			\shortermcu{\item} and whenever  $ x \in I_{2}=[\lfloor x \rfloor, \lfloor x \rfloor + \frac{3}{4}] $  , $\sigma_2(2^n,x) = \lfloor x \rfloor $.
			\shortermcu{\end{itemize}}
	\end{lemma}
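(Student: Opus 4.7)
The plan is to follow the hint given immediately before the statement and set
\[
\sigma_i(N,x) := x - \xi_i(N,x) \qquad (i=1,2),
\]
where $\xi_1,\xi_2 \in \manonclasslightidealsigmoid$ are the functions produced by Lemma~\ref{lem:xi}. The work is then essentially bookkeeping: (i) showing $\sigma_i$ lies in $\manonclasslightidealsigmoid$, and (ii) checking the pointwise identity on $I_1$, $I_2$.

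For membership, I would write $\sigma_i$ as the composition
\[
\sigma_i = \minus \circ \bigl(\projection{2}{2},\ \xi_i \circ (\projection{1}{2}, \projection{2}{2})\bigr),
\]
so that $\sigma_i$ is built from $\xi_i$, the projections $\projection{j}{2}$, and the subtraction $\minus$, using only composition. All of these are either basis functions or come from Lemma~\ref{lem:xi}, and the composition is well-typed (inputs in $\N \times \R$, output in $\R$), so $\sigma_i \in \manonclasslightidealsigmoid$.

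For correctness, fix $n\in\N$ and $x\in[-2^n,2^n]$. If $x\in I_1=[\lfloor x\rfloor-\tfrac12, \lfloor x\rfloor+\tfrac14]$, Lemma~\ref{lem:xi} gives $\xi_1(2^n,x)=\{x\}=x-\lfloor x\rfloor$, whence
\[
\sigma_1(2^n,x) = x - \xi_1(2^n,x) = x - (x-\lfloor x\rfloor) = \lfloor x\rfloor.
\]
The identical reasoning on $I_2$ with $\xi_2$ yields $\sigma_2(2^n,x)=\lfloor x\rfloor$. Note in particular that the case $x<0$ is covered for free: since Lemma~\ref{lem:xi} is stated for all $x\in[-2^n,2^n]$ and uses the genuine floor (so that $\{x\}\ge 0$ even for negative $x$), the identity $x-\{x\}=\lfloor x\rfloor$ is just the definition of the fractional part.

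I do not expect any genuine obstacle here: the only place where care is needed is to ensure that the composition is set up with the right arities (the first argument $N=2^n$ is an integer argument that must be threaded through $\projection{1}{2}$ into $\xi_i$), and that the domain restriction $x\in[-2^n,2^n]$ in the conclusion matches the domain restriction of Lemma~\ref{lem:xi}, which it does by construction. All the real work (the delicate recursion through $F$ and the doubling scheme $H$) has already been discharged in Lemma~\ref{lem:xi}.
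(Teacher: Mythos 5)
Your proposal is correct and is exactly the paper's argument: the paper also defines $\sigma_i(2^n,x)=x-\xi_i(2^n,x)$ and lets Lemma~\ref{lem:xi} do all the work. The only thing you add is the explicit (and harmless) bookkeeping for closure under composition, which the paper leaves implicit.
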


	\begin{lemma}\label{lem:mod2}
		There exists $\mod_{2} : \N\times\R \mapsto [0,1]   \in \manonclasslightidealsigmoid$  
		such that for all $n\in\N$, $x\in [-2^{n} , 2^{n}]$, 
		\shortermcu{\begin{itemize}
				\item} whenever  $ x \in [\lfloor x \rfloor - \frac{1}{4}, \lfloor x \rfloor + \frac{1}{2}] $, $\mod_{2}(x)$ is $\lfloor x \rfloor$ modulo $2$.
			\shortermcu{\end{itemize}}
	\end{lemma}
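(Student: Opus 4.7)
The plan is to build $\mod_2$ by nesting $\sigma_2$ from Lemma~\ref{lem:i} with itself, exploiting the identity $k \bmod 2 = k - 2\lfloor k/2\rfloor$. The construction has two steps.

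First, I recover the integer $k$ with which $x$ is associated. Given $x \in [k - 1/4, k + 1/2]$ (the hypothesis of the lemma, with $k$ playing the role of ``$\lfloor x \rfloor$''), the shifted value $x + 1/4$ lies in $[k, k + 3/4]$, which is exactly the interval $I_2$ of Lemma~\ref{lem:i}; hence $\sigma_2(2^{n+1}, x + 1/4) = k$ (I use $2^{n+1}$ so that the shift does not take $x + 1/4$ out of $[-2^{n+1}, 2^{n+1}]$). Thus $k$ is recovered exactly as an integer-valued expression in $\manonclasslightidealsigmoid$.

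Second, I extract $\lfloor k/2 \rfloor$ from $k$ by a further application of $\sigma_2$. Writing $m = \lfloor k/2 \rfloor$, we have $k/2 \in \{m, m + 1/2\}$ depending on the parity of $k$, and in either case $k/2 \in [m, m + 3/4]$; Lemma~\ref{lem:i} then gives $\sigma_2(2^{n+1}, k/2) = m$. Combining the two steps, I define
\[
\mod_2(2^n, x) \;:=\; \signbb{\sigma_2(2^{n+1}, x + \tfrac{1}{4}) \;-\; 2\,\sigma_2\!\left(2^{n+1},\; \tfrac{1}{2}\sigma_2(2^{n+1}, x + \tfrac{1}{4})\right)},
\]
where $2^{n+1}$ is built from $2^n$ by $2^{n+1} = 2^n + 2^n$. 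On the good region this expression equals $k - 2m = k \bmod 2 \in \{0,1\}$ even before the outer $\signb{}$; the outer $\signb{}$ leaves the values $0$ and $1$ fixed while forcing the output into $[0,1]$ elsewhere, as required by the declared codomain.

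Every ingredient---addition, subtraction, division by $2$, $\signb{}$, $\sigma_2$, and composition---belongs to $\manonclasslightidealsigmoid$, so $\mod_2 \in \manonclasslightidealsigmoid$; reading $2a$ as $a + a$ shows no multiplication is needed. The construction is routine once Lemma~\ref{lem:i} is available, and the only mildly subtle point is the choice of shift $1/4$, which aligns the prescribed good region of $\mod_2$ with that of $\sigma_2$, via $[k - 1/4, k + 1/2] + 1/4 = [k, k + 3/4] = I_2$. I do not expect any serious obstacle.
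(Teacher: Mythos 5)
Your construction is correct, but it is not the route the paper takes. The paper's proof is a one-liner that reuses the function $\lambda$ of Lemma~\ref{lem:lambda}: it sets $\mod_2(N,x)=\lambda(N,\tfrac12 x-\tfrac34)$, so parity is read off by rescaling the argument of the ``quarter-detector'' $\lambda$, and the codomain $[0,1]$ comes for free since $\lambda$ already maps into $[0,1]$. You instead bypass $\lambda$ entirely and work only with $\sigma_2$ of Lemma~\ref{lem:i}, using $k \bmod 2 = k-2\lfloor k/2\rfloor$: the shift by $1/4$ correctly aligns $[k-\tfrac14,k+\tfrac12]$ with $I_2$, the nested call exploits that $k/2$ always has fractional part in $\{0,\tfrac12\}\subset[0,\tfrac34]$ (so it also handles negative $k$, giving the representative in $\{0,1\}$), the doubling of the scale parameter to $2^{n+1}$ keeps all arguments inside the required range, and the outer $\signb{}$ fixes $0$ and $1$ while clamping the value into $[0,1]$ off the good region; writing $2a=a+a$ and $1/4$ via two halvings of $\mathbf{1}$ keeps everything inside $\manonclasslightidealsigmoid$ without multiplication. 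What each approach buys: the paper's version is shorter and piggybacks on $\lambda$, which is needed anyway for the barycentric argument, whereas yours is self-contained given Lemma~\ref{lem:i} alone and is easier to verify explicitly (including at negative integers and at the interval endpoints), at the mild cost of a nested composition and the extra clamp. Amusingly, your intermediate quantity $\sigma_2\bigl(2N,\tfrac12\sigma_2(2N,x+\tfrac14)\bigr)$ is essentially the $\div_2$ of Lemma~\ref{lem:div2}, so you reverse the paper's dependency: it derives $\div_2$ from $\mod_2$, you derive $\mod_2$ from a direct computation of $\lfloor k/2\rfloor$.
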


	\begin{lemma}\label{lem:div2}
		There exists $\div_{2} : \N\times\R \mapsto [0,1]  \in \manonclasslightidealsigmoid$  
		such that for all $n\in\N$, $x\in [-2^{n} , 2^{n}]$, 
		\shortermcu{\begin{itemize}
				\item} whenever  $ x \in [\lfloor x \rfloor, \lfloor x \rfloor + \frac{1}{2}] $, $\div_{2}(x)$ is the integer division of $\lfloor x \rfloor$ by $2$.
			\shortermcu{\end{itemize}}
	\end{lemma}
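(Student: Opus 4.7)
The plan is to build $\div_2$ directly from the two previous lemmas by the algebraic identity $\lfloor m/2 \rfloor = (m - (m \bmod 2))/2$ for integers $m$. More precisely, I would define
\[
\div_{2}(2^{n}, x) \;=\; \frac{\sigma_{2}(2^{n}, x) - \mod_{2}(2^{n}, x)}{2},
\]
using subtraction, the basis function $x/2$ (both in $\manonclasslightidealsigmoid$), and the functions $\sigma_{2}$ and $\mod_{2}$ already constructed in Lemmas \ref{lem:i} and \ref{lem:mod2}. Since $\manonclasslightidealsigmoid$ is closed under composition and contains $\minus$ and $x/2$, the resulting function lies in $\manonclasslightidealsigmoid$.

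The correctness check is then a matter of verifying that the required input range is jointly covered by the applicability ranges of $\sigma_{2}$ and $\mod_{2}$. For $x \in [\lfloor x \rfloor, \lfloor x \rfloor + \tfrac{1}{2}]$, which is the hypothesis of the lemma, we have $x \in I_{2} = [\lfloor x \rfloor, \lfloor x \rfloor + \tfrac{3}{4}]$, so Lemma \ref{lem:i} gives $\sigma_{2}(2^{n}, x) = \lfloor x \rfloor$. Simultaneously, $x \in [\lfloor x \rfloor - \tfrac{1}{4}, \lfloor x \rfloor + \tfrac{1}{2}]$, so Lemma \ref{lem:mod2} gives $\mod_{2}(2^{n}, x) = \lfloor x \rfloor \bmod 2$. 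Plugging in yields
\[
\div_{2}(2^{n}, x) \;=\; \frac{\lfloor x \rfloor - (\lfloor x \rfloor \bmod 2)}{2} \;=\; \left\lfloor \frac{\lfloor x \rfloor}{2} \right\rfloor,
\]
which is the desired integer division, and this holds uniformly on $x \in [-2^{n}, 2^{n}]$ since the guarantees of $\sigma_{2}$ and $\mod_{2}$ themselves hold uniformly on that range.

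There is no real obstacle here; the construction is just algebraic glue on top of the genuinely nontrivial Lemmas \ref{lem:i} and \ref{lem:mod2}, whose constructions via iterated $\signb{}$ and a change of variable already live in $\manonclasslightidealsigmoid$. The only thing to verify with any care is the range overlap, which as above is immediate from $[\lfloor x \rfloor, \lfloor x \rfloor + \tfrac{1}{2}] \subseteq I_{2} \cap [\lfloor x \rfloor - \tfrac{1}{4}, \lfloor x \rfloor + \tfrac{1}{2}]$. Note that this is exactly why the paper has two variants $\sigma_{1}, \sigma_{2}$ of the floor function and an asymmetric window for $\mod_{2}$: choosing $\sigma_{2}$ (whose safe zone lies to the right of the integer) is compatible with the safe zone of $\mod_{2}$, whereas $\sigma_{1}$ would not be.
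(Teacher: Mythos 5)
Your construction $\div_{2}(2^{n},x)=\tfrac12\bigl(\sigma_{2}(2^{n},x)-\mod_{2}(2^{n},x)\bigr)$ is exactly the paper's own proof, and your verification of the range overlap with Lemmas \ref{lem:i} and \ref{lem:mod2} is correct (indeed somewhat more explicit than the paper's). Nothing to add.
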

	
	
	\begin{lemma}\label{lem:lambda}
		There exists $\lambda : \N\times\R \mapsto [0,1]   \in \manonclasslightidealsigmoid$  
		such that for all $n\in\N$, $x\in [-2^{n} , 2^{n}]$, 
		\shortermcu{\begin{itemize}
				\item} whenever  $ x \in [\lfloor x \rfloor + \frac{1}{4}, \lfloor x \rfloor + \frac{1}{2}] $  , $\lambda(2^n,x) = 0$ 
			\shortermcu{\item} and whenever $  x \in [\lfloor x \rfloor + \frac{3}{4}, \lfloor x \rfloor +1] $, $\lambda(2^n,x) = 1$.
			\shortermcu{\end{itemize}}
		\olivier{Mis ailleurs ce texte:
			In particular, whenever we know from the above constraints 
			\begin{itemize}
				\item that $\lambda(2^n,x) = 0$, we know that $\sigma_{2}(2^n,x) = \lfloor x \rfloor$, 
				\item or that $\lambda(2^n,x) = 1$ we know that $\sigma_{1}(2^n,x) = \lfloor x \rfloor$, 
				\item or that  $\lambda(2^n,x) \in (0,1)$, we know that $\sigma_1(2^n,x) = \lfloor x \rfloor +1 $ and $\sigma_2(2^n,x) = \lfloor x \rfloor$.
		\end{itemize}}
	\end{lemma}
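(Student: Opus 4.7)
The plan is to express $\lambda$ in one line as $\signb{}$ applied to a linear combination of $x$ and an integer part extracted via $\sigma_2$ from Lemma~\ref{lem:i}. Recall that $\sigma_2(2^n,\cdot)$ returns $\lfloor x\rfloor$ exactly when $x\in [\lfloor x\rfloor,\lfloor x\rfloor+3/4]$; the two target intervals $[\lfloor x\rfloor+1/4,\lfloor x\rfloor+1/2]$ and $[\lfloor x\rfloor+3/4,\lfloor x\rfloor+1]$ do not both lie in that region, but after shifting the input by $-1/4$ they become $[\lfloor x\rfloor,\lfloor x\rfloor+1/4]$ and $[\lfloor x\rfloor+1/2,\lfloor x\rfloor+3/4]$, which both lie inside the region where $\sigma_2$ is guaranteed to be correct. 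Once $\lfloor x\rfloor$ is extracted, the fractional part $\{x\}$ is available, and one picks affine parameters so that $\signb{}$ saturates to $0$ on $\{x\}\in[1/4,1/2]$ and to $1$ on $\{x\}\in[3/4,1]$.

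Concretely I would set
\[
\lambda(2^n,x) \;=\; \signbb{\,4\bigl(x-\sigma_2(2^{n+1},x-\tfrac14)\bigr)-2\,}.
\]
Membership in $\manonclasslightidealsigmoid$ is immediate: $\sigma_2\in\manonclasslightidealsigmoid$ by Lemma~\ref{lem:i}, the constant $\tfrac14$ is obtained by applying $x/2$ twice to $\mathbf{1}$, multiplication by integer constants is just repeated addition, and $\signb{}$ is among the base generators. Using $2^{n+1}$ rather than $2^n$ as the first argument of $\sigma_2$ is a small precaution to keep the shifted input $x-\tfrac14$ inside the domain $[-2^{n+1},2^{n+1}]$ on which Lemma~\ref{lem:i} is stated, whenever $x\in[-2^n,2^n]$.

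It remains to verify the two cases. Write $k=\lfloor x\rfloor$. If $x\in[k+1/4,k+1/2]$ then $x-1/4\in[k,k+1/4]\subset[k,k+3/4]$, so $\lfloor x-1/4\rfloor=k$ and Lemma~\ref{lem:i} gives $\sigma_2(2^{n+1},x-1/4)=k$; the argument of $\signb{}$ is $4(x-k)-2=4\{x\}-2\in[-1,0]$, which is $<1/4$, whence $\lambda(2^n,x)=0$. If $x\in[k+3/4,k+1]$ then $x-1/4\in[k+1/2,k+3/4]\subset[k,k+3/4]$, so again $\sigma_2(2^{n+1},x-1/4)=k$, and the argument becomes $4\{x\}-2\in[1,2]$, which is $>3/4$, whence $\lambda(2^n,x)=1$. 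The codomain constraint $\lambda\in[0,1]$ is automatic since $\signb{}$ takes values in $[0,1]$. The only point of vigilance in the whole argument is ensuring $x-\tfrac14$ falls inside the interval $[k,k+3/4]$ where Lemma~\ref{lem:i} applies, which is exactly what motivates both the translation by $-\tfrac14$ and the enlargement of the precision argument from $2^n$ to $2^{n+1}$; no serious obstacle is expected beyond this bookkeeping.
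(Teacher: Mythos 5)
Your construction is correct, but it takes a genuinely different route from the paper's. The paper builds $\lambda$ from scratch, mimicking the proof of Lemma \ref{lem:xi}: it defines a one-sided $\lambda'$ by the recursion $\lambda'(2^{0},x)=\sig(\tfrac12,\tfrac34,x)-\sig(0,\tfrac14,x)+1$ and $\lambda'(2^{n+1},x)=\lambda'(2^{n},G(2^{n},x))$ with $G(N,x)=x-N\,\sig(N-\tfrac12,N-\tfrac14,x)$, symmetrizes via $\lambda(N,x)=\lambda'(N,x)+\lambda'(N,-\tfrac14-x)-1$, and then recasts the iteration as a linear length ODE to get membership in $\manonclasslightidealsigmoid$. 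You instead reduce everything to Lemma \ref{lem:i}: shift by $\tfrac14$ so that both target intervals land where $\sigma_2$ is guaranteed to return the floor, then apply a single $\signb{}$ ramp to the recovered fractional part. Your two-case verification is sound, the domain bookkeeping ($x-\tfrac14\in[-2^{n+1},2^{n+1}]$, with $2^{n+1}$ obtained as $N+N$ inside the class) is the right precaution, and membership in the class is immediate, so you avoid introducing any new recursive scheme or length ODE — a shorter and more modular argument. What the paper's heavier construction buys is control of $\lambda$ \emph{outside} the two stated intervals: its $\lambda$ transitions only on $\{x\}\in(0,\tfrac14)$ and $(\tfrac12,\tfrac34)$, and Section \ref{sec:computablereal} later reads off from the construction facts that are not consequences of the lemma's statement alone (which value of $\lambda$ occurs on which validity region of $\sigma_1,\sigma_2$). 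Your $\lambda$ is uncontrolled on $\{x\}\in(0,\tfrac14)$, where $\sigma_2(2^{n+1},x-\tfrac14)$ falls outside its guarantee region; this is harmless for the lemma as stated, and in fact also for the later barycentric argument, since on that region both $\sigma_1(2^n,x)$ and $\sigma_2(2^n,x)$ equal $\lfloor x\rfloor$ anyway — but if your $\lambda$ were substituted into the later proof, that extra check should be stated explicitly.
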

	
	\manon{
		=======
		\begin{proof}
			TODO $\sigma $ sur le bon intervalle
			
			TODO 
		\end{proof}
	}
	
	\shortermcu{
		\subsection{Some particular functions}
	}
	
	
	\shortermcu{
		\subsection{Some error bounds}
	}
	
	\manondufutur{
		\olivier{Je vole ici des trucs d'autres sources: possible que cette section doit etre avant l'autre en fait. Par ailleurs, probablement beaucoup de résultats de cette section qu'on utilisera jamais}
	}
	
	\manondufutur{
		When simulating a Turing machine, we will also need to keep the error
		under control. In many cases, this will be done with the help of the
		error-contracting function $\danielsigmanameold: \R \to \R$ defined by
		
		$$\danielsigmanameold: x \mapsto \danielsigmaold{x} = x - 0.2 \sin (2 \pi x).$$

		The function $\danielsigmanameold$ is a contraction on the vicinity of integers:

		\begin{lemma}[{\cite[Lemma 4.2.1]{TheseDaniel}}]\label{gcb1}
			Let $n \in \Z$, and let $\epsilon \in [0,1/2)$. Then there is some
			contracting factor $\lambda_\epsilon \in (0,1)$ such that $\forall
			\delta \in [-\epsilon,\epsilon]$, $\left|\danielsigmaold{n+\delta} - n\right| <
			\lambda_\epsilon \delta$.
		\end{lemma}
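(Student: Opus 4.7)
The plan is to compute $\sigma(n+\delta) - n$ explicitly, reduce the claim to a uniform bound on an auxiliary even function, and then extract $\lambda_\epsilon$ by compactness. Since $\sin$ is $2\pi$-periodic and $n \in \Z$, we have $\sin(2\pi(n+\delta)) = \sin(2\pi\delta)$, so
\[
\sigma(n+\delta) - n = \delta - 0.2 \sin(2\pi \delta) =: g(\delta).
\]
I read the inequality in the statement as intending $|\delta|$ on the right-hand side (the bare $\lambda_\epsilon \delta$ would be negative for negative $\delta$, so only $|\delta|$ makes sense); the claim then reduces to producing some $\lambda_\epsilon \in (0,1)$ such that $|g(\delta)| < \lambda_\epsilon |\delta|$ for every nonzero $\delta \in [-\epsilon,\epsilon]$.

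The natural object to study is the quotient
\[
h(\delta) = \frac{g(\delta)}{\delta} = 1 - 0.2 \cdot \frac{\sin(2\pi\delta)}{\delta},
\]
extended continuously by $h(0) = 1 - 0.4\pi$. I would observe that $h$ is continuous on $\R$ and \emph{even} (since $\sin$ is odd), so it suffices to work on $[0,\epsilon] \subset [0,1/2)$. The core step is to verify the two bounds $h(\delta) < 1$ and $h(\delta) > -1$ throughout this interval. The upper bound is equivalent to $\sin(2\pi\delta)/\delta > 0$, which holds throughout $(0,1/2)$ because $2\pi\delta \in (0,\pi)$ forces $\sin(2\pi\delta) > 0$, while at $\delta = 0$ the extended value $1-0.4\pi$ is already negative. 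The lower bound is equivalent to $\sin(2\pi\delta)/\delta < 10$, which follows from the elementary inequality $|\sin x| \leq |x|$, giving $\sin(2\pi\delta)/\delta \leq 2\pi < 10$ with ample margin.

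Finally, by continuity of $h$ on the compact interval $[-\epsilon,\epsilon] \subset (-1/2, 1/2)$, the supremum
\[
\mu_\epsilon := \sup_{\delta \in [-\epsilon,\epsilon]} |h(\delta)|
\]
is attained and, by the preceding step, strictly less than $1$. Picking any $\lambda_\epsilon \in (\mu_\epsilon, 1)$ yields $|\sigma(n+\delta) - n| = |h(\delta)|\cdot|\delta| \leq \mu_\epsilon |\delta| < \lambda_\epsilon |\delta|$ for every nonzero $\delta$ in the range, as required. The only delicate point, and the reason the hypothesis $\epsilon < 1/2$ cannot be weakened, is that at $\delta = \pm 1/2$ one has $\sin(\pm\pi) = 0$, forcing $h(\pm 1/2) = 1$ and destroying the contraction at the endpoints; any $\epsilon$ strictly below $1/2$ keeps us safely bounded away from that failure mode, and no other difficulty arises.
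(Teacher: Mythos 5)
Your proof is correct, and it shares the paper's overall skeleton (reduce to $n=0$ via periodicity, exploit the odd symmetry of $\delta \mapsto \delta - 0.2\sin(2\pi\delta)$ to work on $[0,\epsilon]$, and study the quotient $\sigma(\delta)/\delta$), but it closes the argument differently. The paper asserts that $g(x)=\sigma(x)/x$ is \emph{strictly increasing} on $(0,1/2]$, and then reads the uniform bound off the endpoint values $g(1/2)=1$ and $\lim_{x\to 0}g(x)=1-0.4\pi$; the monotonicity claim is left unproved there (it amounts to $\sin t/t$ being decreasing on $(0,\pi)$). You instead prove the two-sided pointwise bound $-1<h(\delta)<1$ directly — the upper bound from the sign of $\sin(2\pi\delta)$ on $(0,1/2)$, the lower bound from $|\sin x|\le|x|$ giving $\sin(2\pi\delta)/\delta\le 2\pi<10$ — and then use continuity on the compact interval $[-\epsilon,\epsilon]$ to turn the pointwise bound into a uniform $\mu_\epsilon<1$. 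Your route avoids the unproven monotonicity step at the mild cost of a compactness argument (which the paper sidesteps because monotonicity puts the supremum at the endpoint); it also makes explicit two points the paper glosses over, namely that the right-hand side should be read as $\lambda_\epsilon|\delta|$ (with the trivial $\delta=0$ case excluded) and that the hypothesis $\epsilon<1/2$ is exactly what keeps the quotient away from its value $1$ at $\delta=\pm 1/2$. Both arguments are sound; yours is the more self-contained of the two.
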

		
		\begin{proof}
			It is sufficient to consider the case where $n=0 .$ Because $\danielsigmanameold$ is odd, we only study $\danielsigmaold$ in the interval $[0, \varepsilon] .$ Let $g(x)=\danielsigmaold{x} / x .$ This function is strictly increasing in $(0,1 / 2] .$ Then, noting that $g(1 / 2)=1$ and $\lim _{x \rightarrow 0} g(x)=1-0.4 \pi \approx-0.256637,$ we conclude that there exists some $\lambda_{\varepsilon} \in(0,1)$ such that $\left|\danielsigmaold{x}\right|<\lambda_{\varepsilon}|x|$ for all $x \in[-\varepsilon, \varepsilon]$
		\end{proof}
	}
	\manondufutur{
		\olivier{Euh, copie colle, mais Vrai ce paragraphe, qu'on fait ces conventions
			
			For the rest of this document, we suppose that $\epsilon \in
			[0, 1/2)$ is fixed and that $\lambda_\epsilon$  is the respective
			contracting factor given by Lemma \ref{gcb1}. The function $\danielsigmanameold$ will be used in our simulation to keep the error
			controlled when bounded quantities are involved (e.g., the actual
			state, the symbol being read, etc.).
	}}
	\manondufutur{
		To help readability, we will also write $\danielsigma{x}$ for $\danielsigmaold{x}$. We will also write $\danielsigmae{[n]}{x}$ for $\danielsigmaeold{[n]}{x}$.
		The idea is that this is intended to be close to $x$, when $x$ is some integer.
	}
	\manondufutur{
		\olivier{Euh, hors sujet, non?
			We will also need another error-contracting function that controls the
			error for unbounded quantities. This will be achieved with the help of
			the function $\ltroisname : \R^2 \to  \R$, that has the property that whenever
			$\overline{a}$ is an
			approximation of $a \in  \{0,1,2\}$, then $\left|\ltrois{\overline{a}}{y} -a \right| <
			1/y$, for $y>
			0$.
			In
			other words, $\ltrois{.}{y}$  is an error- contracting map, where the error is
			contracted by an amount specified by the  argument $y$.
	}}
	\manondufutur{
		\begin{lemma}[{\cite[Lemma 4.2.3]{TheseDaniel}}]
			\label{Lemma 4.2.3} $\left|\frac{\pi}{2}-\arctan x\right|<\frac{1}{x}$ for $x \in(0, \infty)$
		\end{lemma}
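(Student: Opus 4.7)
The plan is to reduce the inequality to a more familiar statement using the complementary-angle identity for $\arctan$, and then establish the reduced inequality by a derivative comparison.

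First, I would recall the identity $\arctan x + \arctan(1/x) = \pi/2$, valid for all $x > 0$. This can be verified by observing that the derivative of the left-hand side is $\frac{1}{1+x^2} + \frac{-1/x^{2}}{1+1/x^{2}} = \frac{1}{1+x^{2}} - \frac{1}{1+x^{2}} = 0$, so the sum is constant on $(0,\infty)$, and evaluating at $x = 1$ gives $2 \arctan 1 = 2 \cdot \pi/4 = \pi/2$. Using this identity, we rewrite
\[
\left|\frac{\pi}{2} - \arctan x\right| = \arctan\!\left(\frac{1}{x}\right),
\]
where the absolute value bars drop because $\arctan(1/x) > 0$ whenever $x > 0$.

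Next, the claim reduces to showing that $\arctan y < y$ for every $y \in (0,\infty)$ (take $y = 1/x$). For this, let $\varphi(y) = y - \arctan y$. Then $\varphi(0) = 0$ and
\[
\varphi'(y) = 1 - \frac{1}{1+y^{2}} = \frac{y^{2}}{1+y^{2}} > 0 \quad \text{for } y > 0.
\]
Hence $\varphi$ is strictly increasing on $[0,\infty)$, so $\varphi(y) > \varphi(0) = 0$ for all $y > 0$, i.e.\ $\arctan y < y$. Substituting $y = 1/x$ yields $\arctan(1/x) < 1/x$, which combined with the identity above gives the required inequality $|\pi/2 - \arctan x| < 1/x$.

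This argument is entirely elementary, so I do not anticipate a main obstacle; the only small point to watch is invoking the identity $\arctan x + \arctan(1/x) = \pi/2$ with the correct sign (it holds for $x > 0$; for $x < 0$ the constant is $-\pi/2$), but since the statement is restricted to $x \in (0,\infty)$ this is not an issue.
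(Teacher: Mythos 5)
Your proof is correct. It takes a slightly different route from the paper's: the paper works directly with the auxiliary function $f(x)=\frac{1}{x}+\arctan x-\frac{\pi}{2}$, shows it is decreasing on $(0,\infty)$ and tends to $0$ at $+\infty$, and concludes $f>0$; you instead invoke the complementary-angle identity $\arctan x+\arctan(1/x)=\frac{\pi}{2}$ to reduce the claim to $\arctan y<y$ for $y>0$, which you prove by a monotonicity argument anchored at $y=0$. The two arguments are linked precisely by your identity, since $f(x)=\frac{1}{x}-\arctan(1/x)=\varphi(1/x)$ in your notation; the paper's version avoids the identity but needs the limit at infinity, while yours trades the limit computation for the identity and anchors everything at a finite point, which is arguably the more self-contained of the two. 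Your handling of the sign of the absolute value (using $\arctan x<\frac{\pi}{2}$ for $x>0$, implicit in writing $\left|\frac{\pi}{2}-\arctan x\right|=\arctan(1/x)>0$) and your caveat about the identity only holding with constant $\frac{\pi}{2}$ on $x>0$ are both correct.
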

		
		\begin{proof}
			Let $f(x)=\frac{1}{x}+\arctan x-\frac{\pi}{2} .$ It is easy to see that $f$ is decreasing in $(0, \infty)$ and that $\lim _{x \rightarrow \infty} f(x)=0 .$ Therefore $f(x)>0$ for $x \in(0, \infty)$ and the result holds.
		\end{proof}
	}
	\manondufutur{
		\begin{lemma}\label{maplenew}\label{MonLemma 4.2.3}
			$\left|1-\tanh x\right|<2 \exp(-2x)$ for $x \in(0, \infty)$
		\end{lemma}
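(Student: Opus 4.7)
The plan is to prove the inequality by a direct algebraic computation, exploiting the closed form of $\tanh$. Writing $\tanh x = \frac{e^x - e^{-x}}{e^x + e^{-x}}$ and multiplying numerator and denominator by $e^{-x}$, I get $\tanh x = \frac{1 - e^{-2x}}{1 + e^{-2x}}$, which is the more convenient form here because the small parameter $e^{-2x}$ appears explicitly.

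From this, a one-line algebraic manipulation gives
\[
1 - \tanh x \;=\; \frac{(1+e^{-2x}) - (1-e^{-2x})}{1+e^{-2x}} \;=\; \frac{2 e^{-2x}}{1 + e^{-2x}}.
\]
Now I would bound the denominator from below: for any $x \in (0,\infty)$ we have $e^{-2x} > 0$, so $1 + e^{-2x} > 1$, and hence $\frac{2 e^{-2x}}{1 + e^{-2x}} < 2 e^{-2x} = 2 \exp(-2x)$. Combining these two displays yields the claimed bound $|1 - \tanh x| < 2\exp(-2x)$, where the absolute value is harmless since $\tanh x < 1$ for all real $x$ (so $1 - \tanh x$ is already positive on $(0,\infty)$).

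I do not anticipate any real obstacle: unlike the preceding lemma about $\arctan$ (which relied on monotonicity and a limit argument), here the closed-form expression for $1 - \tanh x$ immediately exhibits the exponential decay, and the required inequality reduces to the trivial fact that $1 + e^{-2x} > 1$. The only care needed is to present the identity $\tanh x = (1-e^{-2x})/(1+e^{-2x})$ cleanly before bounding, so the reader sees why the factor of $2$ and the exponent $-2x$ on the right-hand side are natural rather than arbitrary.
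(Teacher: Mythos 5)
Your proof is correct, and it takes a genuinely different route from the paper's. The paper argues in the style of the neighbouring $\arctan$ lemma: it sets $f(x)=2\exp(-2x)+\tanh(x)-1$, computes $f'(x)=1-\tanh^2(x)-4\exp(-2x)$, rewrites it to see that it is negative on $(0,\infty)$, concludes that $f$ is decreasing there, and then uses $\lim_{x\to\infty}f(x)=0$ to get $f>0$, which is the claimed bound. You instead exploit the closed form $\tanh x=\frac{1-e^{-2x}}{1+e^{-2x}}$ to obtain the exact identity $1-\tanh x=\frac{2e^{-2x}}{1+e^{-2x}}$, after which the inequality is just $1+e^{-2x}>1$; the positivity of $1-\tanh x$ disposes of the absolute value. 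Your computation is more elementary (no derivative or limit argument needed), actually yields a slightly sharper statement since it identifies $1-\tanh x$ exactly rather than merely bounding it, and the same identity applied at $-x$ would also give the companion lemma for $x<0$ directly, whereas the paper handles that case by reflecting the auxiliary function $f$. What the paper's approach buys is uniformity of presentation: it mirrors the monotonicity-and-limit template used for the $\arctan$ bounds it adapts, but it is not needed here.
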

		
		\begin{proof}
			Let $f(x)=2 \exp(-2x) + \tanh(x) -1$. Then its derivative is $1 - \tanh(x)^2 - 4\exp(-2x)= 1- \frac{1 - exp(-2x)}{1 + \exp(-2x)}^2 - 4\exp(-2 x)=-4\exp(-4x)\frac{2+\exp(-2x) }{(1 + exp(-2x))^2}$
			which is negative for $x \in(0, \infty)$, and hence $f$ is decreasing on $(0, \infty)$. As its limits in $+\infty$ is $0$, we have $f(x)>0$ on $(0, \infty)$, and the result holds.
		\end{proof}
	}
	\manondufutur{
		New (merci maple)
		\begin{lemma}[{\cite[Lemma 4.2.4]{TheseDaniel}}]
			\label{Lemma 4.2.4} $\left|\frac{\pi}{2}+\arctan x\right|<\frac{1}{|x|}$ for $x \in(-\infty, 0)$.
		\end{lemma}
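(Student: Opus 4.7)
The plan is to reduce this statement directly to the previous Lemma \ref{Lemma 4.2.3} by exploiting the fact that $\arctan$ is an odd function. Indeed, for $x\in(-\infty,0)$, set $y=-x$, so that $y\in(0,\infty)$. Since $\arctan(-y)=-\arctan(y)$, we have
\[
\left|\tfrac{\pi}{2}+\arctan x\right| \;=\; \left|\tfrac{\pi}{2}+\arctan(-y)\right| \;=\; \left|\tfrac{\pi}{2}-\arctan(y)\right|,
\]
and Lemma \ref{Lemma 4.2.3} applied to $y>0$ gives $\left|\tfrac{\pi}{2}-\arctan(y)\right|<\tfrac{1}{y}=\tfrac{1}{|x|}$. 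This yields the claim.

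If a self-contained proof is preferred rather than a reduction, I would mimic the argument used for Lemma \ref{Lemma 4.2.3}: define $g(x)=-\tfrac{1}{x}+\arctan x+\tfrac{\pi}{2}$ for $x\in(-\infty,0)$, observe that $g'(x)=\tfrac{1}{x^{2}}+\tfrac{1}{1+x^{2}}>0$ so $g$ is strictly increasing on $(-\infty,0)$, and note that $\lim_{x\to-\infty}g(x)=0$. Hence $g(x)<0$ on $(-\infty,0)$, which rearranges to $\tfrac{\pi}{2}+\arctan x<\tfrac{1}{|x|}$. Since $\tfrac{\pi}{2}+\arctan x>0$ on $(-\infty,0)$ (because $\arctan x>-\tfrac{\pi}{2}$), the absolute value is superfluous and the inequality is proved.

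There is essentially no obstacle here: the statement is a trivial corollary of the previous lemma via the odd symmetry of $\arctan$, and in any case the monotonicity argument is a direct copy of the one already used. The reduction route is the cleanest and should be presented first.
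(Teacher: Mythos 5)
Your primary argument—reducing to Lemma \ref{Lemma 4.2.3} via the oddness of $\arctan$—is correct and complete: for $x<0$ and $y=-x>0$ one indeed has $\left|\frac{\pi}{2}+\arctan x\right|=\left|\frac{\pi}{2}-\arctan y\right|<\frac1y=\frac{1}{|x|}$. The paper instead says to take $f(x)=\frac{1}{x}+\arctan x+\frac{\pi}{2}$ and ``proceed as in Lemma \ref{Lemma 4.2.3}'', i.e.\ it re-runs the monotonicity argument with the mirrored auxiliary function rather than invoking symmetry. Your reduction is slightly cleaner in that it reuses the previous lemma verbatim and needs no new calculus; the paper's version has the minor virtue of being self-contained in the same style as its neighbour. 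Either is acceptable.

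Your backup ``self-contained'' argument, however, is broken and should be dropped or fixed. First, the function is wrong: for $x<0$ you want to prove $\frac{\pi}{2}+\arctan x<\frac{1}{|x|}=-\frac1x$, i.e.\ $\frac1x+\arctan x+\frac{\pi}{2}<0$, so the relevant function is $f(x)=\frac1x+\arctan x+\frac{\pi}{2}$ (the paper's choice), not $g(x)=-\frac1x+\arctan x+\frac{\pi}{2}$; your $g(x)=\frac{1}{|x|}+\arctan x+\frac{\pi}{2}$ is a sum of positive terms, and $g<0$ would assert the false inequality $\frac{\pi}{2}+\arctan x<-\frac{1}{|x|}$. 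Second, the monotonicity step is internally inconsistent: a function that is strictly increasing on $(-\infty,0)$ with limit $0$ at $-\infty$ is \emph{positive} there, not negative. The correct self-contained version mirrors Lemma \ref{Lemma 4.2.3}: $f'(x)=-\frac{1}{x^{2}}+\frac{1}{1+x^{2}}<0$, so $f$ is decreasing on $(-\infty,0)$ with $\lim_{x\to-\infty}f(x)=0$, hence $f<0$, and positivity of $\frac{\pi}{2}+\arctan x$ removes the absolute value.
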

		
		\begin{proof}
			Take $f(x)=\frac{1}{x}+\arctan x+\frac{\pi}{2}$ and proceed as in Lemma \ref{Lemma 4.2.3}.
		\end{proof}
	}
	\manondufutur{
		\begin{lemma}\label{MonLemma 4.2.4}
			$\left|1+\tanh x\right|<2 \exp(-2|x|) $ for  $x \in(-\infty, 0)$.
		\end{lemma}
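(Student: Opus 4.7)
The plan is to reduce this inequality to Lemma \ref{MonLemma 4.2.3} via the fact that $\tanh$ is an odd function. For $x \in (-\infty, 0)$, set $y = -x \in (0, \infty)$. Then $\tanh(x) = -\tanh(y)$, so
$$1 + \tanh(x) \;=\; 1 - \tanh(y).$$
Since $\tanh(y) \in (0,1)$ for $y > 0$, the right-hand side is positive, hence equal to its own absolute value. Applying Lemma \ref{MonLemma 4.2.3} at $y$ gives $|1 - \tanh(y)| < 2\exp(-2y)$, and since $y = |x|$, this is exactly the bound $2\exp(-2|x|)$ claimed.

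If one prefers a self-contained derivation that mirrors the style of Lemma \ref{MonLemma 4.2.3}, an alternative is to set $f(x) = 2\exp(2x) + \tanh(x) - 1$ on $(-\infty, 0)$, noting that $\exp(-2|x|) = \exp(2x)$ there. Differentiating gives $f'(x) = 4\exp(2x) - (1 - \tanh^{2} x)$, which after algebraic rearrangement is readily seen to be strictly positive on $(-\infty, 0)$. Combined with $\lim_{x \to -\infty} f(x) = 0$, monotonicity forces $f(x) > 0$ on the whole interval, which is the desired inequality (the absolute value is harmless since $1 + \tanh(x) > 0$ on this domain). I do not anticipate any real obstacle; the symmetry argument above is essentially a one-line reduction and is the route I would take.
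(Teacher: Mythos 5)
Your main argument is correct and is essentially the paper's proof: the paper likewise handles $x<0$ by reflecting to the positive half-line, reusing the auxiliary function $f(x)=2\exp(2x)-\tanh(x)-1$ from the proof of the preceding bound $\left|1-\tanh x\right|<2\exp(-2x)$ for $x>0$, which is exactly your oddness reduction in different clothing. One small caution about your optional self-contained variant: the auxiliary function there must be $2\exp(2x)-\tanh(x)-1$ (with $+\tanh(x)$ as you wrote it, the limit at $-\infty$ is $-2$, not $0$, and positivity fails), though with that sign corrected it matches the derivative you computed and the monotonicity argument goes through.
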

		
		\begin{proof}
			Let $f(x)=2 \exp(2x) - \tanh(x) -1$. We have $f(-x)= 2 \exp(-2x) + \tanh(x) -1$. From the proof of Lemma \ref{maplenew}, studying the same function, we know that $f(-x)>0$ on $(0, \infty)$. Hence, $f(x)>0$ on $(-\infty, 0)$, and the result holds.
		\end{proof}
	}
	\manondufutur{
		In order to define function $\ltroisname$ we first define a preliminary function $\ldeuxname$ satisfying similar
		conditions, but only when $a \in  \{0, 1\}$.
		
		\begin{lemma}[{\cite[Lemma 4.2.5]{TheseDaniel}}] \label{dsg:lemma2}
			Let $\ldeuxname: \R^2 \to \R$ be given by $$\ldeux{x}{y} = \frac1\pi
			\arctan(4y(x-1/2))+\frac12.$$ Suppose that $a \in \{0,1\}$. Then, for any
			$\overline{a},y \in \R$ satisfying $|a-\overline{a}| \le 1/4$ and
			$y>0$, we get $\left|\ldeux{\overline{a}}{y} - a \right| < 1/y.$
		\end{lemma}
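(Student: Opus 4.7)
The plan is a case analysis on $a \in \{0, 1\}$, in each case reducing directly to one of the two arctan asymptotic bounds already established in this section: for $u > 0$ one has $|\pi/2 - \arctan u| < 1/u$, and for $u < 0$ one has $|\pi/2 + \arctan u| < 1/|u|$. Introduce $u := 4y(\overline{a} - 1/2)$, so that $\ldeuxname(\overline{a}, y) = \frac{1}{\pi}\arctan(u) + \frac{1}{2}$, and observe that in both cases the hypothesis $|a - \overline{a}| \le 1/4$ together with the rescaling factor $4y$ will force $|u| \ge y$.

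For $a = 1$, the hypothesis $|1 - \overline{a}| \le 1/4$ forces $\overline{a} \ge 3/4$, hence $u \ge y > 0$. Rewriting $\ldeuxname(\overline{a}, y) - 1 = \frac{1}{\pi}(\arctan u - \pi/2)$ and invoking the positive-argument bound gives $|\ldeuxname(\overline{a}, y) - 1| < \frac{1}{\pi u} \le \frac{1}{\pi y} < \frac{1}{y}$, which is the required inequality (with room to spare by a factor of $\pi$).

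Symmetrically, for $a = 0$ the hypothesis forces $\overline{a} \le 1/4$ and hence $u \le -y < 0$; rewriting $\ldeuxname(\overline{a}, y) = \frac{1}{\pi}(\arctan u + \pi/2)$ and invoking the negative-argument bound yields the same $|\ldeuxname(\overline{a}, y)| < \frac{1}{\pi y} < \frac{1}{y}$ estimate. There is no substantive obstacle here: the whole proof is bookkeeping on which asymptote $\arctan(u)$ is being compared to, and the only non-trivial ingredient, the $1/|u|$ decay of $\pi/2 \mp \arctan(u)$, is already available from the two preceding lemmas; the linear rescaling by $4y$ in the definition of $\ldeuxname$ is exactly what converts the $1/4$-margin in the hypothesis into the $1/y$-margin in the conclusion.
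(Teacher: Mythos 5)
Your proof is correct and follows essentially the same route as the paper's: a case split on $a\in\{0,1\}$, the observation that $|4y(\overline{a}-1/2)|\ge y$, and an application of the two preceding $\arctan$ asymptotic lemmas, with the factor $1/\pi$ absorbing the slack to give the strict bound $1/y$.
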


		\begin{proof}
			\begin{itemize}
				\item  Consider $a=0 .$ Then $\bar{a}-1 / 2 \leq-1 / 4$ implies $|4 y(\bar{a}-1 / 2)| \geq y .$ Therefore, by Lemma \ref{Lemma 4.2.4}
				$$
				\left|\frac{\pi}{2}+\arctan (4 y(\bar{a}-1 / 2))\right|<\frac{1}{|4 y(\bar{a}-1 / 2)|} \leq \frac{1}{y}.
				$$
				Moreover, multiplying the last inequality by $1 / \pi$ and noting that $\frac{1}{\pi y}<\frac{1}{y},$ it follows that
				$$
				\left|\ldeux{\bar{a}}{ y} -a \right|<1 / y.
				$$
				\item  Consider $a=1$. Remark that $\bar{a}-1 / 2 \geq 1 / 4$ and proceed as above, using Lemma \ref{Lemma 4.2.3} instead of Lemma \ref{Lemma 4.2.4}.
			\end{itemize}
		\end{proof}
	}
	\manondufutur{
		\olivier{On peut faire une fonction avec la même propriété avec $\tanh$ . Est-ce utile?}
		
		\begin{lemma}
			Let $\ldeuxname: \R^2 \to \R$ be given by $$\ldeux{x}{y} = 
			\frac12 \tanh(4y(x-1/2)) + \frac12 $$ Suppose that $a \in \{0,1\}$. Then, for any
			$\overline{a},y \in \R$ satisfying $|a-\overline{a}| \le 1/4$ and
			$y>0$, we get $\left|\ldeux{\overline{a}}{y} - a \right| <  \exp(-2y).$
		\end{lemma}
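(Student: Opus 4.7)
The plan is to argue by cases on $a\in\{0,1\}$, in the same pattern as the arctan-based proof just above, but invoking the exponential $\tanh$ tail bounds in place of the $O(1/y)$ arctan bounds. Set $t := 4y(\bar a - \tfrac12)$, so that $\ldeux{\bar a}{y} = \tfrac12(1+\tanh t)$, and rewrite the two target quantities as
\[ \ldeux{\bar a}{y} - 0 = \tfrac12(1+\tanh t), \qquad \ldeux{\bar a}{y} - 1 = -\tfrac12(1-\tanh t). \]
In either case the task reduces to controlling $|1\pm\tanh t|$ for a $t$ of appropriate sign and magnitude at least $y$.

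For $a=0$, the hypothesis $|\bar a - a|\le \tfrac14$ forces $\bar a - \tfrac12 \le -\tfrac14$, so $t\le -y<0$ and $|t|\ge y$. Applying the estimate $|1+\tanh x|<2\exp(-2|x|)$ for $x<0$ (derived earlier by showing that the auxiliary function $2\exp(2x)-\tanh x -1$ is positive on $(-\infty,0)$) yields $|\ldeux{\bar a}{y}-a| = \tfrac12|1+\tanh t| < \exp(-2|t|)\le \exp(-2y)$.

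The case $a=1$ is perfectly symmetric: now $\bar a -\tfrac12\ge \tfrac14$ so $t\ge y>0$, and the bound $|1-\tanh x|<2\exp(-2x)$ for $x>0$ (proved earlier by the same monotonicity trick on $2\exp(-2x)+\tanh x - 1$) gives $|\ldeux{\bar a}{y}-a| = \tfrac12|1-\tanh t| < \exp(-2t)\le \exp(-2y)$.

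No essential difficulty is anticipated: the whole proof reduces to a two-line case analysis once the preliminary $\tanh$ tail estimates are in hand, and the $O(1/y) \to O(\exp(-2y))$ improvement is entirely inherited from those estimates. The only detail to keep an eye on is that the prefactor $\tfrac12$ in the definition of $\ldeuxname$ must precisely cancel the factor $2$ in the tail bound $|1\pm\tanh x|<2\exp(-2|x|)$, producing the clean constant $1$ in front of $\exp(-2y)$; since no further slackness is available, the bound is stated as a strict inequality, which is legitimate because all of the preliminary $\tanh$ estimates are themselves strict for finite $x$.
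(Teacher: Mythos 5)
Your proof is correct and follows essentially the same route as the paper's: a case analysis on $a\in\{0,1\}$, reducing each case to the strict tail bounds $|1-\tanh x|<2\exp(-2x)$ for $x>0$ and $|1+\tanh x|<2\exp(-2|x|)$ for $x<0$, with the factor $\tfrac12$ in the definition of $\ldeuxname$ cancelling the constant $2$. Nothing is missing.
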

	}
	\olivier{Vérifier ce délire/calcul.}
	\manondufutur{
		\begin{proof}
			\begin{itemize}
				\item  Consider $a=0 .$ Then $\bar{a}-1 / 2 \leq-1 / 4$ implies $|4 y(\bar{a}-1 / 2)| \geq y .$ Therefore, by Lemma \ref{MonLemma 4.2.3}
				$$
				\left|1+\tanh (4 y(\bar{a}-1 / 2))\right|<\frac{2}{\exp(2|4 y(\bar{a}-1 / 2|)}\leq \frac{2}{\exp(2y)}.
				$$
				
				Multiplying the last inequality by $1 / 2$, it follows that
				$$
				\left|\ldeux{\bar{a}}{ y} -a \right|<\exp(-2y).
				$$

				\item  Consider $a=1$. Remark that $\bar{a}-1 / 2 \geq 1 / 4$ and proceed as above, using Lemma \ref{MonLemma 4.2.3} instead of Lemma \ref{MonLemma 4.2.4}.
			\end{itemize}
			
		\end{proof}
	}
	\olivier{New: a new bound on $\tanh(x)$}
	\manondufutur{
		\begin{lemma}
			For all $x\ge 0$, we have $$x-\frac{x^{3}}{3} \le \tanh(x) \le x$$.
		\end{lemma}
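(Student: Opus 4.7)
The plan is to prove the two inequalities separately, each by constructing an auxiliary difference function, showing it vanishes at $0$, and controlling its derivative using the identity $\tanh'(x) = 1 - \tanh^2(x)$.

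For the upper bound $\tanh(x) \le x$, I would set $g(x) = x - \tanh(x)$. Then $g(0) = 0$ and
\[
g'(x) = 1 - \bigl(1 - \tanh^2(x)\bigr) = \tanh^2(x) \ge 0,
\]
so $g$ is non-decreasing on $[0, \infty)$ and hence $g(x) \ge g(0) = 0$.

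For the lower bound $x - x^3/3 \le \tanh(x)$, I would set $h(x) = \tanh(x) - x + x^3/3$. Then $h(0) = 0$ and a direct computation gives
\[
h'(x) = \bigl(1 - \tanh^2(x)\bigr) - 1 + x^2 = x^2 - \tanh^2(x) = \bigl(x - \tanh(x)\bigr)\bigl(x + \tanh(x)\bigr).
\]
For $x \ge 0$, the first factor is non-negative by the upper bound just proved, and the second factor is non-negative since $\tanh(x) \ge 0$. Hence $h'(x) \ge 0$ on $[0, \infty)$, so $h$ is non-decreasing and $h(x) \ge 0$.

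There is no real obstacle here; the only subtlety is the ordering of the two arguments, since the lower bound's derivative computation reuses the upper bound to avoid comparing a cubic with $\tanh^2$ directly. Writing them in this order makes the second step a one-line factorisation rather than a delicate estimate.
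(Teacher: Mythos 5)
Your proof is correct and matches the paper's own argument essentially verbatim: both establish $\tanh(x)\le x$ via the monotonicity of $x-\tanh(x)$, then factor the derivative of $\tanh(x)-x+\frac{x^{3}}{3}$ as $\bigl(x+\tanh(x)\bigr)\bigl(x-\tanh(x)\bigr)$ and reuse the first inequality. Nothing to add.
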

		
		\begin{proof}
			Consider $f(x)=x-\tanh(x)$. Its derivative is $\tanh^{2}(x)$, that is non-negative for $x \ge 0$. Consequently, it is increasing, and as it values $0$ in $0$, it is always non-negative. This proves that $\tanh(x) \le x$ for all $x \ge 0$.
			
			Consider $g(x)= \tanh(x) -x +  \frac{x^{3}}{3}$. Its derivative is $x^{2} - \tanh(x)^{2}= (x+\tanh(x)) (x- \tanh(x)) = (x+\tanh(x)) f(x)$ that is non-negative for $x \ge 0$ from above. Consequently, it is increasing, and as it values $0$ in $0$, it is always non-negative. This proves that $x-\frac{x^{3}}{3} \le \tanh(x)$  for all $x \ge 0$.
			
		\end{proof}
	}
	\manondufutur{
		\begin{lemma}
			For all $x \le 0$, we have $$x \le \tanh(x) \le x-\frac{x^{3}}{3}$$.
		\end{lemma}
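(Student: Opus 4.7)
The plan is to reduce immediately to the already-proved companion lemma by exploiting the oddness of $\tanh$. Set $y = -x$, so that $y \ge 0$ whenever $x \le 0$. Applying the preceding lemma to $y$ gives
$$y - \frac{y^{3}}{3} \le \tanh(y) \le y.$$
Since $\tanh$ is odd, $\tanh(y) = \tanh(-x) = -\tanh(x)$, and $y^{3} = -x^{3}$, so this chain rewrites as
$$-x + \frac{x^{3}}{3} \le -\tanh(x) \le -x.$$
Multiplying through by $-1$ reverses both inequalities and yields exactly $x \le \tanh(x) \le x - x^{3}/3$. The only bookkeeping subtlety is being careful with the direction of the inequalities when multiplying by a negative; everything else is a direct substitution.

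For a reader who prefers a direct argument mirroring the $x \ge 0$ case, one can instead study $f(x) = \tanh(x) - x$ and $g(x) = (x - x^{3}/3) - \tanh(x)$ on $(-\infty, 0]$. Here $f'(x) = 1 - \tanh^{2}(x) - 1 = -\tanh^{2}(x) \le 0$, so $f$ is non-increasing; since $f(0) = 0$, one gets $f(x) \ge 0$ on $(-\infty, 0]$, i.e.\ $x \le \tanh(x)$. For $g$, the derivative factors as $g'(x) = \tanh^{2}(x) - x^{2} = (\tanh(x)-x)(\tanh(x)+x)$, which on $(-\infty, 0]$ is the product of a non-negative factor (by the bound just obtained) and a non-positive factor, hence $g'(x) \le 0$. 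Combined with $g(0) = 0$, this gives $g(x) \ge 0$, i.e.\ $\tanh(x) \le x - x^{3}/3$.

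There is no real obstacle here: the statement is an immediate corollary of the preceding lemma. I would present only the oddness reduction in the paper, since the direct argument merely repeats the structure of the previous proof verbatim, up to sign changes, without adding any new content.
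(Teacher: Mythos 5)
Your oddness reduction is exactly the argument the paper gives: apply the $x\ge 0$ lemma to $-x$ and flip the inequalities. Correct, same approach; the alternative direct computation you sketch is also fine but, as you note yourself, redundant.
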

		
		\begin{proof}
			Then $-x \ge 0$, and hence $-x+\frac{x^{3}}{3} \le \tanh(-x) \le -x$ from previous lemma, from which the result follows.
		\end{proof}
	}
	\manondufutur{
		\begin{corollary}
			For all $x$, $$| \tanh(x) - x| \le \frac{|x|^{3}}{3}$$
		\end{corollary}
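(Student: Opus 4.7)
The plan is to derive this corollary as a direct consequence of the two preceding lemmas, which together sandwich $\tanh(x)$ between $x - \frac{x^{3}}{3}$ and $x$ (with the order of the bounds flipping with the sign of $x$). So my strategy is a simple case split on the sign of $x$, and the only subtlety is correctly tracking that $\frac{x^{3}}{3}$ and $-\frac{x^{3}}{3}$ agree with $\frac{|x|^{3}}{3}$ in the respective regimes.

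First, I would handle the case $x \geq 0$. The previous lemma gives $x - \frac{x^{3}}{3} \le \tanh(x) \le x$, which after subtracting $x$ yields $-\frac{x^{3}}{3} \le \tanh(x) - x \le 0$. Since $x \ge 0$, we have $\frac{x^{3}}{3} = \frac{|x|^{3}}{3}$, so taking absolute values gives $|\tanh(x) - x| \le \frac{|x|^{3}}{3}$ as desired.

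Next, I would handle the case $x \leq 0$. The other lemma gives $x \le \tanh(x) \le x - \frac{x^{3}}{3}$, which after subtracting $x$ yields $0 \le \tanh(x) - x \le -\frac{x^{3}}{3}$. Since $x \le 0$, we have $x^{3} \le 0$, so $-\frac{x^{3}}{3} = \frac{|x|^{3}}{3} \ge 0$, and again $|\tanh(x) - x| \le \frac{|x|^{3}}{3}$.

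The two cases combine to give the statement for all $x \in \R$. There is no genuine obstacle here; the work was already done in the two preceding lemmas, which established the one-sided cubic Taylor-type bounds via monotonicity arguments on $f(x) = x - \tanh(x)$ and $g(x) = \tanh(x) - x + \frac{x^{3}}{3}$. The corollary is merely the symmetric (absolute value) repackaging of these two one-sided estimates.
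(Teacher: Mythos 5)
Your proposal is correct and matches the paper's intent: the corollary is stated there without a separate proof precisely because it follows immediately from the two preceding sandwich lemmas ($x-\frac{x^{3}}{3} \le \tanh(x) \le x$ for $x\ge 0$ and the mirrored bound for $x\le 0$), exactly as you argue. Your sign-split and absolute-value bookkeeping are the intended (and only needed) steps.
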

	}
	\manondufutur{
		\begin{corollary}
			For all $K>0$, 
			For all $x$, $$| K \tanh(x/K) - x | \le \frac{|x|^{3}}{3K^{2}}$$
		\end{corollary}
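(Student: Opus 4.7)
The plan is to derive the inequality as an immediate rescaling of the immediately preceding corollary, which states that $|\tanh(x)-x| \le \frac{|x|^{3}}{3}$ for every $x \in \R$. Since that bound holds uniformly in its argument, I would specialize it at $x/K$ (which is legitimate for any $K \neq 0$), obtaining $\left|\tanh(x/K) - x/K\right| \le \frac{|x/K|^{3}}{3} = \frac{|x|^{3}}{3K^{3}}$.

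The second and final step is to multiply both sides by $K$. The hypothesis $K > 0$ is used precisely here: because $K$ is a positive scalar, $K \cdot |a - b| = |Ka - Kb|$ with $a = \tanh(x/K)$ and $b = x/K$. The left-hand side then becomes $|K\tanh(x/K) - x|$ while the right-hand side becomes $\frac{|x|^{3}}{3K^{2}}$, which is exactly the claimed inequality.

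There is essentially no obstacle here, and in particular no further analysis of $\tanh$ is required beyond what was already established in the previous corollary: the entire argument is one substitution followed by multiplication by a positive constant. One could alternatively mimic the direct calculus argument used for the unscaled case, studying the signs of the derivatives of $g_{\pm}(x) = \pm(K\tanh(x/K) - x) + \frac{|x|^{3}}{3K^{2}}$ separately on $[0,\infty)$ and $(-\infty,0]$, but this would merely reprove the base case after an unnecessary change of variable, so I would stick with the rescaling route.
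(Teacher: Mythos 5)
Your proposal is correct and is exactly the paper's argument: the paper also applies the preceding corollary at $x/K$ and multiplies the resulting inequality by $K>0$. Nothing further is needed.
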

		
		\begin{proof}
			Apply previous corollary on $x/K$, to get $| \tanh(x/K) - x/K| \le \frac{|x|^{3}}{K^{3}3}$, and then multiply by $K$. 
		\end{proof}
	}
	
	\shortermcu{
		\subsection{Some properties of sigmoids}
	}
	
	%
	%
	\begin{lemma} \label{tricksigmoid}
		Consider $T(d,l) = \signb{d-3/4+l/2}$.
		For $l \in [0,1]$, we have $T(0,l)=0$, and $T(1,l)=l$.
	\end{lemma}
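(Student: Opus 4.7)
The plan is to do a straightforward case analysis on $d \in \{0,1\}$, using the explicit piecewise affine shape of $\signb{}$ recalled in the paper: namely, $\signb{x}=0$ for $x \le 1/4$, $\signb{x}=1$ for $x \ge 3/4$, and in between $\signb{}$ interpolates affinely between these two values, so that on the transition interval $[1/4, 3/4]$ we have $\signb{x} = 2x - 1/2$. Given this, the whole lemma reduces to checking that, for each of the two values of $d$, the argument $d - 3/4 + l/2$ stays (for $l \in [0,1]$) in a region where $\signb{}$ has a simple closed form, and then computing directly.

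First I would handle the case $d=0$: the argument is $-3/4 + l/2$, which for $l \in [0,1]$ ranges over $[-3/4, -1/4]$. Every value in this range is below the threshold $1/4$, so by the definition of $\signb{}$ the output is $0$, giving $T(0,l)=0$ as claimed.

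Next I would handle the case $d=1$: the argument is $1/4 + l/2$, which for $l \in [0,1]$ ranges exactly over $[1/4, 3/4]$, i.e.\ precisely the affine transition zone. On this interval $\signb{x} = 2x - 1/2$, so
\[
T(1,l) \;=\; 2\!\left(\tfrac14 + \tfrac{l}{2}\right) - \tfrac12 \;=\; \tfrac12 + l - \tfrac12 \;=\; l,
\]
which is the second claim. This concludes both cases.

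There is no real obstacle: the only point requiring care is that the argument $1/4 + l/2$ must land in the affine piece of $\signb{}$ for the whole range $l \in [0,1]$, and this is arranged by the choice of the offset $-3/4$ and the factor $1/2$ multiplying $l$, which was presumably the purpose of the formula. The lemma will then serve as a building block (a "soft AND-like gate": output $l$ when the indicator $d$ is on, $0$ otherwise) for subsequent constructions manipulating bits encoded as $0/1$ values.
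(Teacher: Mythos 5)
Your proof is correct and follows essentially the same route as the paper, which simply checks the cases $d=0$ and $d=1$ from the definition of $\signb{}$; you merely spell out the piecewise-affine form (value $0$ below $1/4$, value $1$ above $3/4$, affine interpolation $2x-\tfrac12$ in between), which is indeed the intended reading of $\signb{}$. No gap.
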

	
	\manondufutur{
		If we want to do something similar using $\tanh$, from above descriptions, $$T(d,l)=100 \tanh(l/100 + (d - 1) 100) + 100 \tanh(-d + 1)/\tanh(1)$$
		is very close to $0$ for $d=0$, and very close to $l$ for $d=1$, when $l \in [-1..1]$.  But, NOTICE THAT THIS expression requires the inverse of $\tanh(1)$.
	}
	
	\manondufutur{
		\begin{lemma}\label{lem:sigmoids}
			An expression such as $\If(q-\alpha_{n},T_{n},\If(q-\alpha_{n-1},T_{n-1},\dots,\If(q-\alpha_{1},T_{1},E_{1})))$, for integers $\alpha_{1} < \alpha_{2} <\dots, \alpha_{n}$ is some piecewise continuous function that values $E_{1}$ for $x \le \alpha_{1} + \frac14$, $T_{i}$ on $[\alpha_{i}+\frac34,\alpha_{i+1}+\frac14]$ for $i \in \{1,2,\dots,n-1\}$, $T_{n}$ for $x \ge \alpha_{n} + 3/4$.  It is equivalent to some  neural function.
		\end{lemma}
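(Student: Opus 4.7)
The plan is to split the argument into two essentially independent parts: first, verifying the claimed values on each of the prescribed intervals, and second, exhibiting an explicit expression in neural-function form.

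For the value claims, I would unfold the nested $\If$ using the equivalent form $\If(x,y,z) = z + \signb{x}(y-z)$, together with the facts $\signb{x}=1$ for $x\ge \tfrac{3}{4}$ and $\signb{x}=0$ for $x\le \tfrac{1}{4}$. A case analysis on $q$ then completes this part. When $q\le \alpha_1+\tfrac14$, since the $\alpha_j$ are strictly increasing integers, every $q-\alpha_j \le \tfrac14$, so each $\If$ returns its third argument and the whole nesting collapses to $E_1$. When $q\in[\alpha_i+\tfrac34,\alpha_{i+1}+\tfrac14]$, one checks $q-\alpha_j\ge \tfrac34$ for $j\le i$ and $q-\alpha_j\le \tfrac14$ for $j\ge i+1$; an induction from the innermost $\If$ outward then shows the $i$-th inner $\If$ evaluates to $T_i$, and every subsequent outer $\If$ returns its third argument, propagating $T_i$ to the top. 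When $q\ge\alpha_n+\tfrac34$ all $q-\alpha_j\ge \tfrac34$ and the outermost $\If$ immediately returns $T_n$.

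For the neural-function claim, I would rewrite the nested expression as the telescoping sum
\[ E_1 + \sum_{j=1}^n \signb{q-\alpha_j}\bigl(T_j-T_{j-1}\bigr),\qquad T_0 := E_1, \]
which agrees with the above case analysis on every specified interval (since each $\signb{q-\alpha_j}$ is there either exactly $0$ or exactly $1$). Assuming the $T_j$ and $E_1$ are bounded affine (more generally, bounded neural) functions of the remaining variables, each product $\signb{q-\alpha_j}(T_j-T_{j-1})$ can be turned into a single $\signb{}$-of-affine expression via Lemma~\ref{tricksigmoid}: after shifting and rescaling $T_j-T_{j-1}$ into $[0,1]$ by splitting it into positive and negative parts, a product $d\cdot \ell$ with $d=\signb{q-\alpha_j}\in\{0,1\}$ and $\ell\in[0,1]$ becomes $\signb{d-3/4+\ell/2}$. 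The resulting expression is thus a finite affine combination of $\signb{}$'s applied to affine combinations of $q$ and of further $\signb{}$ terms, which is exactly a neural function.

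The hardest step is this last one: the definition of neural function explicitly forbids multiplications, admitting only affine combinations of $\signb{}$ of (recursively) neural arguments, so one cannot naively write the product of $\signb{q-\alpha_j}$ and the neural function $T_j - T_{j-1}$. Lemma~\ref{tricksigmoid} is precisely what resolves this, by encoding the product of a $\{0,1\}$-valued quantity with a $[0,1]$-valued quantity inside a single $\signb{}$; everything else is routine bookkeeping about affine shifts, scalings by bounds on the $T_j$'s, and tracking the depth of the resulting network.
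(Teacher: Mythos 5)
Your proposal is correct in substance, but it routes the argument differently from the paper. The paper's proof is purely algebraic: it observes that for integer thresholds $\alpha_{i}<\alpha_{j}$ one has $\signb{q-\alpha_{i}}\,\signb{q-\alpha_{j}}=\signb{q-\alpha_{j}}$ identically, and uses this to show by induction that the nested $\If$ rewrites \emph{exactly, for every} $q$, into the telescoping sum $E_{1}+\sum_{j}\signb{q-\alpha_{j}}(T_{j}-T_{j-1})$; since the $T_{i}$ and $E_{1}$ are intended to be constant values (as in the subsequent $\send$ lemmas), this sum is already an affine combination of $\signb{}$'s of affine expressions, hence a neural function, and the interval values are then read off directly. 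You instead verify the interval values by a direct case analysis (fine, and essentially the same computation), but you justify the telescoping form only on the listed intervals rather than as a global identity, and you treat $T_{j}$, $E_{1}$ as possibly non-constant bounded neural functions, which forces you to invoke Lemma \ref{tricksigmoid} to remove the products $\signb{q-\alpha_{j}}(T_{j}-T_{j-1})$. That generalization is precisely the device the paper reserves for the later simulation of the Turing machine step function; it is not needed here, and it buys you only agreement with the original expression on the specified intervals (for non-constant $T_{j}$ exact equality is impossible anyway, since a genuine product is not piecewise affine), whereas the paper's collapsing identity gives equality everywhere in the constant case. Two minor points if you keep your route: a single affine shift-and-rescale of $T_{j}-T_{j-1}$ using a bound suffices, and ``splitting into positive and negative parts'' should be avoided since taking a positive part is not an affine (hence not a neural) operation; and to get the lemma's ``equivalent to some neural function'' in the strong pointwise sense, you would still need the product-collapsing identity (or an explicit check in the transition zones) to upgrade your on-interval agreement to a global one.
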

		
		\begin{proof}
			Observe that any product of the form $\signb{q-\alpha_{i}} \signb{q-\alpha_{j}}$ rewrites to $\signb{q-\alpha_{j}}$ if $\alpha_{i} < \alpha_{j}$ from the definition of $\signb{}$. 
			It follows from induction that above expression rewrites to  $E_{1} + \signb{q-\alpha_{1}} (T_{1}- E_{1}) + \signb{q-\alpha_{2}} (T_{2}- T_{1}) + \dots + \signb{q-\alpha_{n}} (T_{n}- T_{n-1})$, from which the result follows easily.
		\end{proof}

		Actually, let's propose the following notation.
	}
	
	\begin{lemma}\label{lem:switch}
		Assume you are given some integers $\alpha_{1},  \alpha_{2},\dots,$ $\alpha_{n}$, and  some values $V_{1}, V_{2}, \dots,V_{n}$. Then there is some  neural function, that we write 
		$\send({\alpha_{i} \sendsymbol V_{i}})_{i \in \{1,\dots,n\}}$, that maps any $x \in [\alpha_{i}-1/4, \alpha_{i}+1/4]$ to $V_{i}$, for all $i \in \{1,\dots,n\}$. 
	\end{lemma}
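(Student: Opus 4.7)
The plan is straightforward: realize $\send$ as a telescoping sum of shifted $\signb{}$'s. After permuting indices I may assume $\alpha_1 < \alpha_2 < \cdots < \alpha_n$; since the $\alpha_i$ are integers, consecutive ones satisfy $\alpha_{i+1}\ge\alpha_i+1$, which is exactly the gap needed to separate the target intervals $[\alpha_i-1/4,\alpha_i+1/4]$ cleanly. I would then set
$$F(x)\;=\;V_1\;+\;\sum_{i=2}^{n}(V_i-V_{i-1})\cdot\signb{x-\alpha_i+1}.$$

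The verification amounts to showing that each $\signb{\cdot}$ is fully saturated to $0$ or to $1$ on every target interval. For $x\in[\alpha_j-1/4,\alpha_j+1/4]$, the argument $x-\alpha_i+1$ lies in $[\alpha_j-\alpha_i+3/4,\;\alpha_j-\alpha_i+5/4]$. When $i\le j$ the lower endpoint is at least $3/4$, hence $\signb{x-\alpha_i+1}=1$; when $i>j$ the upper endpoint is at most $1/4$ (using $\alpha_i\ge\alpha_j+1$), hence $\signb{x-\alpha_i+1}=0$. The sum therefore telescopes to $V_1+\sum_{i=2}^{j}(V_i-V_{i-1})=V_j$, as required.

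That $F$ is a neural function is immediate from the grammar of Section~\ref{sec:functions}: each inner argument $x-\alpha_i+1$ is affine in $x$, so each $\signb{x-\alpha_i+1}$ is an admissible $\signb{g}$ sub-expression with $g$ neural, and the outer expression is an affine combination of these sign terms together with the constant $V_1$, using real coefficients $V_i-V_{i-1}$. There is essentially no obstacle in the argument: the only delicate point is choosing the shift $-\alpha_i+1$ inside the sign, and it is forced by the fact that $\signb{}$ completes its $0$-to-$1$ transition exactly on $[1/4,3/4]$ together with the integer gap between consecutive $\alpha_i$'s.
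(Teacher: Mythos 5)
Your proof is correct and takes essentially the same route as the paper: the paper's own construction is the telescoping sum $V_{1} + \signb{x-\alpha_{1}}\,(V_{2}-V_{1}) + \dots + \signb{x-\alpha_{n-1}}\,(V_{n}-V_{n-1})$ after sorting, which differs from yours only by the harmless choice of shift inside $\signb{}$ (the paper anchors at $\alpha_{i}$, you at $\alpha_{i}-1$), both being saturated on the target intervals thanks to the integer gap. Your explicit saturation check and the remark that the expression fits the neural-function grammar are exactly the verifications the paper leaves implicit.
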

	
	\begin{proof}
		Sort the $\alpha_{i}$ so that $\alpha_{1} < \alpha_{2} <\dots, \alpha_{n}$. Then consider $T_{1} + \signb{q-\alpha_{1}} (T_{2}- T_{1}) + \signb{q-\alpha_{2}} (T_{3}- E_{3}) + \dots + \signb{q-\alpha_{n-1}} (T_{n}- T_{n-1}).$
	\end{proof}
	More generally:
	\begin{lemma} \label{lem:switch:pairs}
		Let $N$ be some integer. 
		Assume we are given some integers $\alpha_{1},  \alpha_{2},\dots, \alpha_{n}$, and  some values $V_{i,j}$ for $1 \le i \le n$, and $0 \le j < N$. Then there is some  neural function, that we write 
		$\send({(\alpha_{i},j) \sendsymbol V_{i,j}})_{i \in \{1,\dots,n\}, j \in \{0,\dots,N-1\}}$, that maps any $x  \in  [\alpha_{i}-1/4, \alpha_{i}+1/4]$
		and $y \in [j-1/4,j+1/4]$ to $V_{i,j}$, for all $i \in \{1,\dots,n\}$, $j \in \{0,\dots,N-1\}$.
	\end{lemma}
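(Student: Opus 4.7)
The plan is to iterate the one-variable switch of Lemma~\ref{lem:switch} in both coordinates, and glue the two coordinates together using the pseudo-multiplication gadget of Lemma~\ref{tricksigmoid}. A naive attempt would set $\send((\alpha_i,j)\sendsymbol V_{i,j}) := \sum_{i,j} V_{i,j}\, \chi^x_i(x)\, \chi^y_j(y)$ for $0/1$-valued indicators $\chi^x_i,\chi^y_j$, but such an expression contains a genuine product of two non-constant $\signb{}$-terms, which is not permitted in the class of neural functions. The main obstacle is therefore to realise this logical AND inside the neural-function class; everything else is bookkeeping.

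First, for each $i\in\{1,\dots,n\}$ I invoke Lemma~\ref{lem:switch} with target values $1$ at $\alpha_i$ and $0$ at every $\alpha_k$ with $k\ne i$, producing a neural function $\chi^x_i(x)$ which equals $1$ on $[\alpha_i-1/4,\alpha_i+1/4]$ and $0$ on each $[\alpha_k-1/4,\alpha_k+1/4]$ for $k\ne i$. Symmetrically, for each $j\in\{0,\dots,N-1\}$ I build a neural function $\chi^y_j(y)$ taking value $1$ on $[j-1/4,j+1/4]$ and $0$ on every $[l-1/4,l+1/4]$ with $l\ne j$. The key point is that on the rectangles of interest these indicators take their values \emph{exactly} in $\{0,1\}$, not just approximately.

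Next I use Lemma~\ref{tricksigmoid}: the map $T(d,l)=\signb{d-3/4+l/2}$ satisfies $T(0,l)=0$ and $T(1,l)=l$ for $l\in[0,1]$, so on $\{0,1\}\times\{0,1\}$ it agrees with the product. I form $T(\chi^x_i(x),\chi^y_j(y))$; this is a neural function, since its argument $\chi^x_i(x)-3/4+\chi^y_j(y)/2$ is an affine combination of neural functions and is then fed into a single $\signb{}$. When $(x,y)$ lies in a rectangle $[\alpha_{i'}-1/4,\alpha_{i'}+1/4]\times[j'-1/4,j'+1/4]$, both indicators take values in $\{0,1\}$, and $T(\chi^x_i(x),\chi^y_j(y))$ equals $1$ exactly when $(i',j')=(i,j)$ and $0$ otherwise.

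Finally I set $\send((\alpha_i,j)\sendsymbol V_{i,j})_{i,j} := \sum_{i,j} V_{i,j}\, T(\chi^x_i(x),\chi^y_j(y))$. Since the $V_{i,j}$ are constants, this is an affine combination of neural functions, hence neural. On each prescribed rectangle exactly the $(i,j)$-th summand contributes $V_{i,j}$ while all others vanish, which is the required behaviour. The only non-routine step is the use of Lemma~\ref{tricksigmoid} to turn the forbidden product $\chi^x_i(x)\cdot\chi^y_j(y)$ into a single $\signb{}$ of an affine expression; once this device is in place, the argument is assembly.
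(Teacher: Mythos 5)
Your proof is correct, but it takes a genuinely different route from the paper. The paper reduces the two-argument switch to the one-argument switch of Lemma~\ref{lem:switch} by an arithmetic pairing of the coordinates: it first snaps $x$ to $\alpha_{i}$ via $\send(\alpha_{i}\sendsymbol\alpha_{i})_{i}(x)$, then feeds the single value $N\cdot\send(\alpha_{i}\sendsymbol\alpha_{i})_{i}(x)+y$, which lies in $[N\alpha_{i}+j-1/4,\,N\alpha_{i}+j+1/4]$, into the one-variable switch $\send(N\alpha_{i}+j\sendsymbol V_{i,j})$ indexed by the (automatically distinct) integers $N\alpha_{i}+j$. You instead build per-coordinate $\{0,1\}$-valued indicators $\chi^{x}_{i}$, $\chi^{y}_{j}$ from Lemma~\ref{lem:switch} and realise their conjunction without multiplication through the gating gadget $T(d,l)=\signb{d-3/4+l/2}$ of Lemma~\ref{tricksigmoid}, then take the affine combination $\sum_{i,j}V_{i,j}\,T(\chi^{x}_{i}(x),\chi^{y}_{j}(y))$; all steps stay inside the class of neural functions since that class is closed under affine combinations and under applying $\signb{}$ to a member, and on each rectangle the indicators are exactly in $\{0,1\}$ so Lemma~\ref{tricksigmoid} applies. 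Your construction is more modular and symmetric in the two arguments (and iterates readily to more than two integer arguments by nesting $T$), at the cost of roughly $nN$ extra sign gates for the gating layer; the paper's version reuses only Lemma~\ref{lem:switch} twice and avoids the indicator layer, but relies on the coordinate encoding $N\alpha_{i}+j$ and the induced rescaling by $N$. Interestingly, your use of Lemma~\ref{tricksigmoid} to eliminate a product of two $\signb{}$-terms is exactly the device the paper employs elsewhere (in the construction of $\bar{Next}$), so your argument is very much in the spirit of the paper even though its own proof of this lemma proceeds differently.
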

	
	\begin{proof}
		If we define the function by 
		\begin{align*}
		&\send({(\alpha_{i},j) \sendsymbol V_{i,j}})_{i \in \{1,\dots,n\}, j \in \{1,\dots,N\}}(x,y)\\
		&= \send(N\alpha_{i}+j \sendsymbol V_{i,j})_{i \in \{1,\dots,n\}, j \in \{1,\dots,N\}}(Nx+y) 
		\end{align*}
		
		this works when $x=\alpha_{i}$ for some $i$.
		Considering instead 
		$$\send(N\alpha_{i}+j \sendsymbol V_{i,j})_{i \in \{1,\dots,n\}, j \in \{1,\dots,N\}}(N
		\send(\alpha_{i} \sendsymbol \alpha_{i})_{i \in \{1,\dots,n\}}(x)+y)$$ works for any $x  \in  [\alpha_{i}-1/4, \alpha_{i}+1/4]$.

	\end{proof}

	\section{Simulating Turing machines with functions of $\manonclasslightidealsigmoid$ }
	\label{fptimedansmanonclass}
	
	This section is devoted to proving a kind of reverse implication of the  following proposition, whose proof follows by induction 
	from standard arguments exactly as in \cite{BlancBournezMCU22vantardise}.
	
	\begin{proposition}[{\cite{BlancBournezMCU22vantardise}}] \label{prop:mcu:un}
		All functions of $\manonclasslightidealsigmoid$ \manondufutur{as well as  $\manonclasslighttanh$}  are computable (in the sense of computable analysis) in polynomial time.
	\end{proposition}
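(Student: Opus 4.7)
The plan is to proceed by structural induction on the construction of functions in $\manonclasslightidealsigmoid$, maintaining the following inductive invariant: every $\tu f$ in the class is computable in the sense of computable analysis, its value $\|\tu f(\tu x, \tu n)\|$ is bounded by a polynomial in the sizes of its inputs, and both its modulus of continuity and the running time of its oracle computation are polynomially bounded (with integer arguments measured by their length, real arguments $x$ by some $X$ with $x \in [-2^X, 2^X]$, and $n$ the output precision). Proving this joint invariant, rather than plain polynomial-time computability, is what makes the induction go through cleanly.

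For the base case, each basic function is verified directly. The constants $\mathbf{0}, \mathbf{1}$, the projections $\projection{i}{k}$, and $\length{x}$ are immediate; $\plus$ and $\minus$ preserve polynomial-time dyadic computability in a single step; $\signb{x}$ is continuous piecewise affine with bounded slope, so a precision-$2^{-n}$ output reduces to a precision-$2^{-n-O(1)}$ query on its argument; and $\frac{x}{2}, \frac{x}{3}$ are divisions by fixed constants, polynomial-time computable on dyadics. For the composition $\tu f = \tu g \circ (\tu h_1, \dots, \tu h_k)$, I chain oracle calls: to obtain $\tu f$ at precision $2^{-n}$, I first invoke the modulus of $\tu g$ to determine the precision $n'$ required from its arguments, then query each $\tu h_i$ at precision $2^{-n'}$, then run the algorithm for $\tu g$. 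Because moduli and running times compose polynomially, the invariant is preserved.

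The main obstacle is the linear length ODE scheme. Suppose $\tu f(0, \tu y) = \tu g(\tu y)$ and $\dderivl{\tu f(x, \tu y)} = \tu u(\tu f(x, \tu y), \tu h(x, \tu y), x, \tu y)$ with $\tu u$ essentially linear in $\tu f$. Unfolding yields the recurrence $\tu f(x+1, \tu y) = \tu f(x, \tu y) + (\length{x+1} - \length{x}) \cdot \tu u(\tu f(x, \tu y), \tu h(x, \tu y), x, \tu y)$, so $\tu f$ effectively changes only at the $O(\length{x})$ indices where the length increments. The essential linearity of $\tu u$ in $\tu f$ guarantees that $\|\tu f(x, \tu y)\|$ grows at most exponentially in $\length{x}$, hence polynomially in the size of $x$, and that a numerical error on $\tu f(x, \tu y)$ is amplified by a polynomially bounded multiplicative factor at each effective step. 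Therefore, to output $\tu f(x, \tu y)$ at precision $2^{-n}$, it suffices to iterate the recurrence $O(\length{x})$ times using the inductively obtained algorithms for $\tu u$, $\tu h$, and $\tu g$, running each at a working precision that remains polynomial in the input sizes and in $n$.

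The delicate point, and the only one that is not completely routine, is the uniform error accounting in this iteration: it is exactly the essential linearity hypothesis of Definition \ref{def:linear lengt ODE} that rules out the iterated squaring blow-up that would destroy polynomial-time computability for general nonlinear right-hand sides. This bookkeeping is carried out verbatim as in \cite{BlancBournezMCU22vantardise}, since neither the absence of multiplication nor the presence of $\frac{x}{3}$ among the basic functions affects the argument, and it closes the induction.
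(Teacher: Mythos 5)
Your proposal is correct and follows essentially the same route as the paper: structural induction with base cases from standard computable analysis, composition handled via moduli of continuity, and the linear length ODE scheme handled by unrolling the $O(\length{x})$ effective steps while using essential linearity to bound both the growth of values and the per-step error amplification (the paper makes the latter precise by requiring the modulus of the right-hand side to be affine in the precision parameter, which is exactly your ``multiplicative amplification'' observation), with the remaining bookkeeping delegated to \cite{BlancBournezMCU22vantardise} just as the paper does.
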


	\olivier{Cette rédaction largement perfectible}
	
	We now basically go to prove that for any polynomial time computable function over the reals, we can construct some function $\tilde{\tu f} \in \manonclasslightidealsigmoid$ that simulates the computation of $f$. This basically requires us to be able to simulate the computation of a Turing machine using some functions from $\manonclasslightidealsigmoid$. We basically use the same ideas as in \cite{BlancBournezMCU22vantardise}, but with some improvements, as we need to avoid multiplications, and even get neural functions.

	%
	%
	%

	
	%
	%
	%
	
	Consider without loss of generality some Turing machine $M= (Q, \{0,1\}, q_{init}, \delta, F)$ using the  symbols $0,\symboleun,\symboledeux$, where $B=0$ is the blank symbol. The reason of the choice of symbols $\symboleun$ and $\symboledeux$ will be made clear latter.  We assume $Q=\{0,1,\dots,|Q|-1\}$.  Let 
	$$ \dots  l_{-k} l_{-k+1} \dots l_{-1} l_{0} r_0 r_1 \dots r_n .\dots$$ 
	denotes the content of the tape of the Turing machine $M$. In this representation, the head is in front of symbol $r_{0}$, and $l_i, r_{i} \in  \{0,\symboleun,\symboledeux\}$ for all $i$. 
	Such a configuration $C$ can be denoted by $C=(q,l,r)$, where $l,r \in \Sigma^{\omega}$ are (possibly infinite, if we consider that the tape can be seen as a non finite word, in the case there is no blank on it) words over alphabet $\Sigma=\{\symboleun,\symboledeux\}$ and $q \in Q$ denotes  the internal state of $M$.
	
	The idea is that such a configuration $C$ can also be encoded by some element $\encodageconfiguration(C)=(q, \bar l,\bar r) \in \N \times \R^{2}$, by considering 
	$
	\bar r = \sum_{n \geq 0} r_n 4^{-(n+1)}, \bar l = \sum_{n \geq 0} l_{-n} 4^{-(n+1)}.
	$
	
	Basically, in other words,  we encode the configuration of a bi-infinite tape Turing machine $M$ by real numbers using their radix \base{}  encoding, but using only digits $\symboleun$,$\symboledeux$. 
	If we write: $\encodageconfiguration: \Sigma^{\omega} \to \R$ for the function that maps a word $w=w_{0} w_{1} w_{2} \dots$ to
	$\encodagemot(w)= \sum_{n \geq 0} w_n \base^{-(n+1)}$, we can also write
	$\encodageconfiguration(C)=\encodageconfiguration(q,l,r)= (q,\encodagemot(l),\encodagemot(r)).$
	Notice that this lives in $Q \times [0,1]^{2}$. Denoting the image of $\encodagemot: \Sigma^{\omega} \to \R$ by $\Image$, this even lives in $Q \times \Image^{2}$. 
	%
	%
	%
	
	\shortermcu{
		A key point is to observe that }
	\begin{lemma}
		We can construct some  neural function $\bar {Next}$ in $\manonclasslightidealsigmoid$ that simulates one step of $M$, i.e. that computes the $Next$ function sending a configuration $C$ of Turing machine $M$ to the next one.  
	\end{lemma}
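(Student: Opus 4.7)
The plan is to realize $\bar{Next}$ as a composition of three stages in $\manonclasslightidealsigmoid$: (i) extract the current symbol $r_0$ under the head from $\bar r$ (and, for a left move, also the symbol $l_0$ from $\bar l$); (ii) look up the transition $\delta(q,r_0)=(q',s,d)$ as a finite table; (iii) update the state, the left half-tape, and the right half-tape accordingly. The restriction of the digit alphabet to $\{0,\symboleun,\symboledeux\}=\{0,1,3\}$ is precisely what guarantees that every intermediate quantity stays inside the safe intervals where the functions of Section~\ref{sec:functions} behave as stated.

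For stage (i), since $\bar r \in \Image$, the tail $4\bar r - r_0$ is again in $\Image \subseteq [0,1)$, and the digit restriction keeps $4\bar r$ in an interval compatible with the union of hypotheses of Lemmas~\ref{lem:i} and~\ref{lem:lambda}. Combining $\sigma_1$ and $\sigma_2$ via the selector $\lambda$ therefore yields a neural function returning $r_0 = \lfloor 4\bar r \rfloor$; the same recipe applied to $\bar l$ produces $l_0$. For stage (ii), the transition $\delta$ is a finite function $Q\times\{0,1,3\}\to Q\times\{0,1,3\}\times\{0,1\}$, so each of its three output coordinates is realized by a neural switch $\send((q_i,r)\sendsymbol V_{q_i,r})$ from Lemma~\ref{lem:switch:pairs}, producing $q'$, $s$, and $d$ as neural functions of $(q,r_0)$.

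For stage (iii), the right-move rule is $\bar l \mapsto (s+\bar l)/4$, $\bar r \mapsto 4\bar r - r_0$; the left-move rule is $\bar l \mapsto 4\bar l - l_0$, $\bar r \mapsto l_0/4 + \bar r/4 + (s-r_0)/16$. All of this is affine in $\bar l$, $\bar r$ and the extracted symbols: divisions by $4$ are two applications of $x/2$, and multiplications by $4$ are expressed as $x+x+x+x$, so multiplication of variables is never invoked. Finally we merge the two candidate updates using the gate $T(d,\ell)=\signb{d-3/4+\ell/2}$ of Lemma~\ref{tricksigmoid}, which satisfies $T(0,\ell)=0$ and $T(1,\ell)=\ell$: summing $T(d,\cdot)$ applied coordinatewise to the right-move output with $T(1-d,\cdot)$ applied to the left-move output selects the active branch, and the same gate handles the state update. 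The overall composition of these neural maps is a single neural function $\bar{Next}\in\manonclasslightidealsigmoid$.

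The main obstacle is an invariance issue: after one application of $\bar{Next}$ the new pair $(\bar l',\bar r')$ must again lie in $\Image$, so that the same symbol-extraction primitives remain valid at the next step. This closure holds because the update formulas above map digit sequences over $\{0,\symboleun,\symboledeux\}$ to digit sequences over the same alphabet, and the $\sigma_i/\lambda$ combination of Section~\ref{sec:functions} was tailored precisely to this digit set; once this invariance is verified, iterating $\bar{Next}$ inside a linear length ODE will simulate the Turing machine for any polynomial number of steps.
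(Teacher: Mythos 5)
Your stages (ii) and (iii) are essentially the paper's construction: the transition table is looked up with $\send$ from Lemma \ref{lem:switch:pairs}, the tape updates are affine (your left-move formula $l_0/4+\bar r/4+(s-r_0)/16$ is correct), and products with the direction bit are avoided via the gate of Lemma \ref{tricksigmoid}. The genuine gap is stage (i), the extraction of $r_0=\lfloor 4\bar r\rfloor$ by ``combining $\sigma_1$ and $\sigma_2$ via the selector $\lambda$'' (Lemmas \ref{lem:i} and \ref{lem:lambda}). First, such a combination is a barycenter $\lambda\sigma_1+(1-\lambda)\sigma_2$, hence a product with the real-valued $\lambda$: multiplication is not a basic function of $\manonclasslightidealsigmoid$ nor a neural-function operation, and the gate $T$ of Lemma \ref{tricksigmoid} does not apply since $\lambda$ takes values strictly between $0$ and $1$ (the paper only handles such barycenters in Section \ref{sec:computablereal} by hiding $\lambda$ inside the third argument of $\EncodeMul$). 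Second, and more fundamentally, the combination does not return the exact digit on the reachable set: whenever the symbol following $r_0$ is $\symboledeux$, we have $4\bar r\in[\lfloor 4\bar r\rfloor+\frac34,\lfloor 4\bar r\rfloor+1)$, where $\lambda=1$, $\sigma_2$ is no longer covered by Lemma \ref{lem:i}, and $\sigma_1$ returns $\lfloor 4\bar r\rfloor+1$; the extracted ``digit'' would be $2$ or $4$ instead of $\symboleun$ or $\symboledeux$. In Section \ref{sec:computablereal} an off-by-one in the rounded integer is harmless because both formulas remain $2^{-n}$-approximations of a continuous function, but here the digit must be exact: it drives the $\send$ lookup and the tape update, and one wrong digit corrupts the configuration, with no error-correcting mechanism in the subsequent iteration.

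The repair is the paper's actual trick, and it is precisely what the restriction to the digits $\symboleun,\symboledeux$ buys: since $4\bar r\in\{0\}\cup[1,2)\cup[3,4)$, the fixed neural map $\sigma(x)=\signb{x}+2\signb{x-2}$ computes $\lfloor 4\bar r\rfloor$ exactly on this set (the transition zones of $\signb{}$ fall inside the gaps), and $\xi(x)=x-\sigma(x)$ gives the exact fractional part; the two-argument functions $\sigma_1,\sigma_2,\lambda$ are designed for arbitrary reals and are needed only later, for the real-input argument. With this replacement your stages (ii)--(iii) coincide with the paper's proof; note only that the new state $q'$ need not (and cannot, since it may exceed $1$) be gated by $T$: it is obtained directly as the $\send$ lookup on $(q,\sigma(4\bar r))$.
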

	
	\begin{proof}
		
		We can write  $l = l_0 l^\bullet $ and $r = r_0r^\bullet $, where $l_{0}$ and $r_{0}$ are the first letters of $l$ and $r$, and $l^\bullet$ and $r^\bullet$  corresponding to the (possibly infinite) word  $l_{-1} l_{-2} \dots$ and $r_{1} r_{2} \dots$ respectively.
		%
		%
		\vspace{-0.2cm}
		\begin{center}
			\begin{tabular}{c c|c|c|c c}
				\hline 
				... & $l^\bullet $ & $l_0 $ & $r_0$ & $ r^\bullet$ & ... \\ 
				\hline 
				\multicolumn{1}{c}{} & 
				\multicolumn{2}{@{}c@{}}{$\underbrace{\hspace*{\dimexpr6\tabcolsep+3\arrayrulewidth}\hphantom{012}}_{l}$} & 
				\multicolumn{2}{@{}c@{}}{$\underbrace{\hspace*{\dimexpr6\tabcolsep+3\arrayrulewidth}\hphantom{3}}_{r}$}
			\end{tabular} 	
		\end{center}
		
		%
		%
		The function $ {Next}$ is basically of the form

		$\mathit{Next}(q,l,r) = \mathit{Next}(q,l^\bullet l_0,r_0r^\bullet) = (q', l', r')$ defined as a definition by case of type:
		\[ (q', l', r') = 
		\left\{ \begin{array}{ll}
		(q', l^\bullet l_0 x, r^\bullet) & \mbox{whenever $\delta(q,r_{0}) = (q',x, \rightarrow)$} \\
		(q', l^\bullet, l_0 x r^\bullet) & \mbox{whenever $\delta(q,r_{0}) = (q',x, \leftarrow)$} \\
		\end{array} \right.
		\]
		
		This rewrites as a function $\bar{Next}$ which is similar, working over the representation of the configurations as reals, considering $r_{0} =  \lfloor \base \bar{r}\rfloor$ 
		\begin{align*}
		\mathit{\bar{Next}}(q,\bar l, \bar r) &= \mathit{\bar{Next}}(q,\bar{l^\bullet l_0},\bar{r_0r^\bullet}) = (q', \bar{l'}, \bar{r'})\\
		&= \left\{ 
		\begin{array}{ll}
		(q', \bar{l^\bullet l_0 x}, \bar{r^\bullet}) & \mbox{whenever $\delta(q,r_{0}) = (q',x, \rightarrow)$} \\
		(q', \bar{l^\bullet}, \bar{l_0 x r^\bullet}) & \mbox{whenever $\delta(q,r_{0}) = (q',x, \leftarrow)$} \\
		\end{array}
		\right.
		\end{align*} 
		\vspace{-0.5cm}
		\begin{equation}
		\begin{array}{l} \label{textaremplacer}
		\mbox{
			$\bullet$  in the first case ``$\rightarrow$'' : $\bar{l'} = \base^{-1} \bar l + \base^{-1} x $ and $\bar{r'}  = \bar{r^\bullet} = \{\base \bar r\} $} \\
		\mbox{
			$\bullet$ in the second case ``$\leftarrow$'' : $\bar{l'} =\bar{ l^\bullet} = \{\base \bar l \} $ and $\bar{r'} = \base^{-2} \{\base \bar r\}  + \base^{-1} x + \lfloor \base \bar{l}\rfloor $ }
		\end{array}
		\end{equation}
		
		We introduce the following functions: $\rightarrow : Q \times \{0,1,3\} \mapsto \{0,1\}$ and $\leftarrow:Q \times \{0,1,3\} \mapsto \{0,1\}$ such that $\rightarrow(q,a)$ (respectively: $\leftarrow(q,a)$)
		is $1$ when $\delta(q,a) = (\_,\_, \rightarrow)$ (resp. $(\_,\_, \leftarrow)$), i.e. the head moves right (resp. left), and $0$ otherwise.
		\manon{commenté: 
			\[ \rightarrow(q,a) = \left\{ 
			\begin{array}{ll}
			1 & \mbox{if $\delta(q,a) = (\_,\_, \rightarrow) \in \Delta$} \\
			0 & \mbox{else}
			\end{array}
			\right.
			\]
			
			and 
			
			\[ \leftarrow(q,a) = \left\{ 
			\begin{array}{ll}
			1 & \mbox{if $\delta(q,a) = (\_,\_, \leftarrow) \in \Delta$} \\
			0 & \mbox{else}
			\end{array}
			\right.
			\]

			For $D = \{\rightarrow, \leftarrow \} $, }
		
		We can rewrite $ \mathit{\bar{Next}}(q,\bar l, \bar r) = (q',\bar l',\bar r')$ as:
		$$\bar{l'} = \displaystyle \sum_{q,r_{0}} \left[\rightarrow(q, r_{0}) \left( \frac{\bar{l}}{4} + \frac{x}{4} \right) + \leftarrow(q, r_{0}) \left\{ 4 \bar{l}\right\}\right] $$
		and
		$$\bar{r'} = \displaystyle \sum_{q,r_{0}} \left[\rightarrow(q, r_{0}) \left\{ 4 \bar{r} \right\} + \leftarrow(q, r_{0}) \left( \frac{\left\{ 4r_{0} \right\} }{4^2} + \frac{x}{4} + \lfloor 4 \bar{l} \rfloor \right)\right].$$

		\manon{commenté: 
			$$\bar{l'} = \sum_{\left. \begin{array}{ll}
				~~~~~~~~q\in Q \\
				\delta(q,r_0) = (q',x,D)\\
				\end{array}
				\right. } \rightarrow(q, r_0) \left( \frac{\bar{l}}{4} + \frac{x}{4} \right) + \leftarrow(q, r_0) \left\{ 4 \bar{l}\right\} $$
			
			$$\bar{r'} =  \sum_{ \left. \begin{array}{ll}
				~~~~~~~~q\in Q \\
				\delta(q,r_0) = (q',x,D)\\
				\end{array}
				\right. } \rightarrow(q, r_0) \left\{ 4 \bar{r} \right\} + \leftarrow(q, r_0) \left( \frac{\left\{ 4r \right\} }{4^2} + \frac{x}{4} + \lfloor 4 \bar{l} \rfloor \right) $$
		}
		The problem about such expressions is that we cannot expect the integer part and the fractional part function to be in $\manonclasslightidealsigmoid$ (as functions of this class are computable, and hence continuous, unlike the fractional part).   But, a key point is that from our trick of using only symbols $\symboleun$ and $\symboledeux$, we are sure
		that in an expression like $\lfloor \bar r \rfloor$, either it values $0$ (this is the specific case where there remain only blanks in $r$), or that $\base \bar r$ lives in interval
		$[\symboleun,2)$ or in interval $[\symboledeux,4)$. 
		\manon{MCU: That means that we could replace $\left\{\base \bar r \right\}$ by $\sigma(\base \bar r)$ where $\sigma$ is some  (piecewise affine) function obtained by composing in a suitable way the basic functions of $\manonclasslightidealsigmoid$. 
			\olivier{Il y avait une erreur: voir verification-formules-papier.mw sous maple }
			%
			\olivier{Solution $1$ qui serait correcte: Then, considering $i(x)= \If(x-2,3,\If(x,1,0))$. Solution $2$, j'écris sous une autre forme en simplifiant (merci maple)}
			In particular, $i(x)= \signb{x}+2\signb{x-2}$, 
			$\sigma(x)=x-i(x)$, then  $i(\base \bar r)$ would be the same as $\lfloor \base \bar r \rfloor$, and $\sigma(\base \bar r)$ would be the same as $\{ \base \bar r \}$ in our context in above expressions.
			In other words, for  $r_{0}= i(4\bar{r})$ we could replace the paragraph \eqref{textaremplacer} above by:}
		That means that we could replace $\lfloor \base \bar r \rfloor$ by $\sigma(\base \bar r)$ where $\sigma$ is some  (piecewise affine) function obtained by composing in a suitable way the basic functions of $\manonclasslightidealsigmoid$. 
		In particular, $\sigma(x)= \signb{x}+2\signb{x-2}$, 
		$\xi(x)=x-\sigma(x)$, then  $\xi(\base \bar r)$ would be the same as $\left\{ \base \bar r \right\}$, and $\sigma(\base \bar r)$ would be the same as $\lfloor \base \bar r \rfloor$ in our context in above expressions.
		
		In other words, considering  $r_{0}= \sigma(4\bar{r})$ we could replace the above expression by
		$\bar{l'}=$ $\displaystyle \sum_{q,r_{0}} \left[\rightarrow(q, r_{0}) \left( \frac{\bar{l}}{4} + \frac{x}{4} \right) + \leftarrow(q, r_{0}) \xi( 4 \bar{l})\right]$ \\
		and  $\bar{r'} = \displaystyle \sum_{q,r_{0}} \left[\rightarrow(q, r_{0}) \xi( 4 \bar{r} ) + \leftarrow(q, r_{0}) \left( \frac{\xi( 4r )}{4^2} + \frac{x}{4} + \sigma(4 \bar{l}) \right)\right],$
		\manon{commenté: 
			paragraph \eqref{textaremplacer} above by:
			
			\begin{itemize}
				\item  $$\bar{l'} = \sum_{\left. \begin{array}{ll}
					~~~~~~~~q\in Q \\
					\delta(q,r_0) = (q',x,D)\\
					\end{array}
					\right. } \rightarrow(q, r_0) \left( \frac{\bar{l}}{4} + \frac{x}{4} \right) + \leftarrow(q, r_0) \xi(4 \bar{l})$$
				
				\item $$\bar{r'} =  \sum_{ \left. \begin{array}{ll}
					~~~~~~~~q\in Q \\
					\delta(q,r_0) = (q',x,D)\\
					\end{array}
					\right. } \rightarrow(q, r_0)  \xi(4 \bar{r}) + \leftarrow(q, r_0) \left( \frac{ \xi(4r) }{4^2} + \frac{x}{4} +  \sigma(4 \bar{l})  \right) $$
			\end{itemize}
		}
		%
		%
		and get something that would be still work exactly the same, but using only piecewise continuous functions.
		
		%
		
		%
		%
		%
		%

		\olivier{Il y avait des erreurs dans MCU, sur l'ordre. Normalement, ai remis dans bon ordre, en utilisant Lemma \ref{lem:sigmoids}}
		
		We could then write: 
		\olivier{Solution 1 qui serait correcte: 
			$$q'= \If(q-|Q-2|, nextq^{|Q|-1}, \cdots \If(q-2, nextq^{3}, \If(q-1, nextq^{2}, \If(q-0, nextq^{1},nextq^{0}))))$$

			where 
			$$nextq^{q}=\If(v-1,nextq^{q}_{\symboledeux}, \If(v-0,nextq^{q}_{\symboleun},nextq^{q}_{0}))$$
			and}\olivier{
			
			Alternative: 
			Solution $2$, j'écris sous une autre forme en simplifiant (merci maple)
			
			$$q'= \If(3q+v-3|Q-2|-2, nextq_{3}^{|Q|-1}, \cdots \If(3q+v-6, nextq_{0}^{2}, \If(3q+v-5,nextq_{3}^{1}, \If(3q+v-4, nextq_1^{1}, \If(3q+v-3,$$
			$$ nextq_{0}^{1}\If(3q+v-2,nextq_{3}^{0},\If(3q+v-1,nextq_{1}^{0},nextq_{0}^{0}))))$$
		}\olivier{
			Alternative: Solution 3.

			\olivier{miex couper/latexiser, et écrire cette chose}
			$$q'= nextq_{0}^{0} + \signb{3q+v-1} (nextq_{1}^{0}- nextq_{0}^{0}) + \signb{3q+v-2} (nextq_{3}^{0}- nextq_{1}^{0}) +$$
			$$ \signb{3q+v-3} (nextq_{0}^{1}- nextq_{3}^{0})
			+ \signb{3q+v-4} (nextq_{1}^{1}- nextq_{0}^{1}) + \signb{3q+v-5} (nextq_{3}^{1}- nextq_{1}^{1})$$
			$$  + \dots + \signb{3q+v-3|Q-2|-2} (nextq_{3}^{|Q|-1}- nextq_{1}^{|Q|-1})
			$$
			where
			
			Alternative: Solution 4.
		}
		$$q'=\send((q,r) \sendsymbol nextq^{q}_{r})_{q \in Q, r\in \{0,1,3\}}(q,\sigma(4 \bar r)),$$
		using notation of Lemma \ref{lem:switch:pairs}, 
		where 
		$nextq^{q}_{r_{0}}=q'$ if $\delta(q,r_{0})= (q',x,m)$ for $m \in \{\leftarrow,\rightarrow\}$.

		\olivier{on a bien défi
			ni v?}\olivier{Solution 1 qui serait correcte: 
			Similarly, we can write 
			$$\bar{r'}= \If(q-|Q-2|, nextr^{|Q|-1}, \cdots \If(q-2, nextr^{3}, \If(q-1, nextr^{2}, \If(q-0, nextr^{1},nextr^{0}))))$$

			where

			$$nextr^{q}= \If(v-1,nextr^{q}_{\symboledeux}, \If(v-0,nextr^{q}_{\symboleun},nextr^{q}_{0}))$$
		}\olivier{
			Alternative: Solution 3.

			\olivier{miex couper/latexiser, et écrire cette chose}
			$$\bar{r'}= nextr_{0}^{0} + \signb{3q+v-1} (nextr_{1}^{0}- nextr_{0}^{0}) + \signb{3q+v-2} (nextr_{3}^{0}- nextr_{1}^{0}) +$$
			$$ \signb{3q+v-3} (nextr_{0}^{1}- nextr_{3}^{0})
			+ \signb{3q+v-4} (nextr_{1}^{1}- nextr_{0}^{1}) + \signb{3q+v-5} (nextr_{3}^{1}- nextr_{1}^{1})$$
			$$  + \dots + \signb{3q+v-3|Q-2|-2} (nextr_{3}^{|Q|-1}- nextr_{1}^{|Q|-1})
			$$

			Alternative: Solution 4.
		}\olivier{
			and 
			$
			\bar{r'}= \send((q,v) \sendsymbol nextr^{q}_{v})_{q \in Q, v\in \{0,1,3\}}(q,\sigma(4 \bar r)),$ still using notation of Lemma \ref{lem:switch:pairs}, 
			where $nextr^{q}_{v}$  corresponds to the corresponding expression in the item above according to the value of $\delta(q,v)$.
		}
		\olivier{Mieux expliquer cela: il faudrait etre plus explicite}
		
		We could also replace every $\rightarrow(q,r)$ in above expressions for $\bar l'$ and $\bar l'$ by $$\send((q,r) \sendsymbol \rightarrow(q,r))(q,\sigma(4 \bar r))$$ and symmetrically for $\leftarrow(q,r)$.
		However, if we do so, we still might have some multiplications in the above expressions. 
		
		The key is to use Lemma \ref{tricksigmoid}: We can also write the above expressions as
		$$
		\begin{array}{ll}
		\bar{l'} =   \sum_{q,r}& \left[ 
		\signbb{\send((q,r) \sendsymbol \rightarrow(q,r)) (q,\sigma(4 \bar r)) -\frac34 + \frac12 
			\left( \frac{\bar{l}}{4} + \frac{x}{4} \right) \right.} \\
		&+ \left.
		\signbb{\send((q,r) \sendsymbol \leftarrow(q,r)) (q,\sigma(4 \bar r)) -\frac34 + \frac12 
			\xi( 4 \bar{l})\right]} 
		\end{array}
		$$
		$$
		\begin{array}{ll}
		\bar{r'} =  & \sum_{q,r} \left[ 
		\signbb{\send((q,r) \sendsymbol \rightarrow(q,r)) (q,\sigma(4 \bar r)) -\frac34 + \frac12 
			\xi( 4 \bar{r} ) \right.} \\
		&+ \left.
		\signbb{\send((q,r) \sendsymbol \leftarrow(q,r)) (q,\sigma(4 \bar r)) -\frac34 + \frac12 
			(\frac{\xi( 4r )}{4^2} + \frac{x}{4} + \sigma(4 \bar{l}) )\right]} 
		\end{array}
		$$
		%
		%
		%
		%
		%
		%
		%
		%
		%
		\olivier{Plus besoin:
			The fact that these are ideal neural functions follows from Lemma \ref{lem:sigmoids}. 
		}
	\end{proof}
	
	Once we have one step, we can simulate some arbitrary computation of a Turing machine, using some linear length ODE:

	\begin{proposition} \label{prop:deux} 
		Consider some Turing machine $M$ that computes some function $f: \Sigma^{*} \to \Sigma^{*}$ in some time $T(\ell(\omega))$ on input $\omega$.  One can construct some function $\tilde{\tu f}: \N \times \R \to \R$ in $\manonclasslightidealsigmoid$ that does the same, with respect to the previous encoding: we have $\tilde{\tu f}(2^{T(\ell(\omega))},\encodagemot(\omega))$ provides $f(\omega)$. 
	\end{proposition}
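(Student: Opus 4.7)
The plan is to iterate the one-step function $\bar{Next}$ constructed in the previous lemma by means of a linear length ODE, using the first argument as a ``clock'' encoded so that its length counts the number of simulated steps.

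First, I would package a configuration as a vector $C=(q,\bar l,\bar r)\in\N\times[0,1]^2$ and note that $\bar{Next}$, being built from affine combinations of $\bar l$, $\bar r$ and the coordinates $x,\tfrac{x}{4},\tfrac{\xi(4\bar l)}{16},\ldots$ guarded by $\signb{\cdot}$ terms, is a \polynomialb{} expression which is \emph{essentially constant} in $C$: every occurrence of a component of $C$ appears either linearly outside any $\signb{\cdot}$ or inside a $\signb{\cdot}$ (which contributes degree $0$). In particular, $\tu u(F,x,\omega):=\bar{Next}(F)-F$ is essentially linear in $F$, so it is a legal right-hand side for the linear length ODE scheme of Definition~\ref{def:linear lengt ODE}.

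Next, I would define $F(x,\omega)$ as the solution of the linear length ODE
\[
F(0,\omega)=(q_{init},0,\omega),\qquad \dderivl{F(x,\omega)}=\bar{Next}(F(x,\omega))-F(x,\omega),
\]
which by the change-of-variable lemma recalled in Section~\ref{sec:discreteode} satisfies $F(2^{n},\omega)=\bar{Next}^{\,n}(q_{init},0,\omega)$ for every $n\in\N$. Feeding the input $\omega=\encodagemot(\omega)$ and the clock $N=2^{T(\ell(\omega))}$ therefore produces exactly the configuration of $M$ after $T(\ell(\omega))$ steps, which by hypothesis is a halting configuration whose $\bar r$-component (together with $\bar l$ if the output is read across the head) encodes $f(\omega)$. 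A final projection to the suitable component, composed with a trivial adjustment to realign the head at the start of the output (a fixed number of extra steps of $\bar{Next}$, absorbed in $T$), yields $\tilde{\tu f}(2^{T(\ell(\omega))},\encodagemot(\omega))=\encodagemot(f(\omega))$, proving the proposition.

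The one subtle point is to check that no numerical drift occurs during the iteration: because the configuration stays inside $\N\times\Image^2$ throughout the simulated computation, the reals $\bar l,\bar r$ never visit the ``bad'' intervals where the piecewise-affine surrogates $\sigma,\xi$ of $\lfloor\cdot\rfloor,\{\cdot\}$ disagree with their ideal counterparts; this is precisely why the encoding uses only digits $\symboleun,\symboledeux$, guaranteeing that $4\bar r\in[1,2)\cup[3,4)$ at every step, and so the discrete identity $F(2^{n+1},\omega)=\bar{Next}(F(2^{n},\omega))$ remains \emph{exact} (not merely approximate). Hence membership in $\manonclasslightidealsigmoid$ follows from a composition of $\bar{Next}\in\manonclasslightidealsigmoid$, the linear length ODE scheme, and a projection, all of which are operations of the class.
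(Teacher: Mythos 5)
Your proof is correct and follows essentially the same route as the paper: iterate $\bar{Next}$ via a linear length ODE driven by the clock $2^{T(\ell(\omega))}$, start from $(q_{init},0,\encodagemot(\omega))$, and project onto the tape component, with exactness guaranteed because the configurations stay in $\N\times\Image^{2}$. Your writing of the right-hand side as $\bar{Next}(F)-F$ is in fact the formally precise version of the recurrence the paper states more loosely as $\dderivl{\bar{Exec}}=\bar{Next}(\bar{Exec})$, so there is nothing to add.
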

	
	\begin{proof}
		The idea is to define the function $\bar {Exec}$ that maps some time $2^{t}$ and some initial configuration $C$ to the configuration  at time $t$.  This can be obtained using some linear length ODE using previous Lemma.
		$$
		\bar {Exec}(0,C) 
		=C \quad and \quad 
		\dderivl{\bar {Exec}}
		(t, C) 
		=\bar{Next}(\bar {Exec}(t,C)) 
		$$
		
		We can then get the value of the computation as $\bar {Exec}(2^{T(\ell(\omega))}, C_{init})$
		on input $\omega$, considering $C_{init}=(q_{0},0,\encodagemot(\omega))$.
		By applying some projection, we get the following function
		$\tilde{\tu f}(x,y)= \projection{3}{3}(\bar {Exec}(x, q_{0},0,y))$ that satisfies the property.	\end{proof}
	
	\section{Towards functions over the reals}
	\label{sec:computablereal}
	
	The purpose of this section is to prove Theorems \ref{th:main:one}, \ref{th:main:one:ex} and \ref{th:main:one:ex:p}. Actually,  the first two are  a special case of Theorem \ref{th:main:one:ex:p}, so we focus on the latter. \shortermcu{
		\subsection{Reverse implication}}
	The reverse implication of Theorem \ref{th:main:one:ex:p} mostly follows from Proposition  \ref{prop:mcu:un} and arguments from computable analysis. 
	
	For the direct implication, the difficulty is that we know from previous sections how to simulate Turing machines working over the Cantor-like set $\Image$,  while we want functions that work directly over the integers and over the reals.  A first key is to be able to convert from integers/reals to representations using only symbols $\symboleun$ and $\symboledeux$, that is to say, to map integers to $\Image$, and $\Image$ to reals as in \cite{BlancBournezMCU22vantardise}.
	However, we need a stronger statement than the one of \cite{BlancBournezMCU22vantardise} to be able to do both the convention and simultaneously some product (but avoiding the use the multiplication in its definition).
	

	\begin{lemma}[{From $\Image$ to $\R$}, and multiplying in parallel]  \label{lem:codage:manon}
		We can construct some  function $\EncodeMul: \N \times [0,1] \times \R \to \R$ in $\manonclasslightidealsigmoid$ that  maps $\encodagemot(\overline{d})$ and $\lambda$ with $\overline{d} \in \{1,3\}^*$  to  real $\lambda d$. It is surjective over the dyadic, in the sense that for any dyadic $d \in \dyadic$, there is some (easily computable) such $\overline{d}$ with $\EncodeMul(2^{\ell({d})},\overline{d},\lambda )=\lambda d$.
	\end{lemma}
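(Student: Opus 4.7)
The plan is to construct $\EncodeMul$ as the $\ell(d)$-fold iterate of a ``read-and-accumulate'' step, implemented via a linear length ODE on the first argument $2^{\ell(d)}$. The main novelty compared to the $\Encode$ function of \cite{BlancBournezMCU22vantardise} is that the partial product $b\cdot\lambda$ (for a bit $b\in\{0,1\}$ read off the encoding) can be produced without the operator $\times$: by Lemma~\ref{tricksigmoid}, the expression $T(b,\lambda)=\signb{b-3/4+\lambda/2}$ returns $0$ when $b=0$ and $\lambda$ when $b=1$ for $\lambda\in[0,1]$, using only $\signb{\cdot}$, subtraction, and the basic function $x/2$; for $\lambda$ of arbitrary magnitude one can use instead the identity $b\cdot\lambda=\ell(b)\cdot\lambda$ valid for $b\in\{0,1\}$, realized by a trivial linear length ODE whose right-hand side is just $\lambda$.

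First I would define a step function $\mathrm{Step}$ acting on a state $(r,s)$ with $r\in[0,1]$ and a parameter $\lambda$ as follows: read the leading base-$\base$ digit $v=\sigma(\base r)\in\{1,3\}$ using the piecewise-affine function $\sigma(x)=\signb{x}+2\signb{x-2}$ already introduced in Section~\ref{fptimedansmanonclass}; convert it into a bit $b=(v-1)/2\in\{0,1\}$; shift the encoding to $r'=\xi(\base r)=\base r-v\in[0,1]$; and update the accumulator as $s'=s/2+T(b,\lambda)/2$. All operations involved lie in $\manonclasslightidealsigmoid$, the new encoding $r'$ stays in $[0,1]$, and when $\lambda\in[0,1]$ the accumulator $s'$ also stays in $[0,1]$, so the iteration remains in a safe domain.

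Next I would iterate $\mathrm{Step}$ via a linear length ODE of the form of Definition~\ref{def:linear lengt ODE}, writing $\dderivl{H(x,r,\lambda)}=\mathrm{Step}(H(x,r,\lambda),\lambda)-H(x,r,\lambda)$, which is essentially linear in $H$, so that $H(2^n,r,\lambda)$ realizes $n$ iterations of $\mathrm{Step}$ starting from $(r,0)$. A direct induction then shows that for $r_0=\encodagemot(\overline{d})$ with $\overline{d}\in\{1,3\}^n$ having associated bits $b^{(0)},\dots,b^{(n-1)}$, the accumulator after $n$ steps equals $\lambda\sum_{i=0}^{n-1} b^{(i)} 2^{-(n-i)}$. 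Projecting onto this component defines $\EncodeMul$. For any dyadic $d$ of bit length $n=\ell(d)$, one obtains exactly $\lambda d$ by choosing $\overline{d}$ to encode the binary digits of $d$ in the appropriate order, possibly followed by a final self-addition rescaling $x\mapsto x+x$ iterated via a length ODE in order to absorb an integer part larger than one; this simultaneously yields the main equation and the surjectivity onto dyadics.

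The main obstacle is replacing the native product $b\cdot\lambda$ by an expression in $\manonclasslightidealsigmoid$, which is handled cleanly by Lemma~\ref{tricksigmoid} (or by a length-ODE argument for unbounded $\lambda$). A secondary subtlety, already encountered for $\bar{Next}$ in Section~\ref{fptimedansmanonclass}, is that $\sigma$ and $\xi$ are not the genuine floor and fractional-part functions but piecewise-continuous surrogates in $\manonclasslightidealsigmoid$ that agree with them only on safe intervals around the symbols $1$ and $3$; this is harmless because, by construction, $\overline{d}$ uses only these two digits, so $\base r$ always falls in a safe interval throughout the iteration.
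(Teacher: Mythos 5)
Your overall strategy is the paper's: iterate, via a linear length ODE, a digit-reading step built from the piecewise-affine surrogates of floor and fractional part (safe because only the digits $1,3$ occur in $\overline{d}$), avoid an explicit product with $\lambda$ through a case dispatch, and project onto the accumulator. The genuine gap is in how you recover a general dyadic $d$. With your single-digit encoding and the uniform update $s'=s/2+T(b,\lambda)/2$, after $n$ steps the accumulator equals $\lambda m 2^{-n}$, where $m$ is the integer formed by the bits; as you note yourself, it stays in $[0,1]$ when $\lambda\in[0,1]$, so no choice of $\overline{d}$ can yield $\lambda d$ for $d\ge 1$ in this pass. The repair you propose, a ``final self-addition rescaling iterated via a length ODE'', is exactly where the argument breaks down: the number of doublings needed is the position of the binary point of $d$, and under your encoding this information is not recoverable from the available arguments --- the bit string $\overline{d}$ carries no point position, and it is not a function of $\ell(d)$ --- so the doubling phase cannot be driven by anything the function receives. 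The paper resolves precisely this difficulty by enriching the encoding rather than post-processing: each binary digit of $d$ is encoded by a \emph{pair} of radix-$4$ symbols ($11/13$ for a digit before the point, $31/33$ for a digit after it), the step reads $\sigma(16\,\overline{r_1})\in\{5,7,13,15\}$ and dispatches between a double-and-add and an add-and-halve update, so a single uniform pass of length $\ell(\overline{d})$ already produces $\lambda d$, integer part included.

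Two secondary points. First, the trick $T(b,\lambda)=\signb{b-3/4+\lambda/2}$ returns $\lambda b$ only for $\lambda\in[0,1]$, whereas the third argument of $\EncodeMul$ ranges over $\R$; and your fallback ``$b\cdot\lambda=\ell(b)\cdot\lambda$ realized by a trivial linear length ODE'' is type-unsound, since $b$ is a real-valued expression computed inside the iteration and cannot serve as the (integer) derivation variable of a length ODE nested in the step. The paper instead keeps $\lambda$ as a component of the ODE unknown and has each branch of the $\send$-dispatch \emph{add} $\lambda$ or $0$ (or halve the accumulator), so the only products appearing are of $\signb{\cdot}$-terms, which are essentially constant in the unknown, with terms affine in the unknown; the right-hand side is therefore essentially linear and admissible in the schema without any boundedness assumption on $\lambda$. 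Second, your observation that $4r$ always stays in a safe neighborhood of the digits is correct and is the same argument as in the paper, except that the paper must place $16\,\overline{r_1}$ near $5,7,13,15$ because it consumes two radix-$4$ symbols per step.
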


	
	\begin{proof}
		Consider the following transformation: Every digit in the binary expansion of $d$  is encoded by a pair of symbols in the radix $4$ encoding of $\overline{d} \in [0,1]$: digit $0$ (respectively: $1$) is encoded by $11$ (resp. $13$) if before the ``decimal'' point in $d$, and digit $0$ (respectively: $1$) is encoded by $31$ (resp. $33$) if after. For example, for $d=101.1$ in base $2$, $\overline{d}=0.13111333$ in base $4$. 
		%
		%
		The transformation from $\overline{d}$ to $\lambda d$ can be done by considering function $F: \R^{3} \to \R^{3}$ 
		given by 
		\olivier{commenté:
			$
			F( \overline{r_1}, \overline{l_2},\lambda) = 
			\left\{
			\begin{array}{ll} (\xi(16 \overline{r_1}), 2 \overline{l_2} + 0,\lambda) &  \mbox{ whenever } \sigma( 16 \overline{r_1})= 
			5\\
			(\xi(16 \overline{r_1}), 2 \overline{l_2} + \lambda,\lambda) &  \mbox{ whenever } \sigma( 16 \overline{r_1})= 
			7\\
			(\xi(16 \overline{r_1}), (\overline{l_2} + 0)/2,\lambda) &  \mbox{ whenever } \sigma( 16 \overline{r_1})= 
			13\\
			(\xi(16 \overline{r_1}), (\overline{l_2} + \lambda )/2,\lambda) &  \mbox{ whenever } \sigma( 16 \overline{r_1})= 
			15
			\end{array}\right.
			$
			A natural candidate for this is an expression such as 
			Etait pas dans bon ordre selon maple. Mis dans l 'ordre:
			$$\If(i(16 \overline{r_1})-13, (\sigma(16 \overline{r_1}),(\overline{l_2} + \lambda)/2,\lambda),
			\If(i(16 \overline{r_1})-7, (\sigma(16 \overline{r_1}),(\overline{l_2} )/2,\lambda),
			\If(i(16 \overline{r_1})-5, (\sigma(16 \overline{r_1}),2 \overline{l_2} +\lambda ,\lambda),
			,(\sigma(16 \overline{r_1}),2 \overline{l_2}  ,\lambda))))$$
			
			Alternative:
		}
		%
		$\send( 
		5 \sendsymbol (\sigma(16 \overline{r_1}),
		2 \sendsymbol \overline{l_2} + 0,\lambda),$
		$7 \sendsymbol (\sigma(16 \overline{r_1}), 2 \overline{l_2} + \lambda,\lambda),$
		$13 \sendsymbol (\sigma(16 \overline{r_1}), (\overline{l_2} + 0)/2,\lambda),$
		$15 \sendsymbol (\sigma(16 \overline{r_1}), (\overline{l_2} + \lambda )/2,\lambda) )
		(16 \overline{r_1})$
		\olivier{ok que plus besoin de i: commenté}
		with  $\sigma$ and $\xi$ 
		constructed as suitable approximation of the integer part and the fractional part, as in the previous section. 
		%
		We then just need to apply $\ell(\overline{d})$ times $F$ on $(\overline{d},0,\lambda)$, and then project on the second component to get a function $\Encode$ that does the job. That is $\EncodeMul(x,y,\lambda)= \projection{2}{3}(G(x,y,\lambda))$
		with 
		$
		G(0,y,\lambda) = (\overline{d},0,\lambda) 
		\quad and \quad 
		\dderivl{G}
		(t, \overline{d},\overline{l},\lambda) 
		=F(G(\overline{d},\overline{l},\lambda)).
		$
	\end{proof}
	In a symmetric way:
	
	\olivier{Y'avait un bug dans Mcu: soit ca part de $\N^{d}$, soit de $\N$, mais il faut choisir.}
	\begin{lemma}[{From $\N$ to $\Image$, \cite{BlancBournezMCU22vantardise}}]  \label{lem:manquant} We can construct some function $\Decode: \N^{d} \to \R$ in $\manonclasslightidealsigmoid$ that maps $n \in \N^{d}$ to some (easily computable) encoding of $n$ in $\Image$.
	\end{lemma}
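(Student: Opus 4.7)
The plan is to focus on the case $d = 1$ first and then extend to arbitrary $d$ by standard interleaving. For $d = 1$, the idea is dual to the construction of Lemma \ref{lem:codage:manon}: rather than transforming an element of $\Image$ into a dyadic, I consume the binary digits of $n$ one by one and emit, into a real-valued accumulator, the corresponding two-symbol base-$\base$ patterns (``$11$'' for a $0$-bit and ``$13$'' for a $1$-bit). This will be done by defining a one-step transformation $F : \R^3 \to \R^3$ and iterating it via a linear length ODE along $\length{n}$.

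More precisely, I would maintain a triple $(m, a, s) \in \R^{3}$ with initial value $(n, 0, 1/16)$, where $m$ is the portion of $n$ still to be processed, $a$ the partial encoding being built, and $s$ the current scale factor. Using Lemma \ref{lem:mod2} to extract $b = \mod_{2}(m) \in \{0,1\}$ and Lemma \ref{lem:div2} to form $\div_{2}(m) = (m-b)/2$, the one-step map is
\[
F(m, a, s) \;=\; \bigl(\,\div_{2}(m),\; a + (5 + 2b)\,s,\; s/16\,\bigr),
\]
where $5 + 2b$ is exactly the base-$10$ value of the two base-$\base$ digits ``$11$'' or ``$13$''. Since $b$ is obtained from $\signb{}$-based functions, it is essentially constant in $(m,a,s)$; the update to $m$ is linear in $m$ with essentially constant coefficient $-1/2$ and an essentially constant shift $-b/2$; the update to $a$ is a linear combination (with essentially constant coefficient $5 + 2b$) of $s$; and the update to $s$ is linear in $s$ with constant coefficient $-15/16$. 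Hence $F$ is essentially linear in $(m,a,s)$ as required by Definition \ref{def:linear lengt ODE}, and division by $16$ is realised as four applications of $x/2$, so no multiplication is needed. I can therefore define $G(t, n) \in \manonclasslightidealsigmoid$ by $G(0, n) = (n, 0, 1/16)$ and $\dderivl{G}(t, n) = F(G(t, n)) - G(t, n)$, and set $\Decode(n) = \projection{2}{3}(G(n, n))$, so that after $\length{n}$ length-steps the accumulator $a$ equals the intended element of $\Image$.

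For $d > 1$, the cleanest extension is an interleaving trick: apply the $d = 1$ construction in parallel to each coordinate $n_{i}$ with a scale $s_{i}$ chosen to place the digits of $n_{i}$ in the residue class $i \bmod d$ of digit positions, so that the sum of the $d$ resulting elements of $\Image$ lies again in $\Image$ and uniquely encodes the tuple. This amounts to running $d$ copies of the previous ODE synchronously, which preserves essential linearity, and projecting out the sum of the accumulators.

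The main obstacle I anticipate is not the ODE machinery itself, which is a direct mirror of Lemma \ref{lem:codage:manon}, but rather verifying cleanly that every primitive used is legitimately in $\manonclasslightidealsigmoid$ and that the bit-extraction via $\mod_{2}$ and $\div_{2}$ behaves correctly on integer inputs, where the guarantees of Lemmas \ref{lem:mod2} and \ref{lem:div2} hold automatically since $n \in \N$ sits strictly inside the intervals $[\lfloor x \rfloor - 1/4, \lfloor x \rfloor + 1/2]$ and $[\lfloor x \rfloor, \lfloor x \rfloor + 1/2]$ where these functions return exact values. A secondary detail is bookkeeping the scale $s$ to ensure it decays fast enough that the accumulated sum falls in the correct Cantor-like set $\Image$ and is easily decodable by a Turing machine.
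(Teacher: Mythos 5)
Your construction is correct and is essentially the paper's own proof: the paper also treats $d=1$ by iterating, via a linear length ODE along $\length{n}$, a one-step map that extracts the low-order bit using $\mod_2$ and $\div_{2}$ (Lemmas \ref{lem:mod2} and \ref{lem:div2}) and appends the corresponding symbols to an accumulator, mirroring Lemma \ref{lem:codage:manon}, and it likewise dismisses $d>1$ as an easy extension. Your explicit scale component $s/16$ and the emission of two base-$\base$ digits per bit are just a more detailed bookkeeping of the same idea.
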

	
	\olivier{En fait, on prouve le Theorem \ref{th:main:one}, qui est déjà prouvé dans Manon et al. Adapter, ou alors dire que le super truc c'est que plus de multiplication}
	\olivier{
		We can now  prove the direct direction of  Theorem \ref{th:main:one}: Assume that $\tu f: \N^{d} \to \R^{d'}$ is computable in polynomial time. That means that each of its components are, so we can consider without loss of generality that $d'=1$. We assume also that $d=1$ (otherwise consider either multi-tape Turing machines, or some suitable alternative encoding in $\Encode$).  That means that we know that there is a TM polynomial time computable functions $d: \N^{d+1} \to \{\symboleun,\symboledeux\}^{*}$ so that on $\tu m,n$ it provides the encoding of some dyadic  $\phi(\tu m,n)$ with $\|\phi(\tu m,n)-\tu f(\tu m)\| \le 2^{-n}$ for all $\tu m$. 
		
		From Proposition \ref{prop:deux}, we can construct $\tilde{d}$ 
		with $\tilde{d}(2^{p(max(\tu m,n))},\Decode(n,\tu m))=d(\tu m,n)$ for some polynomial $p$ 
		corresponding to the time required to compute  $d$. 
		
		Both functions $\length{\tu x}=\length{x_1}+ \ldots
		+	\length{x_p}$ and $B(\tu x)=2^{\length{\tu x}\cdot \length{\tu x}}$ are in $\linearderivlength$ (see \THEPAPIERS). It is easily seen that : $\length{\tu x}^c\leq B^{(c)}(\length{\tu x}))$ where $B^{(c)}$ is the $c$-fold composition of function $B$.
		
		Then   $\tilde{\tu f}(\tu m,n)=Encode(\tilde{d}( B^{(c)}(\max(\tu m,n)), \Decode(n,\tu m)))$  provides a solution such that
		$\|\tilde{\tu f}(\tu m,2^{n})-\tu f(\tu m)\| \le 2^{-n}.$
	}
	\olivier{ section{Proving Theorems \ref{th:main:one:ex} and \ref{th:main:one:ex:p}}}
	
	\olivier{Enlevé ce texte qui dit que c'était super dur. Mais peut etre en redire un peu... pour nous amis les referees:
		
		This is clearly a harder task. In particular, a natural approach would be to consider some function $\Encode$ from $\R$ to $\Image$. Unfortunately, such a function $Decode$ is necessarily discontinuous, hence not-computable, hence cannot be in  the class:
		Given some real $d \in [0,1]$ (or $\R$), it would map it to $\bar d$ in $\Image$, that is to say some real whose radix $\base$ expansion contains only $\symboleun$ and $\symboledeux$, this would map it somehow to a sequence. Then we could use the previous techniques to simulate some arbitrary computation of a Turing machine, and use the previous $Encode$ function to map back the result of the computation to some real.
		
		The approach of \emph{mixing} of \cite{BCGH07} provides a solution, even if the constructions there, based on (classical) continuous ODEs use deeply some closure properties of these functions that are not true for discrete ODEs.
		We develop here some ideas based on these principles.

		Let $f: \R \mapsto \R$ computable in polynomial time.
		
		Considering $f_{|\N} : \N \mapsto \R$, by Theorem 5 (main Theorem 1) we know that there exists $\tilde{f} \in \manonclasslightidealsigmoid$ such that for all $\bf{x} \in \N, n\in \N, \|\tilde{\tu f}(\tu x,2^{n}) - \tu f_{|\N}(\tu x) \| \le 2^{-n}$. The problem is that we want to be able to talk about real arguments, and not only integer arguments.

		Assume we have such function $\tilde{f}\in {\manonclass}$, then, from definitions, $f$ is computable in polynomial time. Notice that the basic functions that we added are indeed polynomial time computable, and hence we still have polynomial time computability by induction.

	}
	
	%
	%
	%
	
	\olivier{commenté: , thus, there exists an oracle Turing machine $M$ that computes a function $\tilde{f}$ such that for all $n\in \N$ and for all $x\in \R \cap D$, $\| \tilde{f}(x, n) - f(x) \| \leq 2^{-n}$.
		But since $M$ is an \textit{oracle} Turing machine, we cannot use directly what we did previously.
		
		As before, we would like to encode the execution of $M$ into a function in ${\manonclasslightidealsigmoid}$. Let us take $n\in \N$, intuitively the precision we want on the output and $x\in D \subseteq \R$. We have that setting $n$ implies setting $m=m(n)$, the precision required on the input. $m(n)$ is the modulus of continuity of the function, and is polynomial when $f$ is polynomial time computable. This allows to get back to classical Turing machine, and then to our previous discussions: 
	}\olivier{Pas urgent, mais Il y a un truc sale. Normalement c'est plus le cas, mais des fois on écrivait la multiplication $\times$, des fois $.$, des fois $*$. Verifier qu'on fait plus ca, et qu'on utilise que .}
	
	We now go to the proof of the direct implication of Theorem \ref{th:main:one:ex}. By lack of space, we discuss only the case $d=d'=d''=1$, i.e. of a polynomial time computable function $f: \R \times \N \to \R$. The general case is easy to adapt, by adding suitable arguments, and considering multi-tape Turing machines. From standard arguments from computable analysis (see e.g. [Corollary 2.21]\cite{Ko91}), the following holds\footnote{The idea is that $m$ corresponds to the polynomial modulus of continuity of the function, and $\tilde{f}$ gives some approximation of the function, requiring only requests on a real argument $x$ with precision $m(n,M)$.}.
	\begin{lemma}
		Assume $f: \R \times \N \to \R$ is computable in polynomial time. There exists some polynomial $m: \N^{2} \to \N$ and  some $\tilde{f}: \N^{3} \to \Z$ computable in polynomial time
		such that for all $x\in \R$, $\| 2^{-n} \tilde{f}(\lfloor 2^{m(n,M)}  x  \rfloor, u, 2^{M}, 2^{n}) - f(x,u) \| \leq 2^{-n}$ whenever $\frac{x}{2^{m(n,M)}} \in [-2^{M},2^{M}]$.
	\end{lemma}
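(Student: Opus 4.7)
The plan is to unpack the definition of polynomial-time computability from computable analysis and convert the oracle Turing machine it provides into a classical TM whose real argument has been replaced by a dyadic approximation encoded as an integer.

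First, since $f$ is polynomial-time computable in the sense of Definition~\ref{def:bonendroit}, there exist a polynomial $t$ and an oracle TM $N$ such that, for every $x \in [-2^{M}, 2^{M}]$, every $\phi \in CF_{x}$ and every $u \in \N$, the machine $N^{\phi}(u, M, n)$ halts in at most $t(\ell(u), M, n)$ steps and outputs a dyadic $d$ with $|d - f(x, u)| \leq 2^{-n}$. A standard argument from computable analysis (as in \cite[Corollary~2.21]{Ko91}) then extracts a polynomial modulus of continuity: a polynomial $m(n, M)$ such that, for all $x, y \in [-2^{M}, 2^{M}]$ and all $u$, $|x - y| \leq 2^{-m(n, M)}$ implies $|f(x, u) - f(y, u)| \leq 2^{-(n+2)}$. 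Enlarging $m$ if necessary, we also arrange $m(n, M) \geq t(\ell(u), M, n+2)$, so any oracle query $N$ can possibly make has precision at most $m(n, M)$.

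Second, given $k = \lfloor 2^{m(n, M)} x \rfloor$, the dyadic $d_{0} := k \cdot 2^{-m(n, M)}$ satisfies $|d_{0} - x| \leq 2^{-m(n, M)}$. We define $\tilde{f}(k, u, 2^{M}, 2^{n})$ by simulating $N$ on input $(u, M, n+2)$ and answering every oracle query of precision $p \leq m(n, M)$ with the $p$-bit truncation of $d_{0}$ read directly from the binary encoding of $k$. Each such answer is within $2^{-p}$ of $d_{0}$, so it is a legitimate value of some $\phi \in CF_{d_{0}}$; hence $N$ produces a dyadic $d$ with $|d - f(d_{0}, u)| \leq 2^{-(n+2)}$. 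We then set $\tilde{f}(k, u, 2^{M}, 2^{n})$ to be the integer nearest to $2^{n} d$.

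Combining the three approximations in turn (integer rounding of $2^{n} d$, the oracle guarantee applied to $d_{0}$, and the modulus of continuity comparing $f(d_{0}, u)$ to $f(x, u)$) yields
\[
|2^{-n}\tilde{f}(k, u, 2^{M}, 2^{n}) - f(x, u)| \leq 2^{-(n+2)} + 2^{-(n+2)} + 2^{-(n+2)} < 2^{-n}.
\]
The bound on the running time of $\tilde{f}$ in $(\ell(k), \ell(u), M, n)$ is immediate: the simulation of $N$ runs in polynomial time by hypothesis, and each simulated oracle query is answered by extracting at most $m(n, M)$ bits of $k$, which is polynomial. The main obstacle, such as it is, is purely bookkeeping: the truncations fed to $N$ must be presented in exactly the format its oracle protocol expects (each precision $p$ answer a member of $\dyadic_{p}$), and one must check that the modulus of continuity is uniform in $u$ over the admissible range of $x$. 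Both points are standard in proofs of this correspondence between real and oracle complexity, so beyond them the argument is essentially routine.
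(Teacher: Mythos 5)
Your proposal is correct and takes essentially the same route as the paper, which gives no detailed argument at all: it simply invokes the standard correspondence from computable analysis (Ko, Corollary 2.21), noting in a footnote that $m$ is the polynomial modulus of continuity and that $\tilde{f}$ is the dyadic approximation obtained by simulating the oracle machine with the truncated input read off the integer argument — exactly what you spell out, including the three-way error split. One cosmetic remark: the clause ``enlarging $m$ so that $m(n,M)\ge t(\ell(u),M,n+2)$'' is both ill-typed (the right-hand side depends on $u$ while $m$ does not) and unnecessary, since a query of any precision can be answered by $d_{0}$ padded with zeros; the only place $m$ is needed is the modulus-of-continuity step, and the uniformity in $u$ you flag is already built into the paper's own statement of the lemma rather than something your argument must supply.
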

	
	Assume we consider an approximation $\sigma_{i}$ (with either $i=1$ or $i=2$) of the integer part function given by Lemma \ref{lem:i}. Then, given $n,M$, when $2^{m(n,M)}  x$ falls in some suitable interval $I_{i}$ for $\sigma_{i}$, we are sure that $\sigma_{i}(2^{m(n,M)+X}, 2^{m(n,M)}  x) = \lfloor 2^{m(n,M)}  x  \rfloor$. Consequently, $2^{-n} \tilde{f}(\sigma_{i}(2^{m(n,M)+X}, 2^{m(n,M)}  x ), u$, $2^{M}, 2^{n})$ provides some $2^{-n}$-approximation of $f(x)$. 
	
	Now, we can compute the value of $\tilde{f}$ on these arguments by simulating the Turing machine 
	that computes $f$,   using functions from $\manonclasslightidealsigmoid$.
	Namely, we just need to map all arguments (expected to be integers) to Cantor-like set $\Image$, then use Proposition \ref{prop:deux} to compute the (encoding over $\Image$) of the corresponding  integer, and then maps it back to some integer value. In other words, we can get an approximation of $f(x,u)$ of the form :
	$\EncodeMul (T_{1},\tilde{\tilde{f}}(T_2,	\Decode(T_{3},(\sigma_{i}(2^{m(n,M)+X}, 2^{m(n,M)}  x ), u, 2^{M}, 2^{n}))),$ $2^{-n})$

	\noindent for some suitable $T_{1}$, $T_{2}$ and $T_{3}$ of polynomial size, big enough to cover (up to some constant) the time required by the Turing machine: here $\tilde{\tilde{f}}$ is the function obtained from $\tilde{f}$ by  Proposition \ref{prop:deux}. This works  when $2^{m(n,M)}  x$ falls in the suitable interval $I_{i}$. Setting $T_{1}$, $T_{2}$ and $T_{3}$ can be done exactly as we did previously (e.g. in Proposition \ref{prop:deux}).
	
	The problem is that it might also be the case that $2^{m(n,M)}  x$ falls in the complement of the intervals $I_{i}$. In that case, we have no clear idea of what could be the value of $\sigma_{i}(2^{m(n,M)+X}, 2^{m(n,M)}  x) $, and of what might be the value of the above expression $Formula_{i}(x,u,$ $M,n)$. 
	But the point is that when it happens for an $x$ for $\sigma_{1}$, we could have used $\sigma_{2}$, and this would work, as one can check that the intervals of type $I_{1}$ covers the complements of the intervals of type $I_{2}$ and conversely. They also overlap, but when $x$ is both in some $I_{1}$ and $I_{2}$, $Formula_{1}(x,u,M,n)$ and $Formula_{2}(x,u,M,n)$ may differ, but they are both $2^{-n}$ approximation of $f(x)$.
	
	The key is then to compute some suitable "adaptive" barycenter, using function $\lambda$, provided by Lemma \ref{lem:lambda}. 
	Observe from the statements of Lemma \ref{lem:i}, and from the statement of Lemma \ref{lem:lambda}  
	\shortermcu{	\begin{itemize}
			\item}  that whenever $\lambda(2^n,x) = 0$, we know that $\sigma_{2}(2^n,x) = \lfloor x \rfloor$; 
		\shortermcu{ \item that} whenever $\lambda(2^n,x) = 1$ we know that $\sigma_{1}(2^n,x) = \lfloor x \rfloor$;
		\shortermcu{ \item that}    whenever $\lambda(2^n,x) \in (0,1)$, we know that $\sigma_1(2^n,x) = \lfloor x \rfloor +1 $ and $\sigma_2(2^n,x) = \lfloor x \rfloor$.
		\shortermcu{\end{itemize}}
	That means that if we consider $\lambda(2^n,x) Formula_{1}(x,u,M,n) + (1-\lambda(2^n,n)) Formula_{2}(x,u,M,n) $ we are
	sure to get some $2^{-n}$ approximation of $f(x)$.
	
	There remains that this requires some multiplication with $\lambda$. But from the form of $Formula_{i}(x,u,M,n)$, this could be also be written as follows, and hence remain in $\manonclasslightidealsigmoid$:
	
	\begin{equation}\label{letrucalafin}
	\EncodeMul(T_{1},\tilde{\tilde{f}}(T_2,\Decode(T_{3},(\sigma_{1}(2^{m(n,M)+X}, 2^{m(n,M)}  x ), u, 2^{M}, 2^{n}))),\lambda(2^{n},x) 2^{-n}) +
	\end{equation}
	\begin{align*}
	 \EncodeMul(&T_{1},\tilde{\tilde{f}}(T_2,\Decode(T_{3},(\sigma_{2}(2^{m(n,M)+X}, 2^{m(n,M)}  x ), u, 2^{M}, 2^{n}))),\\
	 &1-\lambda(2^{n},x) 2^{-n}) 
	\end{align*}
	\olivier{Commenté:
		\begin{proof}
			The representation of integers being finite, we do not need, for all $n \in \N$, some arbitrary precision on the input, but only precision $2^{-m}$. Then $\tilde{\tilde{f}}$ is basically the same as function $\tilde{f}$, but replacing the oracle calls by readings on its first argument.
		\end{proof}
	}

	\section{Normal form and some of its consequences}
	\label{sec:generalizations}

	From the proofs we also get a normal form theorem, namely formula \eqref{letrucalafin}. In particular, \begin{theorem}
		Any function $f: \N^{d} \to \R^{d'}$ can be obtained from the class $\manonclasslightidealsigmoidlim$ using only one schema $\MANONlim$ \olivier{(or $\MANONlimd$)}.
	\end{theorem}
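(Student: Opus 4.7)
The plan is to reduce the statement to the direct implication of Theorem~\ref{th:main:one:ex:p} (specialized with no real arguments, so the domain becomes $\N^{d}$). First I would observe that if $f: \N^{d} \to \R^{d'}$ belongs to $\manonclasslightidealsigmoidlim$, then by an easy induction on the construction (using Proposition~\ref{prop:mcu:un} for the base functions, composition and linear length ODE, and a standard computable-analysis argument to see that $\MANONlim$ preserves polynomial-time computability), $f$ is polynomial time computable in the sense of computable analysis on its discrete domain.

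Next, I would invoke the direct implication of Theorem~\ref{th:main:one:ex:p} (which, for the domain $\N^{d}$, coincides with the statement of Theorem~\ref{th:main:one} upgraded by the absence-of-multiplication argument of this paper). This yields an approximating function $\tilde{f}: \N^{d} \times \N \to \R^{d'}$ with $\|\tilde{f}(\vn, 2^{n}) - f(\vn)\| \le 2^{-n}$ for all $\vn \in \N^{d}$ and $n \in \N$. The key point is that the explicit construction of $\tilde{f}$ given in Section~\ref{fptimedansmanonclass} (namely Lemma~\ref{lem:codage:manon}, Lemma~\ref{lem:manquant} and Proposition~\ref{prop:deux}, together with the final barycentric formula~\eqref{letrucalafin}) uses \emph{only} basic functions, composition and linear length ODE. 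In particular no internal application of $\MANONlim$ appears anywhere in the construction of $\tilde{f}$, so $\tilde{f} \in \manonclasslightidealsigmoid$.

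By the very definition of $\MANONlim$ applied to such a $\tilde{f}$, we conclude $f = \MANONlim(\tilde{f})$, which exhibits $f$ as obtained from $\manonclasslightidealsigmoid$ using a single, outermost application of $\MANONlim$. The same argument covers $f: \R^{d} \to \R^{d'}$ by using Theorem~\ref{th:main:one:ex} instead of Theorem~\ref{th:main:one:ex:p}, which is why the excerpt parenthetically alludes to a variant $\MANONlimd$.

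The one point requiring care, and the main obstacle in the argument, is purely syntactic: certifying that every auxiliary function in the chain of constructions leading to Theorem~\ref{th:main:one:ex:p} genuinely lies in the $\MANONlim$-free class $\manonclasslightidealsigmoid$, and is not introduced via a nested limit schema at some intermediate step. Inspection of Sections~\ref{sec:functions} and~\ref{fptimedansmanonclass} confirms this: the functions $\xi_i$, $\sigma_i$, $\lambda$, $\mod_2$, $\div_2$, $\Encode$, $\EncodeMul$, $\Decode$, $\bar{Next}$ and $\bar{Exec}$ are each defined using only the base operations and linear length ODE, so the normal form indeed holds with a single outer $\MANONlim$.
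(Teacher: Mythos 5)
Your proposal is correct and follows essentially the same route as the paper: the paper's proof is precisely the observation that the explicit construction culminating in formula~\eqref{letrucalafin} (built from Lemmas~\ref{lem:xi}--\ref{lem:lambda}, \ref{lem:codage:manon}, \ref{lem:manquant} and Proposition~\ref{prop:deux}) never invokes $\MANONlim$ internally, so the approximant lies in $\manonclasslightidealsigmoid$ and a single outermost application of $\MANONlim$ recovers $f$. Only your closing aside is slightly off: $\MANONlimd$ is not a reals-versus-integers variant but the alternative limit schema with an explicit modulus of convergence, though this does not affect your argument.
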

	
	We now go the discussion and proof of Theorem \ref{th:nn}: Observing Formula \ref{letrucalafin}, we see that when $n$ and $M$ are fixed, the expression depends on $u$, $T_{1}$, $T_{2}$ and $T_{3}$. From our hypothesis that the function is of type $f: \R^{d} \to \R^{d'}$, we are in the case $d''=0$, i.e, where there is no such $u$.  These $T_{i}$ functions correspond  basically to the (polynomial) time required by the Turing machine to compute the function $f$.  From the previous  constructions, it turns out that when this time is fixed to some polynomial $t(n)$, the function is some neural function: i.e. formula \ref{letrucalafin} is providing some neural function that is guaranteed to be at precision $2^{-n}$ of $f(x)$ over $[-2^{M},2^{M}]$. This corresponds to the statement of Theorem \ref{th:nn}.
	In other works,  Formula \ref{letrucalafin} can be seen as a function that generates uniformly a family of circuits/formal neurons approximating a given function at some given precision over some given domain. 
	
	Notice that we believe that the $\signb{}$ function can actually be replaced by the $\tanh$ function, at the price of a discussion of the involved errors. We leave this as future work.
	
	\olivier{Mettre ici certaines genéralisations prouvées par Manon: mais peut etre pas recopié MCU. Pour l'instant c'est:
		
		Recall that 
		a function $M : \N \rightarrow \N$ is a modulus of convergence of $g: \N \to \R$, with $g(n)$ converging toward $0$   when $n$ goes to $\infty$,  if and only if for all $i>M(n)$, we have $\| g(i)   \| \le 2^{-n} $.
		A function $M :\N \rightarrow \N$ is a uniform modulus of convergence of a sequence $g: \N^{d+1} \to \R$, with $g(\tu m,n)$ converging toward $0$ when $n$ goes to $\infty$  if and only if for all $i>M(n)$, we have $\| g(\tu m,i)  \| \le 2^{-n} $.
		%
		Intuitively, the modulus of convergence gives the speed of convergence of a sequence.
		
		\begin{definition}[Operation $\MANONlimd$] Given $\tilde{\tu f}:\N^{d+1} \to \R \in \manonclass$, $g: \N^{d+1} \to \R$ such that
			for all $\tu m \in \N^{d}$, $n \in \N$,
			$\|\tilde{\tu f}(\tu m,2^{n}) - \tu f(\tu m) \| \le g(\tu m, n)$
			under the condition that $0 \le g(\tu m, n)$ is decreasing to $0$, with $\| g(\tu m,p(n)) \| \le  2^{-n}$ for some polynomial $p(n)$
			then 
			$\MANONlimd(\tilde{\tu f},g)$ is the (clearly uniquely defined) corresponding function  $\tu f: \N^{d} \to \R^{e}$.
		\end{definition}
		
		\begin{theorem}
			We could replace $\MANONlim$ by $\MANONlimd$ in \shortermcu{the statements of} Theorems \ref{th:main:two} and \ref{th:main:twop}.
		\end{theorem}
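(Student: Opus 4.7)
The plan is to verify that $\manonclasslightidealsigmoidlim$, redefined using $\MANONlimd$ in place of $\MANONlim$, still coincides as a set of functions with the original class. Once this equality of classes is established, Theorems \ref{th:main:two} and \ref{th:main:twop} transfer without change.

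One inclusion is immediate: $\MANONlim$ is literally the specialization of $\MANONlimd$ to the choice $g(\tu m,n) = 2^{-n}$ with polynomial modulus $p(n)=n$. Hence every function expressible using $\MANONlim$ is expressible using $\MANONlimd$, and the $\MANONlim$-class is contained in the $\MANONlimd$-class.

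For the reverse inclusion I would use a time reparametrization. Given $\tu f = \MANONlimd(\tilde{\tu f}, g)$ with $\|g(\tu m, p(n))\| \le 2^{-n}$ for some polynomial $p$, set
\[
\tilde{\tu f}'(\tu m, 2^{n}) \;=\; \tilde{\tu f}(\tu m, 2^{p(n)}).
\]
By construction $\|\tilde{\tu f}'(\tu m, 2^n) - \tu f(\tu m)\| \le g(\tu m, p(n)) \le 2^{-n}$, so $\tu f = \MANONlim(\tilde{\tu f}')$. What remains to check is that $\tilde{\tu f}' \in \manonclasslightidealsigmoid$, and this reduces to the fact that the map $2^{n} \mapsto 2^{p(n)}$ is definable by a linear length ODE. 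This is a standard closure property of $\linearderivlength$ established in \THEPAPIERS: any polynomial iteration of $n \mapsto 2^{n}$ can be built in the class, and then composition with $\tilde{\tu f}$ gives $\tilde{\tu f}'$.

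For the real-argument setting required by Theorems \ref{th:main:two} and \ref{th:main:twop}, the natural form of $\MANONlimd$ takes approximants $\tilde{\tu f}(\tu x, \tu m, 2^{X}, 2^{n})$ with a modulus $g(\tu x, \tu m, X, n)$ satisfying $g(\tu x, \tu m, X, p(X,n)) \le 2^{-n}$ uniformly for $\tu x \in [-2^{X}, 2^{X}]$. The same reparametrization $2^{n} \mapsto 2^{p(X,n)}$ then reduces $\MANONlimd$ to $\MANONlim$ exactly as above. The main obstacle I anticipate is not mathematical but rather the bookkeeping: one must state the real-argument version of $\MANONlimd$ precisely, in particular allowing $p$ to depend on $X$ as well as $n$, and then verify that the resulting reparametrization still lives in $\manonclasslightidealsigmoid$. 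Once this is done, no new ideas are needed beyond those already used in the proof of Theorem \ref{th:main:two}, and the polynomial-time characterization carries over unchanged.
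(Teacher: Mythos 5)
Your proof is correct, but it follows a different route than the paper. You establish the purely algebraic fact that the $\MANONlimd$-closure coincides with the $\MANONlim$-closure, by absorbing the modulus into the approximant: replace $\tilde{\tu f}$ by $\tilde{\tu f}'(\cdot,2^{n})=\tilde{\tu f}(\cdot,2^{p(n)})$, which stays in $\manonclasslightidealsigmoid$ provided the map $2^{n}\mapsto 2^{p(n)}$ (and, in the real-argument setting, $(2^{X},2^{n})\mapsto 2^{p(X,n)}$) is definable in the class; this is indeed the same closure property the paper already uses implicitly when it produces the polynomial-size arguments $T_{1},T_{2},T_{3}$ in its main simulation, though you should note it must be checked for $\manonclasslightidealsigmoid$ (no multiplication among the basic functions), not just for $\linearderivlength$ as you cite. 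The paper instead works on the computable-analysis side: it observes that the claim is equivalent to a statement of independent interest (Theorem \ref{th:dix}), namely that $\tu F$ is polynomial-time computable iff it admits approximations $\tu f(\tu m,n)$, polynomial-time computable with respect to the value of $n$, whose error is bounded by some $g$ with a uniform polynomial modulus of convergence; the nontrivial direction is proved by querying $\tu f(\tu m,p(n+1))$ at precision $2^{-(n+1)}$ and invoking Proposition \ref{prop:mcu:un}, while the other direction takes $g(\tu m,n)=2^{-n}$. So both arguments hinge on the same reparametrization $n\mapsto p(n)$ of the precision index, but you perform it inside the function algebra (buying a direct identity of classes, from which Theorems \ref{th:main:two} and \ref{th:main:twop} transfer verbatim), whereas the paper performs it at the level of the oracle machine (buying the extra characterization of Theorem \ref{th:dix} and avoiding any appeal to in-class definability of $2^{p(n)}$). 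Your explicit remark that the real-argument version needs $p$ to depend on $X$ as well as $n$ is a point the paper glosses over, since its $\MANONlimd$ is only spelled out for integer first arguments.
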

		
		This is equivalent to prove the following,  and observe from the proof that we can replace in above statement ``$g(\tu m,n)$ going to $0$'' by ``decreasing to $0$'', and last condition by
		$\| g(\tu m,p(n)) \| \le  2^{-n}$.


		\begin{theorem}\label{th:dix} 
			$\tu F: \N^{d} \to \R^{d'}$ is computable in polynomial time iff there exists $\tu f: \N^{d+1} \rightarrow \Q^{d'}$, with $\tu f(\tu m,n)$ computable in polynomial time \unaire{n}, 
			and $g : \N^{d+1} \rightarrow \Q$ such that
			\begin{itemize}
				\item $\| \tu f(\tu m,n) - \tu F(\tu m) \| \leq g(\tu m,n) $
				\item $0  \le g(\tu m,n)$ and $g(\tu m,n)$ converging to $0$ when $n$ goes to $+\infty$,  
				\item with a uniform  polynomial modulus of convergence $p(n)$.%
				%
			\end{itemize}  
		\end{theorem}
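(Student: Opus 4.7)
The plan is to prove both implications essentially by unwinding Definition \ref{def:bonendroit}, since the theorem is in spirit a reformulation of the standard Cauchy-representation definition of polynomial-time computability where the ``error'' $2^{-n}$ is replaced by the more flexible bound $g(\tu m,n)$ with a uniform polynomial modulus. The uniformity of the modulus $p(n)$ in $\tu m$ is exactly what makes the equivalence with polynomial-time computability possible.

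For the forward implication, suppose $\tu F$ is polynomial-time computable. By Definition \ref{def:bonendroit}, there exists an oracle/ordinary TM that, on input $(\tu m, n)$, produces in time polynomial in $\ell(\tu m) + n$ a dyadic $\tu d_{\tu m, n} \in \dyadic^{d'} \subset \Q^{d'}$ with $\|\tu d_{\tu m,n} - \tu F(\tu m)\| \leq 2^{-n}$. Then setting $\tu f(\tu m, n) := \tu d_{\tu m, n}$ and $g(\tu m, n) := 2^{-n}$ works: $\tu f$ is trivially polynomial time with respect to the value of $n$, $g$ is non-negative and decreases to $0$, and the uniform polynomial modulus is just $p(n) = n$.

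For the reverse implication, assume $\tu f$ and $g$ are given as in the statement. To produce a $2^{-n}$-approximation of $\tu F(\tu m)$, I would compute $\tu f(\tu m, p(n+1))$. By the bound on $g$, this rational vector satisfies
\[
\|\tu f(\tu m, p(n+1)) - \tu F(\tu m)\| \;\leq\; g(\tu m, p(n+1)) \;\leq\; 2^{-(n+1)}.
\]
Since $\tu f$ is polynomial time in the value of its last argument, and $p(n+1)$ is polynomial in $n$, the computation of $\tu f(\tu m, p(n+1))$ runs in time polynomial in $\ell(\tu m) + n$. To complete the definition (which asks for a dyadic output), I would then round each rational component $p_i/q_i$ to the nearest multiple of $2^{-(n+1)}$, which is a polynomial-time operation on the rational representation and introduces an additional error bounded by $2^{-(n+1)}$. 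The triangle inequality yields a dyadic approximation within $2^{-n}$ of $\tu F(\tu m)$, in time polynomial in $\ell(\tu m) + n$, which is exactly the requirement of Definition \ref{def:bonendroit}.

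The main obstacle, conceptually, is the uniformity clause on the modulus $p$: without it, $p$ could depend on $\tu m$ in a super-polynomial way and the reverse implication would fail. Beyond that, the only technical care is making sure that (i) the rational-to-dyadic rounding is genuinely polynomial in the bit-length of the rational output of $\tu f$, and (ii) ``polynomial time in the value of $n$'' matches the convention of Definition \ref{def:bonendroit} where the precision parameter is measured by its value rather than its length. Both points are routine, so the real content of the theorem is simply the observation that any uniform polynomial error schedule can be compressed into the canonical $2^{-n}$ schedule by composing with $p$.
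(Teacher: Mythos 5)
Your proof is correct and follows essentially the same route as the paper's: the forward direction takes $g(\tu m,n)=2^{-n}$ with the identity as modulus, and the reverse direction evaluates $\tu f(\tu m,p(n+1))$, rounds it to a dyadic at precision $2^{-(n+1)}$, and concludes by the triangle inequality. The only (shared, harmless) nitpick is an off-by-one with the paper's convention that the modulus guarantees $\|g(\tu m,i)\|\le 2^{-n}$ only for $i>p(n)$, so one should formally query index $p(n+1)+1$.
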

	}

	\newpage
	
	\printbibliography
	
	\newpage
	
	\appendix
	
	\section{Extended state of the art}
	
	We provide some complements on the state of the art (Section \ref{stateoftheart}).
	
	As we wrote, our ways of simulating Turing machines have some reminiscence of similar constructions used in other contexts such as Neural Networks \cite{SS95,LivreSiegelmann}. In particular, we use Cantor-like encoding set $\Image$ in a similar way to what is done in these references. These references use some particular sigmoid function $\sigma$ (called sometimes \motnouv{saturated linear function}) that values $0$ when $x \le 0$, $x$ for $0 \le x \le 1$, $1$ for $x \ge 1$. Clearly, this is equivalent to $\signb{\frac14+\frac12x}$, and hence their constructions can be reformulated using the $\signb{}$ function. However, first, the models considered in these references are recurrent, while our constructions are not recurrent neural networks, and second, their models are restricted to live on the compact domain $[0,1]$, which forbids to get functions from $\R \to \R$, while our settings allows more general functions. Our proofs also require functions taking some integer arguments, that would be impossible to consider in their settings (unless at the price of an artificial reencoding).  In some sense, our constructions can be seen as operators that maps to family of neural networks in the spirit of these models, instead of considering a fixed recurrent neural networks.

	While there have been several characterizations of complexity classes over the discrete (see e.g. the monograph \cite{LivreSiegelmann} about the above discussed approach, but not only), but as far as we know, the relation between formal neural networks with classes of computable analysis has never been established before. We believe that our original settings allows to do so, while this is unclear with the above mentioned models.  
	
	Furthermore, once again, we were motivated by (discrete) ordinary differential equations, and the relations to formal neural networks is a side effect, but not the main goal that we wanted to obtain. And our settings is indeed a characterization in terms of classes of discrete ODEs. 
	
	Notice that today's formal neural networks are often built with the so-called \motnouv{ReLU} (that stands for \motnouv{Rectified Linear Unit}) function, that maps $x \le 0$ to $0$, and $x \ge 0$ to $x$. This could be taken as a basis function instead of the function $\signb{}$ by rexpressing the latter with a suitable expression with ReLU's functions. Notice also that our concept of neural function is not assuming that the last layer of the network is made of neurons, and the result may be output by some linear combination of the neurons in the last layer.
	
	If we do not restrict to neural network related models, with respect to all previous contexts, as far as we know, only a few papers have been devoted to characterizations of complexity, and even computability, classes in the sense of computable analysis. There have been some attempts using continuous ODEs \cite{BCGH07}, that we already mentioned, or 
	the so-called $\R$-recursive functions \cite{DBLP:journals/corr/BournezGP16}. For discrete schemata, we only know \cite{brattka1996recursive}  and  \cite{ng2021recursion}, focusing on computability and not complexity.

	\section{Some results from }
	\label{sec:dode}
	
	\subsection{Some general statements}
	
	In order to be as self-contained as possible, we recall in this section some results and concepts from \THEPAPIERSPLUS. All the statements in this section are already present in \THEPAPIERSPLUS: We are just repeating them here in case this helps.  We provide some of the proofs, when they are not in the preliminary ArXiv version.  
	
	As said in the introduction:
	
	\begin{definition}[Discrete Derivative] The discrete derivative of
		$\tu f(x)$ is defined as $\Delta \tu f(x)= \tu f(x+1)-\tu
		f(x)$. We will also write $\tu f^\prime$ for
		$\Delta \tu f(x)$ to help readers not familiar with discrete differences to understand
		statements with respect to their classical continuous counterparts. 
		
	\end{definition}
	
	Several results from classical derivatives generalize
	to the settings of discrete differences: this includes linearity of derivation $(a \cdot
	f(x)+ b \cdot g(x))^\prime = a \cdot f^\prime(x) + b \cdot
	g^\prime(x)$, formulas for products
	and division such as
	$(f(x)\cdot g(x))^\prime =
	f^\prime(x)\cdot g(x+1)+f(x) \cdot g^\prime(x)= f(x+1)  g^\prime(x) +  f^\prime(x)  g(x)$. 
	Notice that, however, there is no simple equivalent of the chain rule, in other words, there is no simple formula for the derivative of the composition of two functions. 
	
	A fundamental concept is the following:

	\begin{definition}[Discrete Integral]
		Given some function $\tu f(x)$, we write $$\dint{a}{b}{\tu f(x)}{x}$$
		as a synonym for $\dint{a}{b}{\tu f(x)}{x}=\sum_{x=a}^{x=b-1}
		\tu f(x)$ with the convention that it takes value $0$ when $a=b$ and
		$\dint{a}{b}{\tu f(x)}{x}=- \dint{b}{a}{\tu f(x)}{x}$ when $a>b$. 
	\end{definition}
	
	The telescope formula yields the so-called Fundamental Theorem of
	Finite Calculus: 
	
	\begin{theorem}[Fundamental Theorem of Finite Calculus]
		Let $\tu F(x)$ be some function.
		Then,
		$$\dint{a}{b}{\tu F^\prime(x)}{x}= \tu F(b)-\tu F(a).$$
	\end{theorem}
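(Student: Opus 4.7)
The plan is to unfold the definitions and recognize the right-hand side of the sum as a classical telescoping sum. First I would treat the case $a < b$, which is the substantive one: by the definition of the discrete integral, $\dint{a}{b}{\tu F^\prime(x)}{x} = \sum_{x=a}^{b-1} \tu F^\prime(x)$, and by the definition of the discrete derivative each summand equals $\tu F(x+1) - \tu F(x)$. So I would write
\[
\dint{a}{b}{\tu F^\prime(x)}{x} = \sum_{x=a}^{b-1} \bigl(\tu F(x+1) - \tu F(x)\bigr),
\]
and observe that consecutive terms cancel pairwise, leaving only $\tu F(b) - \tu F(a)$. A clean way to formalise this is a short induction on $b - a \ge 0$: the base case $b = a$ gives $0 = \tu F(a) - \tu F(a)$ by the convention that the empty sum vanishes, and the induction step adds one term $\tu F(b+1) - \tu F(b)$ to the hypothesis $\sum_{x=a}^{b-1}\tu F^\prime(x) = \tu F(b) - \tu F(a)$, yielding $\tu F(b+1) - \tu F(a)$.

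The remaining cases follow directly from the stated conventions. When $a = b$ the sum is empty and the identity reads $0 = \tu F(a) - \tu F(a)$. When $a > b$, the convention $\dint{a}{b}{\tu F^\prime(x)}{x} = -\dint{b}{a}{\tu F^\prime(x)}{x}$ reduces the claim to the already established case $b < a$, which gives $-(\tu F(a) - \tu F(b)) = \tu F(b) - \tu F(a)$, as desired.

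There is no real obstacle here: the statement is essentially a reformulation of telescoping, and the only points to be careful about are the sign and empty-sum conventions embedded in the definition of $\dint{a}{b}{\cdot}{x}$. Since the identity is componentwise, the vectorial version follows immediately by applying the scalar identity to each coordinate of $\tu F$.
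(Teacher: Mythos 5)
Your proof is correct and is exactly the argument the paper intends: the paper derives this theorem directly from the telescoping of $\sum_{x=a}^{b-1}\bigl(\tu F(x+1)-\tu F(x)\bigr)$, and your handling of the empty-sum and sign conventions for $a=b$ and $a>b$ matches the definitions given. Nothing is missing.
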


	A classical concept in discrete calculus is the one of falling
	power defined as $$x^{\underline{m}}=x\cdot (x-1)\cdot (x-2)\cdots(x-(m-1)).$$
	This notion is  motivated by the fact that it satisfies a derivative formula $
	(x^{\underline{m}})^\prime  = m \cdot x^{\underline{m-1}}$  similar to the classical
	one for powers in the continuous setting.
	In a similar spirit, we 
	introduce the concept of falling exponential. 
	
	
	\begin{definition}[Falling exponential]
		Given some function $\tu U(x)$, the expression $\tu U$ to the
		falling exponential $x$,
		denoted by $\fallingexp{\tu U(x)}$, stands
		for  \begin{eqnarray*}
			\fallingexp{\tu U(x)} &=&
			(1+ \tu U^\prime(x-1)) \cdots
			(1+ \tu U^\prime(1)) \cdot (1+ \tu U^\prime(0))   \\
			&=&
			\prod_{t=0}^{t=x-1} (1+ \tu U^\prime(t)),
		\end{eqnarray*}
		with the convention that $\prod_{0}^{0}=\prod_{0}^{-1}=\tu {id}$, where $\tu
		{id}$ is the identity (sometimes denoted  $1$ hereafter).
	\end{definition}

	This is motivated by the remarks that 
	$2^x=\fallingexp{x}$, and
	that the discrete
	derivative of a falling exponential is given by
	$$\left(\fallingexp{\tu U(x)}\right )^\prime = \tu U^\prime(x) \cdot
	\fallingexp{\tu U(x)}$$
	%
	for all $x \in \N$.
	
	\begin{lemma}[Derivation of an integral with parameters]  \label{derivationintegral}
		Consider $$\tu F(x) = \dint{a(x)}{b(x)} {\tu f(x,t)}{t}.$$
		Then \begin{align*}
			\tu F'(x) = \dint{a(x)}{b(x)}{  \frac{\partial \tu f}{\partial x} (x,t)}{t} &+ \dint{0}{-a^\prime(x)}{\tu f(x+1,a(x+1)+t)}{t}\\
			&+ \dint{0}{b'(x)}{ \tu f(x+1,b(x)+t ) } {t}. 
		\end{align*}
		
		\noindent In particular, when $a(x)=a$ and $b(x)=b$ are constant functions, $\tu F'(x)= $ $ \dint{a}{b}{\frac{\partial \tu f}{\partial
				x} (x,t)}{t},$
		and when $a(x)=a$ and $b(x)=x$,
		$\tu F'(x) = \dint{a}{x}{  \frac{\partial \tu f}{\partial
				x} (x,t)}{t} + \tu f(x+1,x)$.
		
	\end{lemma}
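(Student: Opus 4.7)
The plan is to unfold everything in terms of sums and perform the Abel-style rearrangement familiar from the continuous Leibniz rule. From the definitions,
\[\tu F'(x) = \sum_{t=a(x+1)}^{b(x+1)-1} \tu f(x+1,t) - \sum_{t=a(x)}^{b(x)-1} \tu f(x,t).\]
First, I would introduce the bridge term $\sum_{t=a(x)}^{b(x)-1} \tu f(x+1,t)$, subtracting and adding it so that $\tu F'(x)$ splits into a ``same-limits, different-$x$'' piece and a ``same-$(x+1)$, different-limits'' piece.

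The ``same-limits, different-$x$'' piece is $\sum_{t=a(x)}^{b(x)-1}[\tu f(x+1,t)-\tu f(x,t)] = \dint{a(x)}{b(x)}{\frac{\partial \tu f}{\partial x}(x,t)}{t}$, producing the first summand in the claim. The ``same-$(x+1)$, different-limits'' piece equals $\dint{b(x)}{b(x+1)}{\tu f(x+1,t)}{t} - \dint{a(x)}{a(x+1)}{\tu f(x+1,t)}{t}$, which follows from the additivity identity $\dint{a(x+1)}{b(x+1)} - \dint{a(x)}{b(x)} = \dint{b(x)}{b(x+1)} - \dint{a(x)}{a(x+1)}$ for the discrete integral. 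A translation $s=t-b(x)$ recasts the first of these as $\dint{0}{b'(x)}{\tu f(x+1, b(x)+t)}{t}$. For the subtracted term, I would apply the translation $s=t-a(x+1)$, giving $-\dint{-a'(x)}{0}{\tu f(x+1, a(x+1)+t)}{t}$, and then use the sign convention $\dint{0}{-a'(x)} = -\dint{-a'(x)}{0}$ to rewrite it as $\dint{0}{-a'(x)}{\tu f(x+1, a(x+1)+t)}{t}$, which is the middle summand.

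The one subtle point, and the main obstacle, is that $a'(x)$ and $b'(x)$ can take any integer value, so a careful check is needed that each manipulation (additivity of $\dint{}{}{}{}$, change of variable, and the sign-flipping identity) is indeed valid as an equality between signed discrete integrals in every sign combination of $a'(x)$ and $b'(x)$. The chosen anchor points, namely $b(x)$ on the right boundary and $a(x+1)$ on the left, are precisely what makes a single uniform formula work across all four cases; one can verify this by hand, splitting on $a'(x)\ge 0$ versus $a'(x)<0$ and similarly for $b'$.

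Once this is in place, the two corollaries follow at once: when $a$ and $b$ are constant, $a'(x)=b'(x)=0$ so both boundary integrals are empty and only the $\partial_{x}\tu f$ term survives; when $a(x)=a$ and $b(x)=x$, one has $a'(x)=0$ and $b'(x)=1$, so the left-boundary integral vanishes while the right-boundary integral collapses to the single term $\tu f(x+1, b(x)+0) = \tu f(x+1,x)$, giving exactly the stated formula.
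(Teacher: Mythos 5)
Your proof is correct and follows essentially the same route as the paper's: both expand $\tu F'(x)=\tu F(x+1)-\tu F(x)$ as sums, insert the cross term $\dint{a(x)}{b(x)}{\tu f(x+1,t)}{t}$ (the paper does this implicitly by splitting the sum over $[a(x+1),b(x+1))$ into the piece over $[a(x),b(x))$ plus two boundary pieces), and finish with the translations $t\mapsto t-a(x+1)$ and $t\mapsto t-b(x)$ under the signed-integral conventions. Your extra care about the sign cases of $a'(x)$ and $b'(x)$ is a welcome refinement of what the paper leaves to the conventions, not a different argument.
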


	\begin{proof}
		\begin{eqnarray*}
			\tu F'(x) &=& \tu F(x+1) - \tu F(x) \\
			&=& \sum_{t= a(x+1)}^{b(x+1) - 1} \tu f(x+1,t) -
			\sum_{t= a(x)}^{b(x) - 1} \tu f(x,t)   \\
			&=& \sum_{t= a(x)}^{b(x)-1} \left( \tu  f(x+1,t) - \tu f(x,t)
			\right)    
			+   \sum_{t=a(x+1)} ^{t= a(x)-1} \tu f(x+1,t) 
			+ \sum_{t=b(x)}
			^{b(x+1)-1} \tu f(x+1,t) \\
			&=& \sum_{t= a(x)}^{b(x)-1}  \frac{\partial \tu f}{\partial
				x} (x,t)  + \sum_{t=a(x+1)} ^{t= a(x)-1} \tu f(x+1,t) + \sum_{t=b(x)}
			^{b(x+1)-1} \tu f(x+1,t) \\
			&=& \sum_{t= a(x)}^{b(x)-1}   \frac{\partial \tu f}{\partial
				x} (x,t)  + \sum^{t=-a(x+1)+a(x)-1} _{t=0} \tu f(x+1,a(x+1)+t) \\
			&&
			+ \sum_{t=0}
			^{b(x+1)-b(x)-1} \tu  f(x+1,b(x)+t).
		\end{eqnarray*}
	\end{proof}

	\begin{lemma}[Solution of linear ODE
		] \label{def:solutionexplicitedeuxvariables}
		For matrices $\tu A$ and vectors $\tu B$ and $\tu G$,
		the solution of equation $\tu f^\prime(x,\tu y)= \tu A(\tu f(x,\tu y),\tu h(x, \tu y), x,\tu y) \cdot \tu f(x,\tu y)
		+  \tu
		B (\tu f(x,\tu y), \tu h(x, \tu y),  x,\tu y)$  with initial conditions $\tu f(0,\tu y)= \tu G(\tu y)$ is
		\begin{eqnarray*}\label{soluce}
			\tu f(x,\tu y)  &=&
			\left( \fallingexp{\dint{0}{x}{\tu
					A(\tu f(t,\tu y),\tu h(t, \tu y), t,\tu y)}{t}} \right) \cdot \tu G (\tu
			y)  \\
			&&
			+
			\dint{0}{x}{ \left(
				\fallingexp{\dint{u+1}{x}{\tu A(\tu f(t,\tu y),\tu h(t, \tu y), t,\tu y)}{t}} \right) \cdot
				\tu B(\tu f(u,\tu y),\tu h(u, \tu y), u,\tu y)} {u}.
		\end{eqnarray*}

	\end{lemma}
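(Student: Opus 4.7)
The plan is to verify the formula directly by induction on $x \in \N$, using the defining discrete recurrence $\tu f(x+1,\tu y) = (\tu{id} + \tu A(\tu f(x,\tu y), \tu h(x,\tu y), x,\tu y)) \cdot \tu f(x,\tu y) + \tu B(\tu f(x,\tu y), \tu h(x,\tu y), x,\tu y)$, which is just a rewriting of the discrete ODE $\tu f^\prime = \tu A \tu f + \tu B$. The key tool throughout will be the telescoping identity $\fallingexp{\tu U(x+1)} = (1 + \tu U^\prime(x)) \cdot \fallingexp{\tu U(x)}$, which is immediate from the definition of the falling exponential as a product.

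For the base case $x=0$, the outer integral in the second summand has coinciding bounds and so vanishes, and the falling exponential in the first summand is the empty product $\fallingexp{0} = \tu{id}$, so the right-hand side reduces to $\tu G(\tu y) = \tu f(0,\tu y)$.

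For the inductive step, I would substitute $x+1$ into the right-hand side and apply the telescoping identity to $\tu U(x) = \dint{0}{x}{\tu A(\tu f(t,\tu y), \tu h(t,\tu y), t,\tu y)}{t}$, whose discrete derivative at $x$ is $\tu A(\tu f(x,\tu y), \tu h(x,\tu y), x,\tu y)$. This factors $(\tu{id} + \tu A(\dots,x,\tu y))$ out of the first term. For the second term, I would split the outer discrete integral $\dint{0}{x+1}{\cdot}{u}$ into the range $\dint{0}{x}{\cdot}{u}$ and the boundary contribution at $u=x$. The boundary piece contributes $\fallingexp{\dint{x+1}{x+1}{\tu A}{t}} \cdot \tu B(\dots,x,\tu y) = \tu B(\dots,x,\tu y)$ by the empty-product convention, while on the remaining range $u \in \{0,\dots,x-1\}$ the same telescoping identity, now applied to $\tu U(x) = \dint{u+1}{x}{\tu A}{t}$, lets us factor $(\tu{id} + \tu A(\dots,x,\tu y))$ out of each term. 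Collecting everything, the right-hand side at $x+1$ equals $(\tu{id} + \tu A(\dots,x,\tu y)) \cdot \tu f(x,\tu y) + \tu B(\dots,x,\tu y)$, which is precisely $\tu f(x+1,\tu y)$, completing the induction.

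The only real obstacle is careful bookkeeping: one must apply the telescoping identity with the correct base function $\tu U$ in each summand (with lower bound $0$ for the homogeneous piece and lower bound $u+1$ inside the second summand), and handle the boundary term at $u=x$ using $\fallingexp{0}=\tu{id}$. There is no genuine analytic difficulty, since the implicit dependence of $\tu A$ and $\tu B$ on $\tu f(t,\tu y)$ causes no trouble: for $t \le x$ the values $\tu f(t,\tu y)$ are already determined by the recurrence, so the formula should be read as an explicit representation once the solution is known to exist, which itself is immediate from the forward recurrence.
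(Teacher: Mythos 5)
Your proof is correct and follows essentially the same route as the paper: the paper also verifies that the proposed right-hand side satisfies the same initial value problem, computing its discrete derivative via the falling-exponential derivative rule together with the lemma on derivation of an integral with parameters (which packages exactly your telescoping step and the boundary contribution at $u=x$ with the empty-product convention), and then concludes by uniqueness of solutions of a discrete ODE. That uniqueness argument is precisely your induction, so the two proofs differ only in presentation, not in substance.
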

	
	\begin{proof}
		Denoting the right-hand side by $\tu {rhs}(x,\tu y)$, we  have 
		$$\begin{array}{ccl} \bar {\tu {rhs}}^{\prime}(x,\tu y)&=& \tu A(\tu f(x,\tu y),\tu h(x, \tu y), x,\tu y)  \cdot  \left( \fallingexp{\dint{0}{x}{\tu
				A(\tu f(t,\tu y),\tu h(t, \tu y),  t,\tu y)}{t}} \right)   \cdot \tu G (\tu
		y) \\
		&& +
		\dint{0}{x}{ \left(
			\fallingexp{\dint{u+1}{x}{\tu A(\tu f(t,\tu y), \tu h(t, \tu y),  t,\tu y)}{t}} \right)^{\prime} \cdot
			\tu B(\tu f(u,\tu y), \tu h(u, \tu y), u,\tu y)} {u} \\ 
		&& +
		\left(
		\fallingexp{\dint{x+1}{x+1}{\tu A(\tu f(t,\tu y),\tu h(t, \tu y), t,\tu y)}{t}} \right) \cdot
		\tu B(\tu f(x,\tu y), \tu h(x, \tu y), x,\tu y) \\
		&=& \tu A(\tu f(x,\tu y),\tu h(x, \tu y), x,\tu y)  \cdot  \left( \fallingexp{\dint{0}{x}{\tu
				A(\tu f(t,\tu y),\tu h(t, \tu y),  t,\tu y)}{t}} \right)   \cdot \tu G (\tu
		y)  \\ && + \  \tu A(\tu f(x,\tu y), \tu h(x, \tu y), x,\tu y) \cdot \\
		&& \dint{0}{x}{ \left(
			\fallingexp{\dint{u+1}{x}{\tu A(\tu f(t,\tu y),\tu h(t, \tu y), t,\tu y)}{t}} \right)  \tu B(\tu f(u,\tu y), \tu h(u, \tu y), u,\tu y)} {u} \\
		&&
		+ \ \tu B(\tu f(x,\tu y),\tu h(x, \tu y), x,\tu y)
		\\
		&=&  \tu A(\tu f(x,\tu y), \tu h(x, \tu y), x,\tu y) \cdot \tu {rhs}(x,\tu y) 	+  \tu
		B (\tu f(x,\tu y),\tu h(x, \tu y), x,\tu y)
		\end{array} 
		$$
		where we have used linearity of derivation and definition of falling exponential for the first term, and derivation of an integral (Lemma \ref{derivationintegral}) providing the other terms to get the first equality, and then the definition of falling exponential.
		This proves the property by unicity of solutions of a discrete ODE, observing that $\bar {\tu {rhs}}(0,\tu y)=\tu G(\tu y)$.
	\end{proof}
	
	We write also $1$ for the identity. 
	
	\begin{remark} \label{rq:fund}
		Notice that this can  be rewritten  as 
		\begin{equation} \label{eq:rq:fund} 
		\tu f(x,\tu y)=\sum_{u=-1}^{x-1}  \left(
		\prod_{t=u+1}^{x-1} (1+\tu A(\tu f(t,\tu y), \tu h(t, \tu y),  t,\tu y)) \right) \cdot  \tu B(\tu f(u,\tu y), \tu h(u, \tu y), u,\tu y)
		,
		\end{equation}
		with the (not so usual) conventions that for any function $\kappa(\cdot)$,  $\prod_{x}^{x-1} \tu \kappa(x) = 1$ and $\tu
		B(-1,\tu y)=\tu G(\tu y)$.
		Such equivalent expressions both have a clear computational content. They can
		be interpreted as an algorithm unrolling
		the computation of   $\tu f(x+1,\tu y)$ from the computation of  $\tu
		f(x,\tu y), \tu f(x-1,\tu y), \ldots, \tu f(0,\tu y)$.  
	\end{remark}
	
	A fundamental fact is that the derivation with respect to length provides a way to do a kind of change of variables:
	
	\begin{lemma}[Alternative view, case of Length ODEs] \label{fundob}
		Let 
		$f: \N^{p+1}\rightarrow \Z^d$,
		$\lengt:\N^{p+1}\rightarrow \Z$  be some functions and assume that \eqref{lode} holds considering  $\lengt(x,\tu y)=\length{x}$.
		Then $\tu f(x,\tu y)$ is given by 
		$\tu f(x,\tu y)= \tu F(\length{x},\tu y)$
		where $\tu F$ is the solution of initial value problem
		\begin{eqnarray*}
			\tu F(1,\tu y)&=& \tu f(0,\tu y), \\
			\dderiv{\tu F(t,\tu y)}{t} &=&  
			\tu h(\tu F(t, \tu y),2^{t}-1,\tu y).
		\end{eqnarray*}
	\end{lemma}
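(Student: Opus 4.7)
The plan is to exploit the fact that the length function $\length{x}$ is piecewise constant on $\N$: $\length{x+1} - \length{x} = 1$ precisely when $x+1$ is a power of $2$ (i.e.\ $x = 2^{k}-1$ for some $k \ge 0$), and $\length{x+1} - \length{x} = 0$ otherwise. Unfolding the formal synonym for equation \eqref{lode} with $\lengt(x,\tu y) = \length{x}$ gives
$$\tu f(x+1,\tu y) \;=\; \tu f(x,\tu y) + \bigl(\length{x+1} - \length{x}\bigr)\cdot \tu h(\tu f(x,\tu y), x, \tu y),$$
so $\tu f(x+1,\tu y) = \tu f(x,\tu y)$ whenever $x$ is not of the form $2^{k}-1$. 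Hence $\tu f(x,\tu y)$ depends on $x$ only through $\length{x}$, and I would define $\tu F(k,\tu y)$ to be the common value of $\tu f(x,\tu y)$ for all $x$ with $\length{x}=k$.

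I would then derive the stated recurrence for $\tu F$ by picking, for the transition from length $k$ to length $k+1$, the witness $x = 2^{k}-1$ at which $\length{x+1} - \length{x} = 1$, and computing
$$\tu F(k+1,\tu y) \;=\; \tu f(2^{k},\tu y) \;=\; \tu f(2^{k}-1,\tu y) + \tu h(\tu f(2^{k}-1,\tu y),\, 2^{k}-1,\, \tu y) \;=\; \tu F(k,\tu y) + \tu h(\tu F(k,\tu y),\, 2^{k}-1,\, \tu y),$$
which is exactly $\dderiv{\tu F(t,\tu y)}{t} = \tu h(\tu F(t,\tu y),\, 2^{t}-1,\, \tu y)$. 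The initial condition $\tu F(1,\tu y) = \tu f(0,\tu y)$ then follows from the paper's convention that $\length{0}=1$, so that $x=0$ lies in the block of length $1$ and the common value on that block is $\tu f(0,\tu y)$ by definition of $\tu F$.

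The only delicate point is the bookkeeping on the very first block, which is sensitive to the convention adopted for $\length{0}$; under the paper's convention the initial condition is immediate, but I would state it explicitly for the reader. Beyond this minor subtlety, the result is a direct consequence of the piecewise-constant behaviour of $\length{x}$ combined with the uniqueness of solutions of a discrete initial value problem, so I do not anticipate any substantive obstacle.
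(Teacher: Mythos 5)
Your overall route is the right one and is, in substance, the only proof this lemma needs: unfold the formal synonym for \eqref{lode}, observe that $\tu f(\cdot,\tu y)$ is constant on each block $\{x : \length{x}=t\}$, define $\tu F(t,\tu y)$ as that common value, read the recurrence off at the block boundaries $x=2^{t}-1$, and conclude by uniqueness of solutions of the discrete initial value problem. Note that the present paper gives no proof of this lemma at all (it is quoted from the earlier papers), so the comparison is simply that your direct verification is the standard argument those references use.

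The one point to repair is that your treatment of the first block is internally inconsistent. In your opening paragraph you assert that $\length{x+1}-\length{x}=1$ precisely when $x+1$ is a power of $2$, i.e.\ for all $x=2^{k}-1$ with $k\ge 0$; this is the convention $\length{0}=0$, under which $x=0$ is itself a jump point, $\tu f(1,\tu y)=\tu f(0,\tu y)+\tu h(\tu f(0,\tu y),0,\tu y)$, and the lemma's initial condition $\tu F(1,\tu y)=\tu f(0,\tu y)$ would be off by exactly that first $\tu h$-term (one would instead need $\tu F(0,\tu y)=\tu f(0,\tu y)$). In your final paragraph you invoke the opposite convention $\length{0}=1$, which is indeed what the stated initial condition requires, but then the jump characterization must exclude $k=0$: with $\length{0}=\length{1}=1$ the increment $\length{1}-\length{0}$ vanishes, the first block is $\{0,1\}$, and jumps occur only at $x=2^{k}-1$ for $k\ge 1$. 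Adopting that convention consistently makes everything go through: constancy on blocks gives $\tu F(1,\tu y)=\tu f(0,\tu y)$, and your computation at the witnesses $x=2^{t}-1$, $t\ge 1$, yields exactly $\tu F(t+1,\tu y)=\tu F(t,\tu y)+\tu h(\tu F(t,\tu y),2^{t}-1,\tu y)$, which is the claimed ODE for $\tu F$. So the gap is a bookkeeping slip rather than a missing idea, but as written the two halves of your argument use incompatible conventions and one of them must be corrected.
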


	\olivier{Preuve de ca ou pas? Si pas preuve, donner ref}
	
	\olivier{
		\subsection{Results for functions over the integers}
		
		The results established in this section were obtained in \THEPAPIERS for functions over the integers, and polynomial time computability was in the classical sense (as only functions over the integers were considered). 
		
		The point is that we need generalization of them in this article, as we consider functions possibly reals, and (polynomial time) computability is in the sense of computable analysis.

		We write $\norm{\cdots}$ for the sup norm: given some matrix $\tu
		A=(A_{i,j})_{1 \le i \le n, 1 \le j \le m}$, 
		$\norm{\tu A}=\max_{i,j}
		A_{i,j}$.
		
		\olivier{Nouvelle version d'Arnaud dans journal.tex}
		
		\begin{lemma}[Fundamental observation] \label{fundamencore}
			Consider the ODE 
			\begin{equation} \label{eq:bc}
			\tu f^\prime(x,\tu y)=  {\tu A} ( \tu f(x,\tu y), \tu h(x,\tu y),
			x,\tu y) \cdot
			\tu f(x,\tu y)
			+   {\tu B} ( \tu f(x,\tu y), \tu h(x,\tu y),
			x,\tu y).
			\end{equation}
			\textbf{over the integers}.

			Assume:
			\begin{enumerate}
				\item The initial condition $\tu G(\tu y) = ^{def}
				\tu f(0, \tu y)$, as well as $\tu h(x,\tu y)$ are polynomial time computable in $x$ and the length of   $\tu y$. 
				\item ${\tu A} ( \tu f(x,\tu y), \tu h (x,\tu y),
				x,\tu y)$ and ${\tu B} ( \tu f(x,\tu y), \tu h(x,\tu y),
				x,\tu y)$ are \polynomial{} expressions essentially constant in $\tu f(x,\tu y)$.
				
				%
			\end{enumerate}
			
			Then, there exists a polynomial $p$ such that $\length{\tu f(x,\tu y)}\leq p(x,\length{\tu y})$ and $\tu f(x, \tu y)$ is polynomial time computable in $x$ and the length
			of $\tu y$. 
		\end{lemma}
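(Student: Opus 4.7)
The plan is to exploit the explicit solution formula from Lemma~\ref{def:solutionexplicitedeuxvariables}, or equivalently the iterative recurrence $\tu f(t+1,\tu y) = (1 + \tu A(\cdots)) \cdot \tu f(t,\tu y) + \tu B(\cdots)$, and establish the two claims in sequence: first the polynomial length bound $\length{\tu f(x,\tu y)} \le p(x,\length{\tu y})$, then polynomial-time computability via step-by-step simulation of this recurrence.

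For the length bound I would argue by induction on $x$. The crucial observation is that since $\tu A$ and $\tu B$ are \emph{essentially constant} in $\tu f(t,\tu y)$, every occurrence of $\tu f$ inside them sits within the scope of a $\signb{\cdot}$, and therefore contributes a value in $\{0,1\}$ rather than its actual magnitude. Hence $\length{\tu A(\cdots)}$ and $\length{\tu B(\cdots)}$ are bounded by a fixed polynomial in $\length{\tu h(t,\tu y)}$, $\length{t}$ and $\length{\tu y}$, uniformly in the current value of $\tu f$. Since $\tu h$ is polynomial-time computable in $x$ and $\length{\tu y}$, its output length is bounded by some polynomial $r(x,\length{\tu y})$ for all $t \le x$. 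Using the standard estimates $\length{ab} \le \length{a}+\length{b}$ and $\length{a+b} \le \max(\length{a},\length{b})+1$, one step of the recurrence increases $\length{\tu f}$ additively by at most some polynomial $q(x,\length{\tu y})$. Summing over $x$ steps then yields $\length{\tu f(x,\tu y)} \le \length{\tu G(\tu y)} + x \cdot q(x,\length{\tu y})$, which is polynomial in $x$ and $\length{\tu y}$.

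For polynomial-time computability I would simply unroll the recurrence: start from $\tu f(0,\tu y) = \tu G(\tu y)$, computable in polynomial time by hypothesis, and for each $t = 0,1,\dots,x-1$ successively compute $\tu h(t,\tu y)$, evaluate the \polynomial{} expressions $\tu A$ and $\tu B$ at the current value of $\tu f(t,\tu y)$, and perform the affine update. Each step runs in time polynomial in the current lengths, which by the previous bound remain polynomial in $x$ and $\length{\tu y}$, so $x$ iterations cost total time polynomial in $x$ and $\length{\tu y}$, as required.

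The main obstacle, and the reason the hypothesis of essential constancy in $\tu f$ is indispensable, is precisely the length bound itself: a genuine quadratic term such as $\tu f(t,\tu y)^2$ on the right-hand side would \emph{square} the length at each step, producing lengths exponential in $x$ and destroying both conclusions. Essential constancy guarantees that the only multiplicative contribution of $\tu f$ to $\tu f(t+1,\tu y)$ is through the controlled linear factor $(1+\tu A)$, whose length is bounded independently of $\length{\tu f}$; this is exactly what makes the inductive length bound close, and consequently what allows the step-by-step simulation to run in polynomial time.
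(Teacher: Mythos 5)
Your proposal is correct and follows essentially the same route as the paper: the paper obtains the length bound by bounding the explicit solution formula of Lemma \ref{def:solutionexplicitedeuxvariables} (equation \eqref{eq:rq:fund}), which is exactly the unrolled recurrence you bound by induction, with the same key point that essential constancy keeps $\tu A$ and $\tu B$ of size polynomial in $x$ and $\length{\tu y}$ independently of $\length{\tu f}$. The computability argument by step-by-step evaluation of the recurrence is likewise the paper's argument (cf.\ Remark \ref{rq:fund}), so nothing is missing.
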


	}

	\olivier{Pas envie de dire ca, non?
		
		The previous statements lead to the following:
		
		\begin{lemma}[Key Observation for linear $\lengt$-ODE]~\label{lem:fundamentalobservationlinearlengthODE}
			Assume that $\tu f$ \textbf{over the integers}  is the solution of \eqref{SPLode} and that functions $\tu g, \tu h, \lengt$ and $Jump_\lengt$ are computable in polynomial time. 
			Then, $\tu f$ 
			is computable in polynomial time (\textbf{in the classical sense}).
		\end{lemma}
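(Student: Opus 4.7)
The plan is to reduce the linear length-ODE case to a standard linear discrete ODE via the change of variable $t = \length{x}$ formalized in Lemma~\ref{fundob}. Concretely, that lemma rewrites $\tu f(x,\tu y) = \tu F(\length{x},\tu y)$ where $\tu F$ is the solution of the initial value problem
\[
\tu F(1,\tu y) = \tu g(\tu y), \qquad \dderiv{\tu F(t,\tu y)}{t} = \tu u\bigl(\tu F(t,\tu y),\, \tu h(2^{t}-1,\tu y),\, 2^{t}-1,\tu y\bigr).
\]
Because $\tu u$ is essentially linear in $\tu F$ by Definition~\ref{def:linear lengt ODE}, this right-hand side has the form $\tu A[\cdots]\cdot \tu F + \tu B[\cdots]$ with $\tu A, \tu B$ polynomial expressions essentially constant in $\tu F$, so $\tu F$ satisfies an honest linear discrete ODE in the variable $t$.

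Next, I would invoke the explicit closed-form solution of a linear ODE provided by Lemma~\ref{def:solutionexplicitedeuxvariables}: a falling exponential applied to the initial condition plus a discrete integral of $\tu B$ against a falling exponential of $\tu A$. Unrolling this over $t$ steps, a straightforward induction would show $\length{\tu F(t,\tu y)} \le p(t,\length{\tu y})$ for some polynomial $p$: at each step $\tu A$ and $\tu B$ evaluate to quantities whose bit-sizes depend polynomially on $t$ (through $2^{t}-1$, which has length $t$) and on $\length{\tu y}$, but not on $\tu F$ itself, so each iteration grows the bit-size of $\tu F$ only additively by a polynomially bounded amount. Combined with the polynomial-time hypotheses on $\tu g$ and $\tu h$, the unrolled computation of $\tu F(t,\tu y)$ then runs in time polynomial in $t$ and $\length{\tu y}$.

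Specializing to $t = \length{x}$ gives $\tu f(x,\tu y) = \tu F(\length{x},\tu y)$ computable in time polynomial in $\length{x}$ and $\length{\tu y}$, which is classical polynomial time. The role of the $Jump_\lengt$ hypothesis is to let the algorithm efficiently detect the indices at which $\lengt$ actually increments (the powers of $2$ when $\lengt = \length{\cdot}$), ensuring that no step of the iteration is spent at indices $x$ where $\length{x+1} = \length{x}$; this is what makes the reduction to an iteration of length $\length{x}$ rather than $x$ effective.

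The main obstacle is the polynomial bit-length bound on $\tu F$. Essential linearity is exactly what prevents doubly-exponential blow-up: if $\tu A$ had positive degree in $\tu F$, a single step could square the bit-size of $\tu F$, yielding unmanageable growth after $t = \length{x}$ iterations. Carefully propagating the bit-length bound through the unrolling, using the fact that a \polynomialb{} expression essentially constant in $\tu F$ behaves, for bit-length purposes, like a polynomial in its other arguments whose coefficients are sign-bounded, is the technical crux of the proof --- but a standard one once essential linearity is on the table.
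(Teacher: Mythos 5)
Your proposal is correct and follows essentially the same route as the paper: the change of variables $t=\length{x}$ via Lemma~\ref{fundob} (with the $Jump_\lengt$ hypothesis bounding the number of increments polynomially), followed by a polynomial bit-length bound and step-by-step unrolling of the resulting linear ODE. The only difference is cosmetic: the paper concludes by citing its ``fundamental observation'' lemma, whereas you inline that lemma's content (explicit solution from Lemma~\ref{def:solutionexplicitedeuxvariables} plus the essential-linearity growth argument), which is exactly how the paper proves it anyway.
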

		
		\begin{proof}
			Using the hypotheses, it comes that the number of elements in $Jump_\lengt$ is polynomial in $\length{x}+\length{y}$. We can
			then replace parameter $x$ and derivation in $\lengt(x,\tu y)$ by a
			parameter $t\leq (\length{x}+\length{y})^c$, for some $c$, and derivation in $t$ by Lemma
			\ref{fundob}.
			
			This leads to an ODE of the form:
			$$
			\tu f’(x,\tu y)= \overline {\tu A} ( \tu f(x,\tu y),
			x,\tu y) \cdot
			\tu f(x,\tu y)
			+   \overline {\tu B} ( \tu f(x,\tu y),
			x,\tu y).
			$$
			by setting 
			\begin{eqnarray*}
				\overline {\tu A} ( \tu f(x,\tu y),
				x,\tu y) &=& {\tu A} ( \tu f(x,\tu y), h(x, \tu y),
				x,\tu y) \\
				\overline {\tu B} ( \tu f(x,\tu y),
				x,\tu y) &=&
				{\tu B} ( \tu f(x,\tu y), h(x, \tu y),
				x,\tu y).
			\end{eqnarray*}
			
			But then Lemma \ref{fundamencore} applies, and we get precisely the
			conclusion, observing that the fact that the
			corresponding matrix $\overline {\tu A}$ and vector $\overline {\tu
				B}$ are essentially constant in $\tu f(x, \tu y)$ guarantees
			hypotheses of Lemma \ref{fundamencore}. 
		\end{proof}
	}
	
	\newpage
	
	\section{On proofs}
	\label{sec:proofs}
	
	We provide here more details on proofs, that we had to put in appendix by lack of space.

	\subsection{Proof of reverse implication of Theorem \ref{th:main:one:ex}}
	
	\begin{proof}
		Assume there exists $\tilde{\tu f}:\R^{d} \times \N^{d''+2} \to \R^{d'} \in \manonclasslightidealsigmoid$ such that
		for all $\tu x \in \R^{d}$, $X \in \N$, $ \tu x \in\left[-2^{X}, 2^{X}\right]$, $\tu m  \in \N^{d''}$, $n \in \N$,
		$\|\tilde{\tu f}(\tu x, \tu m,2^{X},2^{n}) - \tu f(\tu x, \tu m) \| \le 2^{-n}.$
		
		From Proposition  \ref{prop:mcu:un}, we know that $\tilde{\tu f}$ is computable in polynomial time (in the binary length of its arguments).
		Then $\tu f(\tu x,\tu m)$ is computable: indeed, given $\tu x$, $\tu m$ and $n$, we can approximate $\tu f(\tu x, \tu m)$ at precision $2^{-n}$ on $[-2^{X}, 2^{X}]$ as follows: Approximate $\tilde{\tu f}(\tu x, \tu m,2^{X},2^{n+1})$ at precision $2^{-(n+1)}$ by some rational $q$, and output $q$. We will then have
		$$
		\begin{array}{ll}
		\|q-\tu f(\tu x, \tu m)\| &\le \|q-\tilde{\tu f}(\tu x, \tu m,2^{X},2^{n+1}) \| + \|\tilde{\tu f}(\tu x, \tu m,2^{X},2^{n+1})-\tu f(\tu x, \tu m)\| \\ &\le 2^{-(n+1)} + 2^{-(n+1)} \\ &\le 2^{-n}.
		\end{array}
		$$
		All of this is done in polynomial time in $n$ and the size of $\tu m$, and hence we get that $\tu f$ is polynomial time computable from definitions.
	\end{proof}
	
	\subsection{Proof of Theorem \ref{trucchoseth}}
	
	\olivier{est-ce une arnaque?}
	
	\begin{proof}
		We know that a function $\tu f: \R^{d} \to \R^{d'}$ from $\manonclasslightidealsigmoid$ is polynomial time computable by Proposition \ref{prop:mcu:un}. That means that we can approximate it with arbitrary precision, in particular precision $\frac14$ in polynomial time. Given such an approximation $\tu q$, it is easy to determine the corresponding integer part: return (componentwise) the closest integer to $\tu q$.
		
		\olivier{est-ce une arnaque?}
		\olivier{ref exacte à une proposition?}
		Conversely, if we have a function $\tu f: \N^{d} \to \N^{d'}$ that is polynomial time computable, our previous simulations of Turing machines provide a function in  $\manonclasslightidealsigmoid$  that computes it.
		
	\end{proof}

	\subsection{Proof of Lemma \ref{lem:xi}}
	
	Observe that  $\sig(a,b,x)=\signb{\frac14 + (x-a)/(2(b-a))}$ corresponds to the piecewise continuous sigmoid function $\sig(a,b,x)$ given by 
	\[ \sig(a,b,x) = 
	\left\{ \begin{array}{lll}
	0 & \mbox{if $x \leq a$} \\
	\frac{x-a}{b-a} & \mbox{if $a \le x \le b$} \\
	1 & \mbox{if $b \leq x$}\\
	\end{array} \right.
	\]
	
	\begin{proof}
		It is sufficient to construct some function $\xi$ such that for all $n\in \N$ and $x\in [- 2^{n} , 2^{n}]$, whenever $ x \in [\lfloor x \rfloor + \frac{1}{8}, \lfloor x \rfloor + \frac{7}{8}] $  , $\xi(2^n,x) = \{ x -\frac18 \} $. Indeed, 
		then $\xi_{1}(N,x)=\xi(N,x-\frac58) -\frac14$ 
		and $\xi_{2}(N,x)=\xi(N,x+\frac18)$ would be solution.
		
		Actually, if we take $\xi'$ that satisfies the constraint only when $x \ge 0$, and that values $0$ for $x \le 0$, then $\frac34-\xi'(-x)$ would satisfy the constraint when $x \le 0$, but values $3/4$ for $x \ge 0$. So,  $\xi(N,x)=\xi'(N+1,x)-\xi'(N+1,-x)+\frac34-\frac34 \sig(0,\frac18,x)$ would work for all $x$. 
		
		So it remains to construct $\xi'$ such that for all $n\in \N$ and $x\in [0 , 2^{n}]$, whenever $ x \in [\lfloor x \rfloor + \frac{1}{8}, \lfloor x \rfloor + \frac{7}{8}] $  , $\xi'(2^n,x) = \{ x-\frac18 \} $, and $\xi'(N,x)=0$ for $x \le 0$.
		
		It suffices to define $\xi'$  by induction by $\xi'(2^{0},x) = \frac34 \sig(\frac18,\frac78,x)$, and $\xi'(2^{n+1},x)= \xi'(2^{n},F(2^{n},x))$, where
		$F(K,x)= x- K.\sig(K,K+\frac18,x)$.
		
		Let $I$ be $[\lfloor x \rfloor + \frac{1}{8}, \lfloor x \rfloor + \frac{7}{8}] $, $x \in I$, and let us first study the value of $F(2^{n},x)$: 
		\begin{itemize}
			\item If $x \leq 2^{n}$, by definition of $\sig$, $F(2^n,x) = x$, then $ F(2^n,x)\in I$.
			\item The case $2^{n} < x < 2^{n} + \frac18$ cannot happen as we assume $x \in I$.
			\item If $2^{n} + \frac18  \leq x $ then $F(2^n, x) = x-2^n $ and $F(2^n, x) \in [ \lfloor x \rfloor - 2^n + \frac{1}{8}, \lfloor x \rfloor - 2^n + \frac{7}{8} ] $
		\end{itemize}
		
		Now, the property is true by induction. Indeed, it is true for $n=0$ by the expression of $\xi'(2^{0},x)$. We now assume that it is true for some $n\in \N$. We have
		$\xi'(2^{n+1},x)= \xi'(2^{n},F(2^{n},x))$. Thus, by induction hypothesis, $\xi'(2^{n+1},x)= \{ F(2^{n},x) -1/8\}$. 
		
		Now:
		\begin{itemize}
			\item If $x \leq 2^{n}$, by definition of $\sig$, $F(2^n,x) = x$, then $\xi'(2^{n+1},x)= \{ F(2^{n},x) -1/8\}=\{x-1/8\}$
			\item The case $2^{n} < x < 2^{n} + \frac18$ cannot happen with our constraint $x \in I$.
			\item If $2^{n} + \frac18  \leq x $ then $F(2^n, x) = x-2^n $ and $\xi'(2^{n+1},x)= \{ F(2^{n},x) -1/8\}=\{x-2^{n}-1/8\}=\{x\}$.
		\end{itemize}
		
		Thus the property is proved for all $n$, and from above expressions, we get $\xi_{1}$ and $\xi_{2}$. 
		
		A graphical representation of $\xi_{1}$ and $\xi_{2}$ can be found in Figure \ref{fig:xi}.
		
		\olivier{En fait, il faut définir $F_{0}$, pas fait la}
		
		There remain to prove that the function $\xi'$ is in $\manonclasslightidealsigmoid$. Unfortunately, this is not clear from the recursive definition, but this can be written in another way, from which this follows. Indeed, we have from an easy induction that $$\xi'(2^{n},x)= F(2^{0},F(2^{1},F(2^{2}(\dots, F(2^{n-1},x))))),$$ if we define $F(2^{0},x)$ 
		as $F(2^{0},x)=\xi'(2^{0},x) = \frac34 \sig(\frac18,\frac78,x)$.
		
		Then, we can obtain $\xi'(2^{n},x)=H(2^{n-1},2^{n},x)$ with 
		\begin{align*}
		H(2^{0}, 2^n, x) &= F(2^{n-1}, x) \\
		H(2^{t+1}, 2^n, x) &= F(2^{n-1-t}, H(2^{t}, 2^n, x)) \\
		&= H(2^{t}, 2^n, x) - 2^{n-1-t}.\sig(2^{n-1-t},2^{n-1-t}+\frac18, H(2^{t}, 2^n, x))
		\end{align*}
		
		Such a recurrence can be then seen as a linear length ordinary differential equation, in the length of its first argument. It follows that $\xi'$, and hence $\xi_{1}$ and $\xi_{2}$ are in 
		$\manonclasslightidealsigmoid$.
	\end{proof}
	
	\begin{figure}
		\includegraphics[width=6cm]{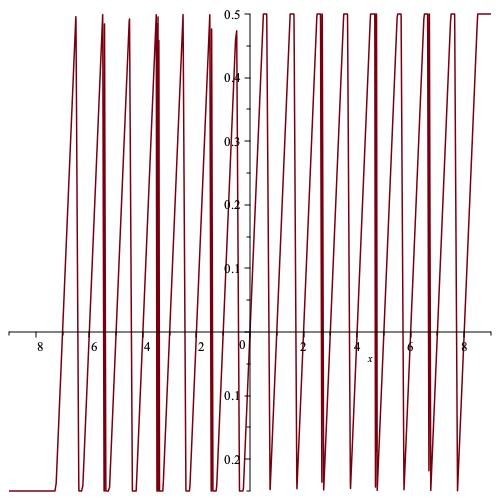} \hfill 
		\includegraphics[width=6cm]{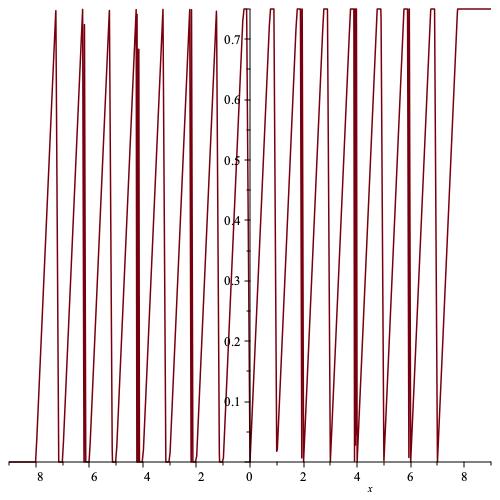}
		\caption{Graphical representation of $\xi_{1}(4,x)$ and $\xi_{2}(4,x)$ obtained with maple. \label{fig:xi}}
	\end{figure}

	\subsection{Proof of Lemma \ref{lem:i}}
	
	\begin{proof}
		Consider $\sigma_i (2^n,x) = x - \xi_i(2^n,x) $ with the function defined in Lemma \ref{lem:xi}.
		
		A graphical representation of $\sigma_{1}$ and $\sigma_{2}$ can be found in Figure \ref{fig:sigma}.
	\end{proof}
	
	\begin{figure}
		\includegraphics[width=6cm]{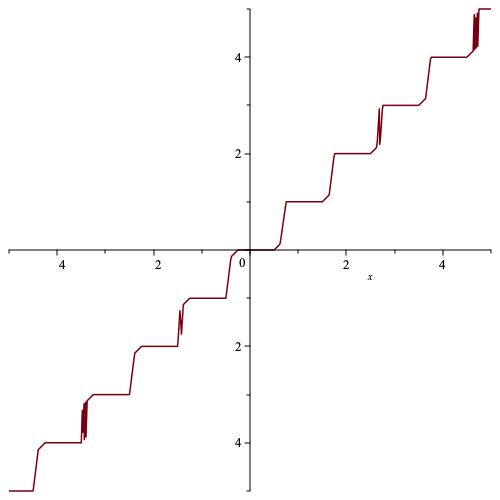} \hfill 
		\includegraphics[width=6cm]{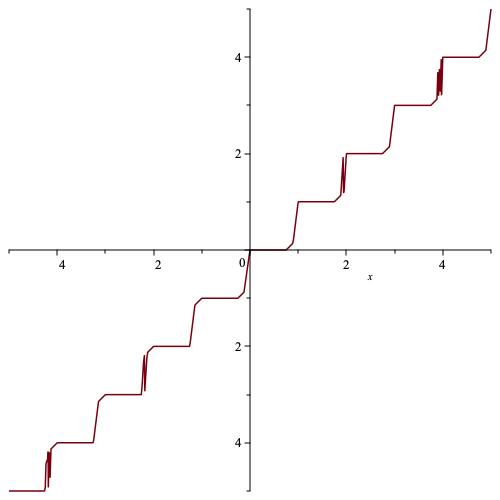}
		\caption{Graphical representation of $\sigma_{1}(4,x)$ and $\sigma_{2}(4,x)$ obtained with maple. \label{fig:sigma}}
	\end{figure}

	\subsection{Proof of Lemma \ref{lem:mod2}}
	
	\begin{proof}
		We can take $\mod_{2}(N,x)=\lambda(N,\frac12x-\frac34)$ where $\lambda$ is the function given by Lemma \ref{lem:lambda}.

		A graphical representation of $\mod{2}$ can be found in Figure \ref{fig:moddeux}.
	\end{proof}
	
	\begin{figure}
		\centerline{
			\includegraphics[width=6cm]{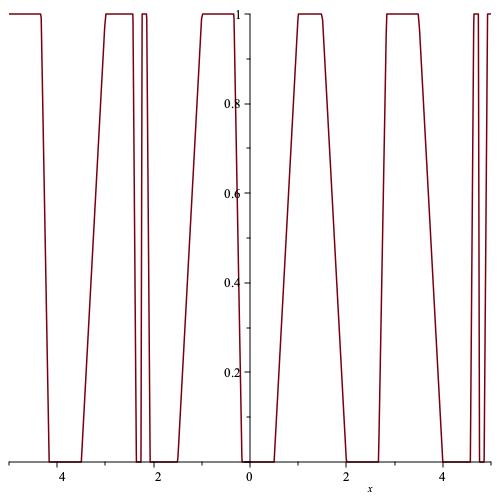} 
		}
		\caption{Graphical representation of $\mod_{2}(4,x)$  obtained with maple. \label{fig:moddeux}}
	\end{figure}

	\subsection{Proof of Lemma \ref{lem:div2}}
	
	\begin{proof}
		We can take $\div{2}(N,x)=\frac12(\sigma_{2}(N,x)-\mod_{2}(N,x))$ where $\mod_{2}$ is the function given by Lemma \ref{lem:mod2}, and $\sigma_{2}$ is the function given by Lemma \ref{lem:i}.
		
		A graphical representation of $\mod{2}$ can be found in Figure \ref{fig:divdeux}.
	\end{proof}
	
	\begin{figure}
		\centerline{
			\includegraphics[width=6cm]{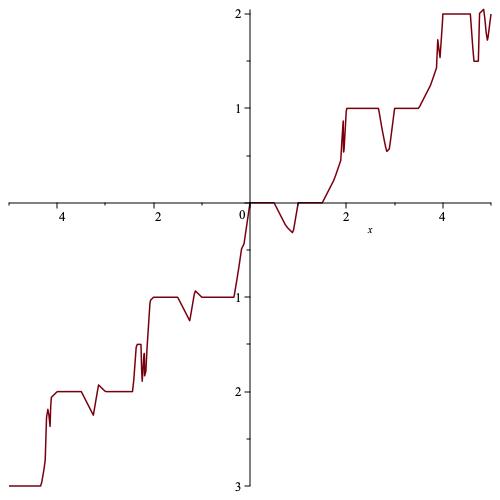} 
		}
		\caption{Graphical representation of $\div_{2}(4,x)$  obtained with maple. \label{fig:divdeux}}
	\end{figure}

	\subsection{Proof of Lemma \ref{lem:lambda}}
	
	The idea is basically to use a technique similar to the one use for Lemma \ref{lem:xi}.
	
	\begin{proof}
		It is sufficient to construct  a function $\lambda'$ that works for $x \ge 0$, and that values $1$ for $x \le 0$. That is to say, such that for all $n\in\N$, $x\in [0 , 2^{n}]$, 
		\shortermcu{\begin{itemize}
				\item} whenever  $ x \in [\lfloor x \rfloor + \frac{1}{4}, \lfloor x \rfloor + \frac{1}{2}] $  , $\lambda'(2^n,x) = 0$, 
			\shortermcu{\item} and whenever $  x \in [\lfloor x \rfloor + \frac{3}{4}, \lfloor x \rfloor +1] $, $\lambda'(2^n,x) = 1$, 
			\shortermcu{\item} and whenever $x \le 0$, then $\lambda'(2^n,x) = 0$.
			\shortermcu{\end{itemize}}
		
		Indeed, then $\lambda(N,x)=\lambda'(N, x) + \lambda'(N, -1/4 - x) - 1$ would be a solution, working for all $x$.
		
		To solve the latter problem, we define $\lambda'$ by induction by $\lambda'(2^{0},x)=\sig(\frac12, \frac34, x) - \sig(0, \frac14, x) + 1$, and
		$\lambda'(2^{n+1},x)= \lambda'(2^{n},G(2^{n},x))$ where 
		$G(N,x)=x-D(N,x)$ and $D(N,x)= N \sig(N-\frac12,N-\frac14,x)$.
		
		By a reasoning similar to the proof of Lemma \ref{lem:xi}, it satisfies by induction the required properties.  
		
		A graphical representation of $\lambda$ can be found in Figure \ref{fig:lambda}.
		
		There remain to prove that the function $\lambda'$ is in $\manonclasslightidealsigmoid$. Unfortunately, this is not clear from the recursive definition, but this can be written in another way, from which this follows. Indeed, we have from an easy induction that  $$\xi'(2^{n},x)= G(2^{0},G(2^{1},G(2^{2}(\dots, G(2^{n-1},x))))),$$ if we define $G(2^{0},x)$ as $G(2^{0},x)=\lambda'(2^{0},x)=\sig(\frac12, \frac34, x) - \sig(0, \frac14, x) + 1.$
		
		Then, we can obtain $\xi'(2^{n},x)=H(2^{n-1},2^{n},x)$ with 
		\begin{align*}
		H(2^{0}, 2^n, x) &= G(2^{n-1}, x) \\
		H(2^{t+1}, 2^n, x) &= G(2^{n-1-t}, H(2^{t}, 2^n, x)) \\
		&= H(2^{t}, 2^n, x) - 2^{n-1-t} \sig(2^{n-1-t}-\frac12,2^{n-1-t}-\frac14,H(2^{t}, 2^n, x)) 
		\end{align*}
		
		Such a recurrence can then be seen as a linear length ordinary differential equation, in the length of its first argument. It follows that $\lambda'$, and hence $\lambda$ are in 
		$\manonclasslightidealsigmoid$.
	\end{proof}
	
	\begin{figure}
		\centerline{
			\includegraphics[width=6cm]{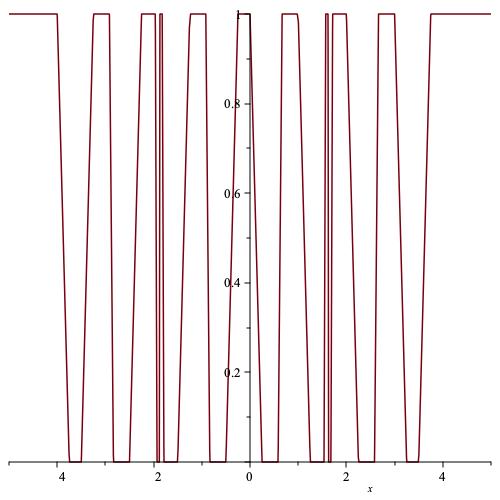} 
		}
		\caption{Graphical representation of $\lambda(4,x)$  obtained with maple. \label{fig:lambda}}
	\end{figure}
	
	\subsection{Proof of Lemma \ref{tricksigmoid}}
	
	\begin{proof}
		Just check that this is true for $d=0$, and then for $d=1$, from the definition of $\signb{}$. 
	\end{proof}
	
	\subsection{Proof of  Proposition \ref{prop:mcu:un}}
	
	We repeat here the idea of the proof of Proposition \ref{prop:mcu:un}, taken from \cite{BlancBournezMCU22vantardise}. 
	
	\begin{proof}
		This is proved by induction.  This is true for basis functions, from basic arguments from computable analysis. In particular as $\signb{.}$ is a continuous piecewise affine function with rational coefficients, it is computable in polynomial time from standard arguments. 
		
		Now, the class of polynomial time computable functions is  preserved by composition: see e.g. \cite{Ko91}: If this helps, the idea of the proof for $COMP(f,g)$, is that by induction hypothesis, there exists $M_f$ and $M_g$ two Turing machines computing in polynomial time $f: \RR \rightarrow \RR$ and $g : \RR \rightarrow \RR$. In order to compute $COMP(f,g)(x)$ with precision $2^{-n}$, we just need to compute $g(x)$ with a precision $2^{-m(n)}$, where $m(n)$ is the polynomial modulus of continuity of $f$. 
		%
		Then, we compute $f(g(x))$, which, by definition of $M_f$ takes a polynomial time in $n$. 
		Thus, since $\mathrm{P}_\RR^{\mathrm{P}_\RR} = \mathrm{P}_\RR$, $COMP(f,g)$ is computable in polynomial time, so the class of polynomial time computable functions is preserved under composition.
		
		It only remains to prove that the class of polynomial time computable functions is preserved by the linear length ODE schema: This is Lemma \ref{lem:un}. 
	\end{proof}

	\olivier{En fait, ca fait partie des preuves assez pourries que j'ai faite. Etre certain que bien vrai \& preuve élégante}
	\begin{lemma}[{\cite{BlancBournezMCU22vantardise}}] \label{lem:un}
		The class of polynomial time computable functions is preserved by the linear length ODE schema.
	\end{lemma}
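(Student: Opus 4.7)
The plan is to reduce the length-ODE iteration to a standard discrete ODE with only polynomially many steps, then appeal to the closed-form falling-exponential solution together with standard error propagation bounds for linear recurrences.

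First, I would apply the change-of-variable principle of Lemma \ref{fundob}: if $\tu f$ is defined by a linear length-ODE of the form $\dderivl{\tu f(x,\tu y)} = \tu u(\tu f(x,\tu y), \tu h(x,\tu y), x, \tu y)$ with initial condition $\tu f(0,\tu y) = \tu g(\tu y)$, then $\tu f(x,\tu y) = \tu F(\length{x},\tu y)$ where $\tu F$ satisfies a discrete ODE in $t$ of the form $\dderiv{\tu F(t,\tu y)}{t} = \tu u(\tu F(t,\tu y), \tu h(2^{t}-1,\tu y), 2^{t}-1, \tu y)$ with $\tu F(1,\tu y) = \tu g(\tu y)$. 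Since $\length{x}$ is the binary length of $x$, this reduces the iteration to only $O(\length{x})$ steps, which is polynomial in the bit-size of the input.

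Next, since $\tu u$ is essentially linear in $\tu F(t,\tu y)$, I would use Lemma \ref{def:solutionexplicitedeuxvariables} to write the closed-form solution
\[
\tu F(t,\tu y) = \Bigl(\fallingexp{\dint{1}{t}{\tu A(\cdot)}{s}}\Bigr)\cdot \tu g(\tu y) + \dint{1}{t}{\Bigl(\fallingexp{\dint{u+1}{t}{\tu A(\cdot)}{s}}\Bigr) \tu B(\cdot)}{u},
\]
where $\tu A$ and $\tu B$ are \polynomialb{} expressions essentially constant in $\tu F$. By induction hypothesis, $\tu g$, $\tu h$, $\tu A$, $\tu B$ are all polynomial-time computable in the sense of computable analysis. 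The goal then is to show that evaluating this sum-and-product expression at precision $2^{-n}$ is feasible in time polynomial in $n$ and the bit-size of $(x,\tu y)$.

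The main obstacle, which is the technical heart of the proof, is controlling error propagation through the $\length{x}$-many iterations of a linear recurrence of the form $\tu F_{t+1} = (I + \tu A_t)\tu F_t + \tu B_t$. A naive bound would multiply the error by $\|I+\tu A_t\|$ per step, yielding an exponential-in-$\length{x}$ blow-up; however, since $\tu A$ and $\tu B$ are essentially constant in $\tu F$, the coefficients $\tu A_t$, $\tu B_t$ depend only on quantities whose bit-size and magnitude are a priori polynomially bounded in the input size (this is exactly the content of Lemma \ref{fundamencore} in the discrete setting). Therefore $\log\|I+\tu A_t\|$ is polynomially bounded, so over $\length{x}$ steps the cumulative error-amplification factor has polynomial bit-length, and we can compensate by running each intermediate evaluation at precision $2^{-p(n,\length{x},\length{\tu y})}$ for some fixed polynomial $p$. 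I would then combine this with the precision-tracking lemma from \cite{Ko91} (polynomial-time oracle machine composition) to conclude that $\tu F$, hence $\tu f$, is polynomial-time computable. The delicate bookkeeping is precisely the quantitative refinement of Lemma \ref{fundamencore} to the real-valued setting of computable analysis, which is where I expect the proof to spend most of its effort.
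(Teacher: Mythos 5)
Your proposal follows essentially the same route as the paper's proof: the change of variables of Lemma \ref{fundob} to reduce to a recurrence with only $\ell(x)$ (hence polynomially many) steps, the closed-form solution of Lemma \ref{def:solutionexplicitedeuxvariables} to bound the size of the iterates polynomially, and a step-by-step evaluation at polynomially padded precision, which is exactly the content of Lemma \ref{fundamencoreg} and the iteration lemma following it. The one point the paper makes more explicit is that the coefficients $\tu A,\tu B$ themselves depend on $\tu F$ through $\signb{}$ terms, so one argues that the whole right-hand side has a modulus of continuity affine in the precision parameter (using that $\signb{}$ is Lipschitz and bounded), which is the formal version of the error-amplification bookkeeping you sketch.
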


	\newcommand{\vertiii}[1]{{\left\vert\kern-0.25ex\left\vert\kern-0.25ex\left\vert #1 
			\right\vert\kern-0.25ex\right\vert\kern-0.25ex\right\vert}}
	\newcommand\tnorm[1]{\vertiii{#1}}
	
	
	We write $\MYVEC{x}$ for $2^{x}-1$ for conciseness.
	We write $\tnorm{\cdots}$ for the sup norm of integer part: given some matrix $\tu
	A=(A_{i,j})_{1 \le i \le n, 1 \le j \le m}$, 
	$\tnorm{\tu A}=\max_{i,j}
	\lceil A_{i,j} \rceil $. In particular, given a vector $\tu x$, it can be seen as a matrix with $m=1$, and $\tnorm{\tu x}$ is the sup norm of the integer part of its components.

	\begin{proof} Using   Lemma \ref{fundob}
		(This lemma is repeated from \THEPAPIERS), 
		when 
		\noindent the  schema of Definition \ref{def:linear lengt ODE} holds, 
		we can do a change of variable to consider $\tu f(x,\tu y) $ $ =\tu F(\ell(x),\tu y)$, with $\tu F$ solution of a discrete ODE of the form
		$\dderiv{\tu F(t,\tu y)}{t} = {\tu A} ( \tu F(t,\tu y), \tu h(\MYVEC{t},\tu y),
		\MYVEC{t}, 
		\tu y) \cdot
		\tu F(t,\tu y)
		+   {\tu B} ( \tu F(t,\tu y), \tu h(\MYVEC{t},\tu y),
		\MYVEC{t}, 	  \tu y),$
		that is to say, of the form  \eqref{eq:bcg} below. It then follows from: 
	\end{proof}

	\begin{lemma}[{Fundamental observation, \cite{BlancBournezMCU22vantardise}}] \label{fundamencoreg}
		Consider the ODE 
		\begin{equation} \label{eq:bcg}
		\tu F^\prime(x,\tu y)=  {\tu A} ( \tu F(x,\tu y), \tu h(\MYVEC{x},\tu y),
		\MYVEC{x},
		\tu y) \cdot
		\tu F(x,\tu y)
		+   {\tu B} ( \tu F(x,\tu y), \tu h(\MYVEC{x},\tu y),
		\MYVEC{x},
		\tu y).
		\end{equation}
		Assume:
		1. The initial condition $\tu G(\tu y) = ^{def}
		\tu F(0, \tu y)$, as well as $\tu h(\MYVEC{x},\tu y)$ 
		are polynomial time computable \unaire{x}. 
		
		2.  ${\tu A} ( \tu F(x,\tu y), \tu h (\MYVEC{x},\tu y),
		\MYVEC{x},
		\tu y)$ and ${\tu B} ( \tu F(x,\tu y), \tu h(\MYVEC{x},\tu y),
		\MYVEC{x},
		\tu y)$ are \polynomial{} expressions essentially constant in $\tu F(x,\tu y)$.
		
		%
		
		Then, there exists a polynomial $p$ such that $\length{\tnorm{\tu F(x,\tu y)}}\leq p(x,\length{\tnorm{\tu y}})$ and 
		$\tu F(x,\tu y)$ is polynomial time computable \unaire{x}.
	\end{lemma}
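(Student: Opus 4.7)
My plan is to first establish a polynomial bound on the bit-length $\length{\tnorm{\tu F(x,\tu y)}}$, and then to exhibit a direct iterative simulation of the recurrence $\tu F(t+1,\tu y) = (I + \tu A(\tu F(t,\tu y),\dots))\cdot \tu F(t,\tu y) + \tu B(\tu F(t,\tu y),\dots)$ running in polynomial time \unaire{x}. Since the statement asks for unary complexity in $x$, performing $x$ iterations is permissible provided each takes time polynomial in $x+\length{\tnorm{\tu y}}$.

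The length bound rests entirely on the essential-constancy hypothesis. Because $\tu A$ and $\tu B$ are \polynomialb{} expressions essentially constant in $\tu F$, every occurrence of $\tu F$ in them is guarded by a $\signb{}$, which evaluates to a bounded quantity; consequently their magnitudes are polynomially bounded in $\MYVEC{t}$, $\tnorm{\tu h(\MYVEC{t},\tu y)}$ and $\tnorm{\tu y}$, independently of $\tu F$. Hypothesis (1) implies that $\length{\tnorm{\tu h(\MYVEC{t},\tu y)}}$ is polynomial in $t$ and $\length{\tnorm{\tu y}}$, whence $\length{\tnorm{\tu A}}+\length{\tnorm{\tu B}} \leq Q(t,\length{\tnorm{\tu y}})$ at every step $t$, for some fixed polynomial $Q$. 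The recurrence then gives $\tnorm{\tu F(t+1,\tu y)}+1 \leq (\tnorm{\tu A}+\tnorm{\tu B}+2)(\tnorm{\tu F(t,\tu y)}+1)$; iterating and taking logarithms yields $\length{\tnorm{\tu F(t,\tu y)}} \leq p(t,\length{\tnorm{\tu y}})$ for some polynomial $p$, using also that $\length{\tnorm{\tu G(\tu y)}}$ is polynomial in $\length{\tnorm{\tu y}}$.

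Once the length bound is secured, polynomial-time computability is straightforward: starting from $\tu F(0,\tu y)=\tu G(\tu y)$, which is computable in polynomial time by hypothesis (1), iterate the recurrence $x$ times. At step $t<x$ we hold an integer vector of polynomial length; we invoke hypothesis (1) to produce $\tu h(\MYVEC{t},\tu y)$ in polynomial time \unaire{t}; we evaluate the \polynomialb{} expressions $\tu A$ and $\tu B$ by structural recursion, which reduces to arithmetic on polynomial-length integers together with a constant number of $\signb{}$ evaluations; finally, we perform one matrix-vector multiplication and vector addition to produce $\tu F(t+1,\tu y)$. Each iteration runs in time polynomial in $x+\length{\tnorm{\tu y}}$, giving overall polynomial time \unaire{x}, as required.

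The main obstacle is the step of isolating the contribution of $\tu F$ to the growth of $\tu A$ and $\tu B$. This requires a careful induction on the structure of \polynomialb{} expressions showing that variables appearing only under $\signb{}$ contribute nothing to the growth exponent, and hence that $\tnorm{\tu A}$ and $\tnorm{\tu B}$ can be bounded by a polynomial in the magnitudes of the remaining variables only. Once this magnitude lemma is in hand, the induction producing the length bound is routine, and the iterative simulation carries no further surprises; one could alternatively route the argument through the explicit solution of Lemma~\ref{def:solutionexplicitedeuxvariables}, but the direct recursive simulation seems more transparent and avoids manipulating the falling exponential symbolically.
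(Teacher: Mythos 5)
The first half of your argument (the bound $\length{\tnorm{\tu F(x,\tu y)}}\leq p(x,\length{\tnorm{\tu y}})$) is fine, and is essentially an inductive rephrasing of what the paper does by bounding the explicit solution formula of Lemma~\ref{def:solutionexplicitedeuxvariables}; routing it through the recurrence inequality instead of the closed form is harmless.

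The second half, however, has a genuine gap: you treat the iteration as exact arithmetic on ``integer vectors of polynomial length''. But this lemma lives in the computable-analysis setting of $\manonclasslightidealsigmoid$: the argument $\tu y$ ranges over reals (accessed through an oracle), and $\tu G(\tu y)$, $\tu h(\MYVEC{t},\tu y)$, hence every iterate $\tu F(t,\tu y)$, are real vectors that can only be \emph{approximated} to a requested precision (note that $\tnorm{\cdot}$ is the sup norm of integer parts of real entries, and the basis functions include $\signb{}$ and $x/2$ over $\R$). So ``evaluate $\tu A$, $\tu B$ and multiply'' is not a single exact step; to output $\tu F(x,\tu y)$ at precision $2^{-n}$ you must decide at which precision to compute $\tu F(x-1,\tu y)$, and so on back to $\tu F(0,\tu y)$, and a priori these precision requirements compound through $x$ compositions of moduli of continuity, which is exactly where a naive iteration can fail to be polynomial. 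The missing idea — and the actual content of the paper's proof — is that an essentially linear \polynomialb{} right-hand side has a modulus of continuity that is \emph{affine} in the precision parameter (because $\signb{}$ has a linear modulus and the factors multiplying $\tu F$ are of polynomially bounded magnitude, by the length bound you just proved), so that the backward-propagated precisions $p(i)$ remain polynomially bounded over the $x$ iterations; the paper then runs precisely this precision-propagation scheme in the lemma that follows. Your proposal never mentions approximation error or moduli of continuity, so as written it only proves the integer-valued analogue (the $\linearderivlength$ version from the earlier papers), not the statement at hand.
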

	\begin{proof}
		The fact that there exists a polynomial $p$ such that $\length{\tnorm{\tu f(x,\tu y)}}\leq p(x,\length{\tnorm{\tu y}})$, follows from the fact that we can write some explicit formula for the solution of \eqref{eq:bcg}: This is Lemma \ref{def:solutionexplicitedeuxvariables}, repeated from  \THEPAPIERS. 
		Now,  bounding the size of the right hand side of formula \eqref{eq:rq:fund} provides the statement. 
		
		Now the fact that $\tu F(x,\tu y)$ is polynomial time computable, follows from a reasoning similar to the one of following lemma (the lemma below restricts the form of the recurrence,  
		but the more general recurrence of \eqref{eq:bcg} would basically not lead to any difficulty): The fact that the modulus of continuity of a linear expression of the form of the right hand side of \eqref{eq:bcg} is necessarily affine in its first argument follows from the hypothesis and from previous paragraph, using the fact that $\signb$ has a linear modulus of convergence.
	\end{proof}

	\begin{lemma}[{\cite{BlancBournezMCU22vantardise}}]
		Suppose that function $\tu f: \N \times \R^{d} \to \R^{d'}$ is such that for all $x, \tu y$,  
		\olivier{Reste de MCU: Pas assez général sous cette forme, meme si ca marche pareil. Tenter d'escroquer en disant qu'on manque de place. Est-ce une escroquerie?}
		$$		\tu f(0,\tu y) 
		=\tu g(\tu y) 
		\quad and \quad
		\tu f(x+1,\tu y) 
		= \tu h(\tu f(x,\tu y),x,\tu y)) 
		$$
		for some functions $\tu g: \R^{d} \to \R^{d'}$ and $\tu h: \R^{d'} \times \R \times \R^{d} \to \R^{d'}$
		both computable in polynomial time \unaire{x}.
		%
		Suppose that the modulus $m_{h}$ of continuity of $\tu h$ is affine in its first argument:  
		For all $\tu f,\tu f' \in [-2^{\tu F}, 2^{\tu F}]$, $\tu y \in [-2^{\tu Y}, 2^{\tu Y}]$, 
		$\|\tu f-\tu f'\| \le 2^{-m_{h}(\tu F,\ell(x),\tu Y,n)}$ implies $|\tu h(\tu f,x,\tu y)-\tu h(\tu f',x,\tu y)| \le 2^{-n}$
		with $m_{h}(\tu F, \ell(x),\tu Y,n) = \alpha n + p_{h}(\tu F,\ell(x),\tu Y)$ for some $\alpha$.
		Suppose there exists a polynomial $p$ such that $\length{\tnorm{\tu f(x,\tu y)}}\leq p(x,\length{\tnorm{\tu y}})$.
		
		Then $\tu f(x, \tu y)$ is computable in  polynomial time \unaire{x}.
	\end{lemma}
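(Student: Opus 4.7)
The plan is to compute $\tu f(x, \tu y)$ at output precision $2^{-n}$ by running the recurrence forward, but first analyzing backward what precision is required at each intermediate step. I would set $n_x = n$, and for $i = x, x-1, \dots, 1$ define $n_{i-1}$ as the precision needed on $\tu f(i-1, \tu y)$ so that after one application of $\tu h$ the value of $\tu f(i, \tu y)$ is determined to precision at most $2^{-n_i}$. By the affine-modulus hypothesis, $n_{i-1} \le \alpha n_i + p_h(\tu F_{i-1}, \ell(i-1), \tu Y)$, where $\tu F_{i-1}$ is the exponent in the bound $|\tu f(i-1, \tu y)| \le 2^{\tu F_{i-1}}$ supplied by the final hypothesis of the lemma, and is polynomial in $i-1$ and $\ell(\tnorm{\tu y})$.

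Next, I would unroll the recurrence defining the $n_i$. The regime that is actually used in the applications to linear length ODEs corresponds to $\alpha = 1$, since $\signb{}$ has a linear modulus of continuity; then
$$n_0 \le n + \sum_{j=0}^{x-1} p_h(\tu F_j, \ell(j), \tu Y),$$
which is bounded by a polynomial $Q(x, n, \ell(\tnorm{\tu y}))$, because each $p_h$ is polynomial and the $\tu F_j$ are polynomially bounded in $j$ and $\ell(\tnorm{\tu y})$.

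To actually carry out the computation, I would first approximate $\tu g(\tu y)$ to precision $2^{-(n_0+1)}$, which takes time polynomial in $n_0$ and $\ell(\tnorm{\tu y})$ by the hypothesis on $\tu g$, hence polynomial in $x$, $n$, and $\ell(\tnorm{\tu y})$. Then I would iteratively build approximations $\tilde f_i$ of $\tu f(i, \tu y)$ at precision $2^{-n_i}$: given $\tilde f_i$ within $2^{-n_i}$ of $\tu f(i, \tu y)$, evaluate $\tu h(\tilde f_i, i, \tu y)$ to precision $2^{-(n_{i+1}+1)}$, yielding $\tilde f_{i+1}$ within $2^{-n_{i+1}}$ of $\tu f(i+1, \tu y)$ after combining the modulus bound with the roundoff of the evaluation. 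Each such evaluation of $\tu h$ costs time polynomial in $x$, $n$, $\ell(\tnorm{\tu y})$ by the hypothesis on $\tu h$, and there are only $x$ of them, so the total runtime is polynomial in $x$, $n$, and $\ell(\tnorm{\tu y})$, i.e.\ polynomial time \unaire{x}.

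The central obstacle is the control of the cumulative precision loss: were the coefficient $\alpha$ of $n$ in the modulus strictly greater than one, unrolling would give $n_0 \sim \alpha^x n$, and even just computing $\tu g(\tu y)$ to that precision would already require exponential time in $x$. The affine-in-$n$ hypothesis with $\alpha = 1$ is therefore essential, and is precisely what the $\signb{}$ basis guarantees in the discrete length-ODE setting where this lemma is invoked.
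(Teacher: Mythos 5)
Your proposal is correct and follows essentially the same route as the paper's proof: a backward propagation of the required precisions along the orbit, followed by the forward computation $\tu z_{0}=\tu g(\tu y)$, $\tu z_{i+1}=\tu h(\tu z_{i},i,\tu y)$ at those precisions, with the polynomial bound on the size of $\tu f(x,\tu y)$ guaranteeing that the intermediate exponents $\tu Z_{i}$ stay polynomial. The only divergence is that the paper keeps the slope symbolic, taking $p(i)=\alpha^{l-i}n+\sum_{k=i}^{l-1}\alpha^{k-i}p_{h}(\tu Z_{k},\ell(k),\tu Y)$ and then asserting polynomial time, whereas you specialize to $\alpha=1$; your closing remark that $\alpha>1$ would force precision of order $\alpha^{x}n$ and hence exponential time applies verbatim to the paper's formula, so reading the hypothesis as $\alpha=1$ (which is what the $\signb{}$-based right-hand sides actually provide in the intended application) is the honest interpretation rather than a loss of generality.
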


	%
	%
	%
	%
	%
	%
	%
	%
	%
	%
	%
	%
	%

	\begin{proof}
		The point is that we can compute  $\tu f(n,\tu y)$ by
		\shortermcu{
			\begin{align*}
			\tu f(n, \tu y) &= \tu h(\tu f(n-1,\tu y), n-1, \tu y) \\
			&= \tu h(\tu h(\tu f(n-2,\tu y), n-2, \tu y), n-1, \tu y) \\
			&= \dots \\
			&= \underbrace{\tu h(\tu h(\dots \tu h}_{n}(\underbrace{f(0,\tu y)}_{g(\tu y)}, 0, \tu y)\dots), n-1, \tu y)
			\end{align*}
			
			Basically, the strategy is to compute 
		}
		$\tu z_{0}=\tu f(0,\tu y)=\tu g(\tu y)$,
		then $\tu z_{1}=\tu f(1,\tu y) = \tu h(\tu z_{0}, 0, \tu y)$, 
		then $\tu z_{2}=\tu f(2,\tu y) = \tu h(\tu z_{1}, 1, \tu y)$, 
		then \dots,
		then $$\tu z_{m}=\tu f(m,\tu y) = \tu h(\tu z_{m-1}, m-1, \tu y)$$.
		One needs to do so with some sufficient precision so that the result given by $\tu f(l,\tu y)$ is correct, and so that the whole computation can be done in polynomial time. 
		
		Given $\tu y$, we can determine $\tu Y$ such that $\tu y \in [-2^{\tu Y},2^{\tu Y}]$. 
		Assume for now that for all $m$,
		\begin{equation}
		\label{eq:bienborne}
		z_{m} \in [-2^{Z_{m}},2^{Z_m}]
		\end{equation}
		
		\olivier{Reste de MCU: Version top-down: pour aider à comprendre.
			
			\begin{itemize}
				\item 
				It is basically sufficient to determine $z_{l}=\tu f(l,\tu y)=\tu h(z_{l-1},l-1,\tu y)$ with precision $2^{-n}$.  (*)
				
				\item To get such an approximation (*), it suffices to approximate $z_{l-1}$ 
				with precision $2^{-m_{h}(Z_{l-1},\ell(l-1),Y,n)}$ (**). 
				Then
				indeed, $z_{l}$ could then be computed in a time $poly(Z_{l-1},\ell(l-1), Y, n )$.  
				
				\item To get such an approximation (**) of  $z_{l-1}=\tu f(l-1,\tu y)=\tu h(z_{l-2},l-2,\tu y)$ 
				, it suffices to approximate $z_{l-2}$
				with precision $2^{-m_{h}(Z_{l-2},\ell(l-2),Y,m_{h}(Z_{l-1},\ell(l-1),Y,n))}$ 
				
				We have: 
				$$m_{h}(Z_{l-2},\ell(l-2),Y,m_{h}(Z_{l-1},\ell(l-1),Y,n)) = \alpha^{2} n + \alpha p_{h}(Z_{l-1},\ell(l-1),Y) + p_{h}(Z_{l-2},\ell(l-2),Y)$$

				Then
				indeed, $z_{l-1}$ could then be computed in a time $poly(Z_{l-2},\ell(l-2), Y, m_{h}(Z_{l-1},\ell(l-1),Y,n) )$.  
				\item and so on, until $z_{0}$

			\end{itemize}
		}

		\olivier{Version bottom-up}
		
		For $i=0,1,\dots l$, consider 
		$p(i)= \alpha^{l-i} n + \sum_{k=i}^{l-1} \alpha^{k-i} p_{h}(\tu Z_{k},\ell(k),\tu Y).$
		
		Using the fact that $\tu g$ is computable, approximate $\tu z_{0}=\tu g(\tu y)$ with precision $2^{-p(0)}$. This is doable polynomial time \unaire{p(0)}.
		
		Then for $i=0,1,\dots, l$, using the approximation of $\tu z_{i}$ with  precision $2^{-p(i)}$, compute an approximation of $\tu z_{i+1}$ with precision $2^{-p(i+1)}$: this is feasible to get precision $2^{-p(i+1)}$ of $\tu z_{i+1}$, as $\tu z_{i+1}=\tu f(i+1,\tu y) = \tu h(\tu z_{i},i,\tu y)$, 
		it is sufficient to consider precision
		\begin{align*}
		m_{h}(\tu Z_i,\ell(i),\tu Y,p(i+1)) &=\alpha p(i+1) + p_{h}(\tu Z_{i},\ell(i),\tu Y)\\ 
		&= \alpha^{l-i-1+1} n +  \sum_{k=i+1}^{l-1} \alpha^{k-i-1+1} p_{h}(\tu Z_{k},\ell(k),\tu Y)
		+ p_{h}(\tu Z_{i},\ell(i),\tu Y) \\
		&= p(i).
		\end{align*} 
		Observing that $p(l)=n$, we get $z_{l}$ with precision $2^{-n}$.
		All of this is is indeed feasible in polynomial time \unaire{l}, under the condition that all the $Z_{i}$ remain of size polynomial, that is to say, that we have
		indeed \eqref{eq:bienborne}. But this follows from our hypothesis on $\length{\tnorm{\tu f(x,\tu y)}}$.

	\end{proof}

	\subsection{Proof of Lemma \ref{lem:codage:manon}}

	We provide more details and intuition on the proof of Lemma \ref{lem:codage:manon}.
	
	To compute $d$, given $\overline{d}$, the intuition is to consider a three-tapes Turing machine $(Q, \Sigma, q_{init}, \delta, F)$  the first tape contains the input ($\overline{d}$), and is read-only, the second and third one are write-only and empty at the beginning. We just use a different encoding on the second tape that the previous one: For the first tape, we do restrict to digits $0,\symboleun,\symboledeux$, while for the second, we use binary encoding.
	
	Writing the natural Turing machine that does the transformation, this would basically do the  following (in terms of real numbers), if we forget the encoding of the internal state.
	
	$$
	F( \overline{r_1}, \overline{l_2},\lambda) = 
	\left\{
	\begin{array}{ll} (\xi(16 \overline{r_1}), 2 \overline{l_2} + 0,\lambda) &  \mbox{ whenever } \sigma( 16 \overline{r_1})= \overline{11}=
	5\\
	(\xi(16 \overline{r_1}), 2 \overline{l_2} + \lambda,\lambda) &  \mbox{ whenever } \sigma( 16 \overline{r_1})= \overline{13}=
	7\\
	(\xi(16 \overline{r_1}), (\overline{l_2} + 0)/2,\lambda) &  \mbox{ whenever } \sigma( 16 \overline{r_1})= \overline{31}=
	13\\
	(\xi(16 \overline{r_1}), (\overline{l_2} + \lambda )/2,\lambda) &  \mbox{ whenever } \sigma( 16 \overline{r_1})= \overline{33}=
	15
	\end{array}\right.
	$$
	Here we write $\overline{ab}$ for the integer whose base $\base$ radix expansion is $ab$. 
	
	This is how we got the function $F$ considered in the main part of the paper. Then the previous reasoning applies on the iterations of function $F$ that would provide some encoding function.
	
	Concerning the missing details on the choice of function $\sigma$ and $\xi$. 
	From the fact that we have only $\symboleun$ and $\symboledeux$ in $\overline{r}$,  the reasoning is valid as soon as $16 \overline{r}$ is in the neighborhood of $16 \overline{r} \in \{\overline{11},\overline{13},\overline{31},\overline{33}\}$. 
	
	\subsection{Proof of Lemma 
		\ref{lem:manquant}}

	\begin{proof}
		We discuss only the case $d=1$. The generalization to the general case is easy to obtain. 
		
		\olivier{Faire référence aux fonctions explicites, pas à cet argument foireux
			
			Let $div_{2}$ (respectively: $mod_{2}$)  denote integer (resp. remainder of) division by $2$: As these functions are from $\N \to \N$, from
			Theorem \ref{th:ptime characterization 2} from \THEPAPIERS{}, they belongs to $\linearderivlength$.  Their expression in $\linearderivlength$, replacing $\sign$ by $\signb$, provides some extensions $\overline{div_{2}}$ and $\overline{mod{2}}$ in $\manonclasslightidealsigmoid$.
		}
		We then do something similar as in  Lemma \ref{lem:codage:manon} but now with  function \\
		$\send(0 \sendsymbol ({div_{2}}(\overline{r_1}), (\overline{l_2} + 0)/2), \quad 1 \sendsymbol ({div_{2}}(\overline{r_1}), (\overline{l_2} + 1)/2)) ({mod_{2}}( \overline{r_1})).
		$
		%
	\end{proof}

	\olivier{comme le théorème est comenté...
		\subsection{Proof of Theorem \ref{th:dix}}

		\begin{proof}
			$\Rightarrow$ If we assume that $\tu F$ is computable in polynomial time, we set $g(\tu m, n) = 2^{-n}$, and we take the identity as uniform modulus of convergence. 
			
			$\Leftarrow$ Given $\tu m$ and $n$, approximate $\tu f(\tu m,p(n+1)) \in \Q$ at precision $2^{-(n+1)}$ by some dyadic rational $q$ and output $q$. This can be done in polynomial time \unaire{n}. 
			We will then have 
			$$\begin{array}{lll}
			\| q - \tu F(\tu m) \| & \le & \| q- \tu f(\tu m,p(n+1)) \| + \| \tu f(\tu m,p(n+1)) - \tu F(\tu m) \| \\
			& \le &  2^{-(n+1)}+ g(\tu m,p(n+1)) \\
			& \le &  2^{-(n+1)}+  2^{-(n+1)} \le 2^{-n}.$$
		\end{array}$$
		%
		
		%
		%
		%
		%
		%
		%
		%
		%
	\end{proof}
	
}

\manondufutur{
	\olivier{La, il manque la preuve du Theorème  \ref{th:main:one:ex:pp}: lié à tanh}
}


\olivier{Not à moi-même. Mis end document là, car chantier ce qui suit, mais existe}

\newpage
\olivier{Begin Partie ``chantier manon'': mis end document, mais ca existe dans les sources}
\end{document}